\DeclareMathOperator*{\E}{{\mathrm{E}}}
\newtheorem{theorem}{Theorem}
\newtheorem{definition}{Definition}
\newtheorem{lemma}{Lemma}
\newtheorem{claim}{Claim}
\newtheorem{corollary}{Corollary}
\begin{document}

\title{Optimal Prize Design in Parallel Rank-order Contests}

\author{
Xiaotie Deng 
\\CFCS, School of CS\\Peking University\\Beijing, China\\
xiaotie@pku.edu.cn\and
Ningyuan Li \\CFCS, School of CS\\Peking University\\Beijing, China\\
liningyuan@pku.edu.cn\and
Weian Li \\School of Software\\Shandong University\\Jinan, China\\
weian.li@sdu.edu.cn\and
Qi Qi \\Gaoling School of Artificial Intelligence\\ Renmin University of China\\Beijing, China\\
qi.qi@ruc.edu.cn
}


\date{}
\maketitle

\begin{abstract}
    This paper investigates a two-stage game-theoretical model with multiple parallel rank-order contests. In this model, each contest designer sets up a contest and determines the prize structure within a fixed budget in the first stage. Contestants choose which contest to participate in and exert costly effort to compete against other participants in the second stage. First, we fully characterize the symmetric Bayesian Nash equilibrium in the subgame of contestants, accounting for both contest selection and effort exertion, under any given prize structures. Notably, we find that, regardless of whether contestants know the number of participants in their chosen contest, the equilibrium remains unchanged in expectation. Next, we analyze the designers' strategies under two types of objective functions based on effort and participation, respectively. For a broad range of effort-based objectives, we demonstrate that the winner-takes-all prize structure-optimal in the single-contest setting-remains a dominant strategy for all designers. For the participation objective, which maximizes the number of participants surpassing a skill threshold, we show that the optimal prize structure is always a simple contest. Furthermore, the equilibrium among designers is computationally tractable when they share a common threshold.
\end{abstract}

\section{Introduction}
\label{sec:intro}

Contest theory is an important field in economics, attracting considerable attention for its depiction of various competitive scenarios in the real world. Examples of contests range from athletic events like the Olympic games to professional competitions such as hackathons. Existing research in contest theory primarily focuses on the single contest setting, examining how to design contests to motivate participants and optimize various objectives such as the maximum or total effort of contestants.

However, the digital revolution and the rise of online platforms has led to an increase in the number of concurrent contests. For instance, platforms like TopCoder host thousands of software contests annually. Another example relevant to academics involves choosing among various conferences to submit their research papers. Each conference essentially acts as a contest deciding which submissions to accept, while authors evaluate each conference's acceptance rates and prestige to determine their best options. This shift towards multiple, simultaneous contests has garnered the attention of researchers. A recent survey in contest theory \cite{S20} emphasizes the challenges and significance of examining the economics of multiple contests.

Multi-contest environments are inherently more complex than the single contest settings. Firstly, contestants face decisions not only on which contests to enter, but also the allocation of efforts, which is further complicated by the uncertainty in other competing contestants. Secondly, contest designers also face competition from other designers and externalities that were less critical in single contest designs.

In this paper, we analyze contestants' strategic behaviors and the design of contests under multi-contest environments.
To address the aforementioned challenges, we generalize the classical contest model introduced by \cite{MS01}, extending it to accommodate multiple simultaneous contests. Our model is structured as a two-stage game, involving the competition of designers and contestants respectively. Initially, each contest designer sets up a rank-order prize structure within a given budget. After observing the configurations of all contests, each contestant chooses a contest to participate in and decides her effort level. While contestants aim to maximize their prize minus the cost of effort, various objectives for contest designers beyond simply maximizing total effort are considered. 

\subsection{Main Contributions}

Our contributions and results are summarized as follows:
\begin{enumerate}
    \item In the subgame among contestants, given the prize structures of all contests, we comprehensively characterize the symmetric Bayesian Nash equilibrium (sBNE) of contestants, which includes both contest choice and effort level. To simplify the two-dimensional strategy, we first eliminate the randomness in effort, demonstrating that a contestant's effort deterministically depends on the chosen contest and her skill in an sBNE. Focusing on contest selection, we show that despite the costly and strategic efforts in our model, the choice of contests in equilibrium aligns with a simpler model without strategic efforts \cite{DLLQ22}. We also examine a variant model where the number of competitors is disclosed to contestants before deciding the efforts, finding that the sBNE is essentially unaffected, which provides insights for real-world contest design scenarios.
    \item For designers, we consider two types of objective functions, based on effort and participation respectively. The effort objective involves a non-negative linear combination of efforts, covering prevalent objectives studied in contest theory, such as the total or maximum effort, or the total effort of the top $k$ participants. We find that for a wide range of effort objectives, the winner-takes-all prize structure, which is effective in single-contest scenarios, constitutes a dominant strategy for all designers. 
    \item The participation objective aims to maximize the number of participants whose skill reaches a threshold level. We show that for a designer with this objective, the optimal prize structure is a simple contest, which distributes equal amount of prizes to the first several contestants. Furthermore, when all designers share a common threshold, the equilibrium of designers can be computed efficiently.
\end{enumerate}

We put all proofs in appendix.



\subsection{Relate Works}
Our paper belongs to the interdisciplinary domain of economics and computer science, especially for multiple contest competition and rank-order contest design. Our paper extends the seminal model proposed by \cite{MS01}, which examines optimal rank-order contest design in a single-contest environment aimed at maximizing contestants' total effort. We generalize this classic model to a multi-contest setting, where contestants not only decide how much effort to exert but also choose which contest to enter. We also consider broader objective functions for designers, including the effort objective, which covers the well-studied objectives like total or maximum effort, and the participation objective, which aims to maximize the number of attracted contestants.  

\subsubsection{Multiple Contests}
The existing literature on multiple contests primarily focuses on analyzing the equilibrium strategies adopted by contestants within given contest mechanisms. Specifically, \cite{AM09} pioneer the investigation into two identical Tullock contests \cite{T08}, examining the contestants' equilibrium contest selection and prize structures of contests under different objectives. \cite{DV09} consider multiple auction-based crowdsourcing contests, deriving equilibrium outcomes in both symmetric and asymmetric settings. 
\cite{JSY20} investigate two contests with different prize amounts, revealing that contests offering higher prizes attract more contestants, yet an individual contestant's effort remains unaffected by the number of participants. \cite{KK22} study a scenario where a contestant can participate in multiple contests, but the output in each contest is affected by an uncertainty variable, and demonstrate that increasing the number of contests that a contestant engages in can enhance the utility of the contest designer. \cite{DLLQ22} study a multiple rank-order contest model for simple contestants, whose strategy is only the choice of contest but not the effort level, with rankings directly determined by skills. \cite{EGG22} analyze the equilibrium of contestants in multiple equal sharing contests. \cite{DGLLQ24} study the competition among several pairwise lottery contests and characterize the equilibrium of contestants and designers. Recently, \cite{DGLLL21} compare the optimal contest design between single contest and multiple contest settings, under a complete information environment with contest structures based on a contest success function, and find that the optimal design for a single contest also proves optimal for multiple contests. Interestingly, our findings align with their result, showing that the winner-take-all prize structure, optimal in single contests, remains a dominant strategy in our multiple-contest setting. 

\subsubsection{Single Contest}

A comprehensive survey on single-contest design is presented in \cite{S20}. Herein, our focus is primarily on works related to rank-order contests, categorizing them based on their objectives. Several studies have emphasized effort-based objectives, such as maximizing the total effort of contestants \cite{GH88,MS01}, the maximum effort \cite{KG03,CHS19}, and the total effort of the top $k$ contestants \cite{AS09}. Additionally, participation objectives in single-contest settings have been explored, including maximizing the number of participants \cite{GK16} and the number of participants exceeding a certain effort threshold \cite{EGG21}. \cite{GH12} aim to optimize a function that incorporates both the quantity of submissions and the quality of outputs. 

To the best of our knowledge, our work is the first to study contestants' equilibrium behavior encompassing contest selection and effort exertion, alongside the optimal contest design for various kinds of designer objectives, in the environment of multiple rank-order contests.


\section{Model and Preliminaries}
\label{sec:pre}


In this section, we formally introduce our model. Consider a multi-contest setting, where there are $m$ rank-order contests and $n$ contestants. To avoid ambiguity, we let $j \in [m] = \{1,2,\cdots, m\}$ and $i \in [n] = \{1,2,\cdots, n\}$ denote a contest and a contestant, respectively. 

In our model, each contest designer $j\in[m]$ designs a rank-order prize structure $\vec{w}_j = (w_{j,1}, w_{j,2}, \cdots, w_{j,n})$, representing the amounts of $n$ prizes, where $w_{j,1}\geq w_{j,2} \geq \cdots\geq w_{j,n}\geq 0$. The total amount of prizes is constrained by designer $j$'s budget $T_j \geq 0$, i.e., $\sum_{k=1}^n w_{j,k}\leq T_j$. 

Each contestant $i\in[n]$ has a private skill $v_i>0$, which is identically and independently drawn from a public distribution $F$. For simplicity we assume $F$ is a continuous distribution. For convenience, we use quantiles to represent the skills. Specifically, define the quantile function $v(q):=F^{-1}(1-q)$, and note that each contestant $i$'s quantile $q_i$ is a random variable independently following $U[0,1]$. Then, we can represent $v_i$ as $v_i=v(q_i)$. In the following parts, we often use $q_i$ to represent the contestant $i$'s competitiveness. Note that $v(q)$ is strictly decreasing, and therefore a lower quantile indicates a higher skill.

Our model can be formalized as a two-stage game:
\begin{itemize}
    \item In the first stage, all contest designers design their prize structures, $\vec{w}_{1}, \vec{w}_{2}$ $, \cdots, \vec{w}_{m}$, and announce them to the contestants.
    \item In the second stage, based on prize structures of all contests, each contestant $i\in[n]$ chooses one contest to participate in, denoted by $J_i \in [m]$, and decides her effort $e_{i}$ to exert in the contest $J_i$, which incurs a cost $e_{i}/v_i$.  
\end{itemize}

After the decision stages, each contest designer $j$ allocates prizes by a rank-by-effort allocation rule. Let $I_j=\{i:J_i=j\}$ denote the participants in contest $j$. The contestants in $I_j$ are ranked by their efforts in descending order, and get the prizes in $\vec{w}_j$ according to their ranks\footnote{Ties are broken in decreasing order of skills. If the number of participants $|I_j|$ is less than $n$, only the first $|I_j|$ prizes are awarded.}: the contestant exerting the highest effort 
gets prize $w_{j,1}$, the contestant with the second highest effort gets prize $w_{j,2}$, and so forth.

Let $(\boldsymbol{J},\boldsymbol{e})$ denote a pure strategy profile of contestants, where $\boldsymbol{J}=(J_1,J_2,\cdots,J_n)$ and $\boldsymbol{e}=(e_1,e_2,\cdots,e_n)$ represent the choice of contests and the exerted efforts, respectively. Under the current pure strategy profile $(\boldsymbol{J},\boldsymbol{e})$, the amount of prize that a contestant $i$ receives from the contest $J_i$ is

$$W_i(\boldsymbol{J},\boldsymbol{e}):=w_{J_i,\mathrm{rank}(i,\boldsymbol{J},\boldsymbol{e})},$$
where $\mathrm{rank}(i,\boldsymbol{J},\boldsymbol{e})=|\{i'\in[n]\setminus\{i\}: (e_{i'}<e_i)\land (J_{i'}=J_i)\}|+1$ represents the rank of $i$ among all participants in contest $J_i$.

When contestant $i$ exerts an effort $e_i$, a linear cost $e_{i}/v_i$ is incurred. The utility of a contestant is the difference between her prize and her cost. Equivalently, we always scale the contestant $i$'s utility by $v_i$, and define her utility as

$$u_i(\boldsymbol{J},\boldsymbol{e};v_i):=v_i\cdot W_i(\boldsymbol{J},\boldsymbol{e})-e_i.$$

In this paper, we allow contestants to take mixed strategies. 
A mixed strategy of contestant $i$ is a joint distribution of $(J_i,e_i)$, denoted by $\tau\in \Delta([m]\times\mathbb{R}_{\geq0})$\footnote{The notation $\Delta(S)$ denotes the space of all distributions on a set $S$.}. Note that in the subsequent sections, we will show that randomization for efforts is unprofitable, that is, without loss of generality we can represent $\tau$ as a distribution on the contests with a deterministic effort in each possible contest.

Since each contestant's quantile is private, the subgame of contestants is of incomplete information. Throughout the paper, we assume that all contestants follow a symmetric mixed strategy $\tau(q):[0,1]\to\Delta([m]\times\mathbb{R}_{\geq0})$, which means that each contestant $i\in[n]$ draws her pure strategy as $(J_i,e_i)\sim\tau(q_i)$. Given this symmetry, the index $i$ will represent a generic, arbitrary contestant. We focus on the symmetric Bayesian Nash equilibrium in the subgame of contestants, defined as follows. 
\begin{definition}
For any contestant $i$ with quantile $q_i$, when all other contestants take the symmetric mixed strategy profile $\tau(q)$, and she takes pure strategy $(J_i,e_i)$, her expected utility is defined as
\begin{align*}
    \bar{u}(J_i,e_i;&\tau(q);q_i):=\E[u_i(J_i,\boldsymbol{J}_{-i},e_i,\boldsymbol{e}_{-i};v(q_i))].
\end{align*}
Here the expectation is taken for all other contestants $i'\neq i$, following $q_{i'}\sim U[0,1]$ and $(J_{i'},e_{i'})\sim \tau(q_{i'})$.

We say a symmetric mixed strategy $\tau(q)$ is a symmetric Bayesian Nash equilibrium (sBNE), if for any $i\in[n]$, $q_i\in[0,1]$, $J_i'\in[m]$, $e_i'\in\mathbb{R}_{\geq0}$, it holds that
\begin{align*}
    \E_{(J_{i},e_i)\sim \tau(q_{i})}[\bar{u}(J_i,e_i;\tau(q);q_i)]
    \geq \bar{u}(J_i',e_i';\tau(q);q_i).
\end{align*}
\end{definition}

Lastly, we define the utility of contest designers. In this paper, we mainly consider two kinds of designers' objectives, effort objective and participation objective.
\begin{definition}[Effort objective]\label{def:effort-objective}
    An effort objective is specified by a group of coefficients, $\vec{\alpha}_j=(\alpha_{j,1},\cdots,\alpha_{j,n})$, weighting the effort of contestants by rank. Under a symmetric mixed strategy $\tau(q)$, the utility of designer $j$ is defined as
    
    $$ R_j(\tau(q);\alpha_j)=\E\left[\sum_{k=1}^n\alpha_{j,k}e^{(k)}_j\right],$$
    where the expectation is taken for $q_i\sim U[0,1]$, $(J_i,e_i)\sim \tau(q_i)$ for all $i\in[n]$, and $e^{(k)}_j$ denotes the effort of the contestant ranked $k$ in contest $j$. More specifically, $e^{(k)}_j=e_i$ if there exists $i\in[n]$ that $J_i=j$ and $\mathrm{rank}(i,\boldsymbol{J},\boldsymbol{e})=k$, and otherwise $e^{(k)}_j=0$. 
\end{definition}
\begin{definition}[Participation objective]\label{def:part-objective}
    A participation objective is specified by a quantile threshold $\theta_j\in[0,1]$. Under a symmetric mixed strategy $\tau(q)$, the utility of designer $j$ is defined as
    
    $$R_j(\tau(q);\theta_j)=\sum_{i\in[n]}\Pr[J_i=j\land q_i\leq \theta_j],$$
    where $q_i\sim U[0,1]$, $(J_i,e_i)\sim \tau(q_i)$ for all $i\in[n]$. This is the expected number of participants in contest $j$ whose skills achieve $v(\theta_j)$, indicating that they are qualified enough.
\end{definition}

\section{Contestant Equilibrium}\label{sec:CE}
In this section, we analyze the symmetric Bayesian Nash equilibrium within the subgame of contestants, given the prize structures of all contests. First, in subsection \ref{subsec:relationship betweem effort and contest}, we demonstrate that the effort level is deterministic once a contest is chosen, which allows us to represent the equilibrium solely by the contest choices. Next, in subsection \ref{subsec:characterizarion of CE}, we provide a complete characterization of the contestant equilibrium. Surprisingly, although the contestants exert costly efforts strategically in our model, we find the contest choice in equilibrium consistent with a simpler model without efforts, previously studied in \cite{DLLQ22}. Finally, in subsection \ref{subsec: known number of players}, we present an interesting finding that, even when contestants know the number of competing participants in their chosen contests and then choose the effort level accordingly, the symmetric Bayesian Nash equilibrium remains unchanged in expectation.


\subsection{Representation of Equilibrium Strategy}\label{subsec:relationship betweem effort and contest}
Different from the single-contest setting, the strategy of contestants in our model is two-dimensional, i.e., choosing a contest and determining the effort level. This complexity makes it challenging to analyze contestants' equilibrium strategies directly. Moreover, the rank-order prize structure complicates the analysis as the prize amount involves discrete ranks. To tackle these challenges, we simplify the representation of the utility and strategy of contestants.

We start by representing a contestant's interim utility by the \emph{interim allocation function}, which encapsulates the prize structure of a contest. First, given any symmetric mixed strategy $\tau(q)$, we define

$$\Psi_j^{\tau(q)}(e):=\Pr_{\substack{q_i\sim U[0,1],\\(J_i,e_i)\sim\tau(q_i)}}[J_i=j\land e_i\geq e],$$
which represents the probability that a contestant selects contest $j$ and exerts an effort greater than $e$ under the strategy $\tau(q)$. 

Observe that for any contestant $i\in [n]$, if she takes pure strategy $(J_i,e_i)$ and all other contestants adopt $\tau(q)$, then the number of contestants ranked before $i$, that is, $\mathrm{rank}(i,\boldsymbol{J},\boldsymbol{e})-1$, follows the binomial distribution $B(n-1,\Psi_j^{\tau(q)}(e))$. Therefore, her expected prize amount is $\E[w_{J_i,\mathrm{rank}(i,\boldsymbol{J},\boldsymbol{e})}] =
\sum_{k=1}^nw_{J_i,k}\binom{n-1}{k-1}(\Psi_{J_i}^{\tau(q)}(e_i))^{k-1}(1-\Psi_{J_i}^{\tau(q)}(e_i))^{n-k}$.
Viewing this as a function of $\Psi_{J_i}^{\tau(q)}(e_i)$, we define the interim allocation function induced by the prize structure $\vec{w}_j$ of contest $j$ as

$$x_{\vec{w}_j}(\phi):=\sum_{k=1}^nw_{j,k}\binom{n-1}{k-1}\phi^{k-1}(1-\phi)^{n-k}.$$
Then her expected utility taking $(J_i,e_i)$ can be written as
\begin{align*}
    \bar{u}(J_i,e_i;\tau(q);q_i)=v(q_i)x_{\vec{w}_{J_i}}(\Psi_{J_i}^{\tau(q)}(e_i))-e_i.
\end{align*}


Notably, when $w_{j,1}=w_{j,n}$, i.e., the $n$ prizes in contest $j$ are all equal, it holds for all $\phi\in[0,1]$ that $x_{\vec{w}_j}(\phi)=w_{j,1}$. On the other hand, when $w_{j,1}>w_{j,n}$, $x_{\vec{w}_j}(\phi)$ is strictly decreasing in $\phi$ on $[0,1]$. For convenience, we may use $x_j(\phi)$ to denote $x_{\vec{w}_j}(\phi)$ when there is no ambiguity. 

Typically, a mixed strategy $\tau$ can be interpreted as a marginal distribution of $J_i\in[m]$ and $m$ conditional distributions of $e_i$ for each possible $J_i$. However, we can demonstrate that randomization over $e_i$ is generally unnecessary in sBNE. Specifically, the effort level $e_i$ is determined once $J_i$ and $q_i$ are given. We formalize this in the following lemma.


\begin{lemma}\label{lemma:SBNE-deterministic-effort}
Given the prize structures $\vec{w}_1,\cdots,\vec{w}_m$, suppose $\tau(q)$ is an SBNE. Then, for any contest $j\in[m]$ and any $q_i\in[0,1]$ such that $\Pr_{(J_i,e_i)\sim \tau(q_i)}[J_i=j]>0$, the distribution of $e_i$ in $\tau(q_i)$ conditioning on $J_i=j$ is deterministic, i.e., there exist some $\beta_j:[0,1]\to\mathbb{R}_{\geq0}$ such that $\Pr_{(J_i,e_i)\sim \tau(q_i)}[e_i=
\beta_j(q_i)|J_i=j]=1$.

Moreover, let $\vec{\pi}(q)=(\pi_1(q),\cdots,\pi_m(q))$ denote the marginal distribution of $J_i$ in $\tau(q)$, where $\pi_j(q_i):=\Pr_{(J_i,e_i)\sim \tau(q_i)}[J_i=j]$. 
Then $\beta_j$ can be represented as 

$$\beta_j(q_i)=\hat{\beta}_j(\Phi_{\pi_j(q)}(q_i)),$$ 
where $\hat{\beta}_j(\phi)$ is a strictly decreasing and continuous function, and $\Phi_{\pi_j(q)}(q_i):=\int_0^{q_i}\pi_j(q)dq$.
\end{lemma}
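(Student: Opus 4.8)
The plan is to fix a single contest $j$ and analyze the effort-choice problem of a generic contestant while treating the aggregate strategy $\tau(q)$, and hence the functions $\Psi_j^{\tau(q)}$ (abbreviated $\Psi_j$) and $\pi_j$, as given. By the interim formula already derived, a type $q_i$ who selects $j$ and exerts $e$ obtains $U(e;q_i):=v(q_i)\,x_j(\Psi_j(e))-e$, and in an sBNE the support of her conditional effort distribution must lie in $\arg\max_{e\ge 0}U(e;q_i)$. If $w_{j,1}=w_{j,n}$ then $x_j$ is constant, $U(\cdot;q_i)$ is strictly decreasing, and the unique maximizer is $e=0$, so $\beta_j\equiv 0$; I therefore take the nondegenerate case $w_{j,1}>w_{j,n}$, in which $x_j$ is strictly decreasing, as the substantive one. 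Throughout I will use that $\Psi_j(e)=\Pr[J_i=j\wedge e_i\ge e]$ is nonincreasing and left-continuous in $e$, which are general properties of upper-tail probabilities and need no equilibrium hypothesis.

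The first real step is a single-crossing property of best responses. For $e_1<e_2$ one has $U(e_2;q)-U(e_1;q)=v(q)\bigl(x_j(\Psi_j(e_2))-x_j(\Psi_j(e_1))\bigr)-(e_2-e_1)$, and I would show that if a type weakly prefers the higher effort $e_2$, then every strictly-higher-skill type (strictly smaller quantile, hence strictly larger $v$) strictly prefers $e_2$. Concretely, the weak preference yields $x_j(\Psi_j(e_2))-x_j(\Psi_j(e_1))\ge (e_2-e_1)/v(q)>0$, after which the sign of the difference for a higher-skill type is immediate. The standard consequence is that the optimal-effort correspondence is \emph{monotone}: for two types $q'<q$ that both select $j$ with positive probability, every optimal effort of $q'$ is at least every optimal effort of $q$.

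Given monotonicity, I would establish determinism (the first assertion) by contradiction. Suppose a type $q_0$ places positive conditional mass on two efforts $e_1<e_2$, both necessarily optimal. By monotonicity no type other than $q_0$ can use an effort in the open interval $(e_1,e_2)$, and since a single type carries zero measure, $\Psi_j$ cannot decrease across $(e_1,e_2)$ and is constant there; by left-continuity its value at $e_2$ equals this same constant. Hence $x_j(\Psi_j(\cdot))$ is constant on $[e_1,e_2]$, so $U(\cdot;q_0)$ is strictly decreasing there and any interior effort strictly beats $e_2$, contradicting the optimality of $e_2$. Thus the conditional effort is a single value $\beta_j(q_i)$, and monotonicity makes $\beta_j$ nonincreasing.

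It remains to prove the representation $\beta_j(q_i)=\hat\beta_j(\Phi_{\pi_j(q)}(q_i))$ with $\hat\beta_j$ strictly decreasing and continuous. I would first exclude atoms of $\Psi_j$: if a positive mass of types exerted a common effort $e^\ast$, the worst-skill type in that mass could raise her effort by an arbitrarily small $\delta$, leap over the entire tie mass, and gain a discrete jump in $x_j$ at vanishing cost, contradicting equilibrium. No atoms forces $\beta_j$ to be strictly decreasing on the participation support, which gives the key identity $\Psi_j(\beta_j(q))=\Pr[J_i=j\wedge q_i\le q]=\Phi_{\pi_j(q)}(q)$: the competition a type faces is exactly the mass of equally-or-better contestants entering $j$. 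Defining $\hat\beta_j$ as the map carrying this competition level back to the effort that produces it (an inverse of $\Psi_j$ on the relevant range), strict monotonicity is immediate from that of $x_j$ and $v$. The delicate point, which I expect to be the main obstacle, is continuity of $\hat\beta_j$ when $\{q:\pi_j(q)>0\}$ is not an interval: a quantile gap on which $\pi_j=0$ collapses to a single value of $\Phi_{\pi_j(q)}$, so I must show the efforts on the two sides of the gap agree. This again follows from optimality, since $\Psi_j$ is flat across the gap and overexerting relative to the lower boundary is strictly dominated, forcing the two boundary efforts to coincide and hence $\hat\beta_j$ to be single-valued and continuous there. Assembling the atom-exclusion, the identity, and this gap-collapsing argument delivers the strictly decreasing, continuous $\hat\beta_j$.
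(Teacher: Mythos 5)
Your proposal follows essentially the same route as the paper's proof: single-crossing gives monotonicity of best-response efforts in the quantile, an atom of the aggregate effort distribution is destroyed by an upward $\delta$-deviation of the worst-ranked tied type, and the identity $\Psi_j^{\tau(q)}(\beta_j(q))=\Phi_{\pi_j}(q)$ then yields determinism, strict monotonicity, and the representation through $\hat{\beta}_j$. The one step that is not fully carried out is the continuity of $\hat{\beta}_j$ in $\phi$. You reduce it to quantile gaps on which $\pi_j=0$, but that argument only establishes that $\hat{\beta}_j$ is well defined (single-valued) at such a gap. A discontinuity of $\hat{\beta}_j$ can also arise at a quantile $q_0$ around which $\pi_j>0$, i.e., an interval $(e_1,e_2)$ of efforts used by no positive mass of types even though $\Phi_{\pi_j}$ is strictly increasing through the corresponding $\phi_0$. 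There your clean domination argument (``$\Psi_j$ is flat, so the higher effort is wasted'') applies directly only to a type who actually plays $e_2$; types just to the left of $q_0$ may all play efforts strictly above $e_2$, where $\Psi_j$ is no longer flat, so deviating down to $e_1+\epsilon$ trades a genuinely lower interim allocation against the cost saving and is not obviously profitable.

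Closing this requires the limiting argument the paper gives inside its first claim (no gaps in the support of the conditional effort distribution): from the fact that a type playing an effort in $[e_2,e_2+\epsilon'')$ does not want to deviate up to $e_2+\epsilon$, one extracts a uniform bound $M$ on $v(q_i)$ for all such types, and for types with $v(q_i)\leq M$ the downward deviation to $e_1$ becomes strictly profitable once $\epsilon''$ is small enough. This is routine but not automatic, and it is exactly what your sketch omits. Everything else in your outline (the constant-prize case, monotonicity, determinism via the flat-$\Psi_j$ contradiction, atom exclusion via the tie-breaking rule, the identity, and the collapse of quantile gaps) matches the paper's chain of claims and is correct.
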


By \Cref{lemma:SBNE-deterministic-effort}, we can simplify the symmetric equilibrium strategy without loss of generality. Specifically, we can represent it as a randomized choice of contest with a deterministic effort level for each contest, denoted by $(\vec{\pi}(q),\vec{\beta}(q))$, where $\vec{\pi}(q)= (\pi_1(q),\cdots,\pi_m(q))$ is called the \textbf{choice strategy}, and $\vec{\beta}(q)=(\beta_1(q),\cdots,\beta_m(q))$ is called the \textbf{effort strategy}.

Note that if $(\vec{\pi}(q),\vec{\beta}(q))$ constitutes an sBNE, then modifying $\vec{\pi}(q)$ on any zero-measure subset of $q$ will not violate the equilibrium condition, provided that the modified distributions remain supported on the best response set. This means that there can be multiple sBNEs that differ only on zero-measure sets. Moreover, since these sBNEs induce the same utility for all contestants and designers, it is unnecessary to distinguish between them. To mitigate this multiplicity, we adopt the approach in \cite{DLLQ22} and focus on the \textbf{cumulative choice strategy}, which serves as a succinct proxy for $\vec{\pi}(q)$.

\begin{definition}\label{def:cumulative behavior}
    A cumulative choice strategy is defined as $\vec{\Phi}(q)=(\Phi_1(q),\cdots,\Phi_m(q))$, which satisfies the following:
    \begin{enumerate}
        \item For any $j\in[m]$, $\Phi_j(q)$ is a continuous, non-negative and non-decreasing function on $[0,1]$.
        \item For any $j\in[m]$ and $q\in[0,1]$, $\sum_{j=1}^m \Phi_j(q)=q$.
    \end{enumerate}

    Given a choice strategy $\vec{\pi}(q)=(\pi_1(q),\cdots,\pi_m(q))$, define $\vec{\Phi}_{\vec{\pi}}(q)=(\Phi_{\pi_1}(q),\cdots,\Phi_{\pi_m}(q))$, 
    where 
    
    $$\Phi_{\pi_j}(q_i):=\int_0^{q_i}\pi_j(t)dt.$$
    We say $\Phi_{\pi_j}(q)$ is the cumulation of $\pi_j(q)$ and $\vec{\Phi}_{\vec{\pi}}(q)$ is the cumulation of $\vec{\pi}(q)$, .
    
    We call $\vec{\Phi}(q)$ a \textbf{cumulative equilibrium choice strategy} if $\vec{\Phi}(q)$ is the cumulation of a choice strategy $\vec{\pi}(q)$ in some sBNE.
\end{definition}
By Definition \ref{def:cumulative behavior}, choice strategies that differ only on zero-measure sets will induce the same cumulative choice strategy through integration. Conversely, results in \cite{DLLQ22} indicate that for any cumulative choice strategy, a corresponding choice strategy can also be found. From now on, we represent the symmetric strategy by a cumulative choice strategy $\vec{\Phi}(q)$ and an effort strategy $\vec{\beta}(q)$. Under this representation, the contestant equilibrium is unique in most cases, which simplifies the characterization of the equilibrium.


\subsection{Characterization of Contestant Equilibrium}\label{subsec:characterizarion of CE}


We now characterize the cumulative choice strategy and effort strategy in contestant equilibrium. First, we provide a necessary and sufficient condition which characterizes the effort strategy $\vec{\beta}(q)$ by the cumulative choice strategy $\vec{\Phi}(q)$ in sBNE, as detailed in Lemma \ref{lemma:contestant-equilibrium-effort-condition}.
\begin{lemma}\label{lemma:contestant-equilibrium-effort-condition}
Given the interim allocation functions $x_1(\phi),$ $\cdots,x_m(\phi)$, a cumulative choice strategy $\vec{\Phi}(q)$ and an effort strategy $\vec{\beta}(q)$ represent an sBNE, if and only if the following conditions are satisfied:
\begin{enumerate}
    \item For any $j\in[m]$ and all $q_i\in[0,1]$ with $\Phi_j'(q_i)>0$, the effort strategy is uniquely determined as
    
    $$\beta_j(q_i)=\int_{q_i}^1 v(t)\left(-x_j'(\Phi_j(t))\right)\Phi_j'(t)dt.$$
    \item It holds for all $q_i\in[0,1]$ that
    $$\{j\in[m]:\Phi_j'(q_i)>0\}\subseteq \arg\max_{j\in[m]} (v(q_i)x_j(\Phi_j(q_i))-\beta_j(q_i)).$$
\end{enumerate}
\end{lemma}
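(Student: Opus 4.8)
The plan is to reduce each contestant's two-dimensional deviation to a one-dimensional choice of a ``target rank-probability'' in each contest, and then establish optimality via an envelope argument together with a single-crossing property. The starting point is the observation, inherited from \Cref{lemma:SBNE-deterministic-effort}, that on the support of contest $j$ the equilibrium effort $\beta_j(q)=\hat\beta_j(\Phi_j(q))$ is strictly decreasing in $\Phi_j(q)$ and hence monotone in $q$. Consequently the within-contest ranking by effort coincides with the ranking by quantile, which yields the key identity $\Psi_j^{\tau(q)}(\beta_j(q_i))=\Phi_j(q_i)$: the mass of rivals who choose $j$ and beat a type-$q_i$ contestant equals the mass of lower-quantile contestants choosing $j$. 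More generally, for a deviator who wants her rank-probability in contest $j$ to equal $\phi\in[0,\Phi_j(1)]$, the cheapest effort achieving this is $\hat\beta_j(\phi)=\beta_j(\Phi_j^{-1}(\phi))$, so her resulting payoff is $\Pi_j(\phi;q_i):=v(q_i)x_j(\phi)-\hat\beta_j(\phi)$. Thus the entire deviation problem becomes $\max_{j}\max_{\phi}\Pi_j(\phi;q_i)$, and the candidate equilibrium prescribes $\phi=\Phi_j(q_i)$.

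For the forward direction I would fix a type $q_i$ in the interior of contest $j$'s support and apply the envelope theorem (Milgrom--Segal) to the indirect utility $U(q_i)=\max_{j,e}[v(q_i)x_j(\Psi_j(e))-e]$. Since $q_i$ enters the objective only through the multiplicative factor $v(q_i)$, the envelope formula gives $U'(q_i)=v'(q_i)x_j(\Phi_j(q_i))$ almost everywhere, while differentiating the on-path identity $U(q_i)=v(q_i)x_j(\Phi_j(q_i))-\beta_j(q_i)$ directly and equating the two expressions cancels the $v'(q_i)x_j$ term and leaves the ordinary differential equation $\beta_j'(q_i)=v(q_i)x_j'(\Phi_j(q_i))\Phi_j'(q_i)$. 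Integrating this with the boundary condition $\beta_j=0$ at the top of contest $j$'s support---which holds because the worst-ranked on-path type gains nothing from positive effort---produces exactly the integral formula of condition~1; extending the integral to $1$ is harmless since $\Phi_j'\equiv 0$ beyond the support. Condition~2 is then immediate: if a contest in the support of $\vec\pi(q_i)$ failed to attain $\max_j(v(q_i)x_j(\Phi_j(q_i))-\beta_j(q_i))$, the type-$q_i$ contestant would strictly prefer a contest that does, contradicting the equilibrium condition.

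For the converse I would assume conditions~1 and~2 and rule out every deviation. Within a fixed contest $j$, differentiating the reduced payoff gives $\tfrac{\partial}{\partial\phi}\Pi_j(\phi;q_i)=x_j'(\phi)\bigl(v(q_i)-v(\Phi_j^{-1}(\phi))\bigr)$, using $\hat\beta_j'(\phi)=v(\Phi_j^{-1}(\phi))x_j'(\phi)$ obtained from condition~1 by the chain rule. Because $x_j'<0$ and $v$ is strictly decreasing, this derivative is positive for $\phi<\Phi_j(q_i)$ and negative for $\phi>\Phi_j(q_i)$, so $\Pi_j(\cdot;q_i)$ is single-peaked with global maximizer $\phi=\Phi_j(q_i)$; hence no within-contest effort deviation helps. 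The best attainable payoff in each contest $j$ is therefore $v(q_i)x_j(\Phi_j(q_i))-\beta_j(q_i)$, and condition~2 guarantees that the prescribed contest(s) already achieve the maximum of this quantity over $j$, with off-support contests (where $\Phi_j\equiv 0$) handled by the same formula, which correctly values an uncontested deviation at $v(q_i)w_{j,1}$. Combining the two, no joint deviation in contest and effort is profitable, so $(\vec\Phi,\vec\beta)$ is an sBNE.

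The main obstacle is establishing \emph{global} rather than merely local optimality of the prescribed effort, which is precisely where the single-crossing structure $\operatorname{sign}\tfrac{\partial}{\partial\phi}\Pi_j=\operatorname{sign}(v(\Phi_j^{-1}(\phi))-v(q_i))$ does the real work; the first-order condition alone only certifies a stationary point. Secondary technical points that need care are the absolute continuity of $\beta_j$ and the validity of the envelope differentiation (both supported by the monotonicity from \Cref{lemma:SBNE-deterministic-effort}), the justification of the zero-effort boundary condition at the edge of each support, and the treatment of off-support contests and quantiles where $\Phi_j'=0$, including the continuous tie-breaking convention and the fact that modifications on zero-measure sets leave the equilibrium payoffs unchanged.
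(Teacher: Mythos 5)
Your proposal is correct and follows essentially the same route as the paper's proof: the same key identity $\Psi_j^{\tau(q)}(\beta_j(q_i))=\Phi_j(q_i)$, the same reduction of effort deviations to mimicking another type within the chosen contest, the same zero-effort boundary condition at the bottom of each contest's support, and the same use of condition~2 to dispose of cross-contest deviations. The only presentational differences are that you invoke the Milgrom--Segal envelope theorem where the paper derives the ODE $\hat\beta_j'(\Phi_j(q_i))=v(q_i)x_j'(\Phi_j(q_i))$ directly from one-sided difference quotients of the incentive constraints, and you argue sufficiency via the sign of $\partial\Pi_j/\partial\phi$ where the paper writes the equivalent integrated identity $\int_{q'}^{q_i}(v(t)-v(q_i))\left(-x_j'(\Phi_j(t))\right)d\Phi_j(t)\ge 0$.
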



By condition 1 in \Cref{lemma:contestant-equilibrium-effort-condition}, the cumulative choice strategy $\vec{\Phi}(q)$ induces a unique effort strategy $\vec{\beta}(q)$, such that if $\vec{\Phi}(q)$ is a cummulative equilibrium choice strategy, then $(\vec{\Phi}(q),\vec{\beta}(q))$ form a sBNE. This reduces the two-dimension strategy to one-dimension. 
Based on this, we can focus on the condition for $\vec{\Phi}(q)$ to form the equilibrium.

\begin{theorem}\label{thm:contestant-equilibrium-choice-condition}
Given the interim allocation functions $x_1(\phi),$ $\cdots,x_m(\phi)$, a cumulative choice strategy $\vec{\Phi}(q)$ is a cummulative equilibrium choice strategy if and only if for all $q_i\in[0,1]$,
\begin{equation*}
    \{j\in[m]:\Phi_j'(q_i)>0\}\subseteq \arg\max_{j\in[m]}x_j(\Phi_j(q_i)).
\end{equation*}
\end{theorem}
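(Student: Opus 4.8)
The plan is to establish the equivalence in Theorem~\ref{thm:contestant-equilibrium-choice-condition} by reducing the equilibrium condition for $\vec{\Phi}(q)$ to the statement of condition 2 in Lemma~\ref{lemma:contestant-equilibrium-effort-condition}, and then showing that, once the effort strategy $\vec{\beta}(q)$ is substituted via condition 1 of that lemma, the interim utility of choosing contest $j$ collapses to exactly $v(q_i)x_j(\Phi_j(q_i)) - \beta_j(q_i)$ and that the comparison among contests is governed solely by the $x_j(\Phi_j(q_i))$ terms. By the preceding lemma, $\vec{\Phi}(q)$ is a cumulative equilibrium choice strategy precisely when the induced $(\vec{\Phi}(q),\vec{\beta}(q))$ satisfies condition 2, so the whole task is to simplify that best-response condition into the clean form claimed in the theorem.

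First I would define the net payoff of selecting contest $j$ at quantile $q_i$ as $U_j(q_i) := v(q_i)x_j(\Phi_j(q_i)) - \beta_j(q_i)$, using the formula $\beta_j(q_i)=\int_{q_i}^1 v(t)(-x_j'(\Phi_j(t)))\Phi_j'(t)\,dt$ from condition 1. The key computational step is to differentiate $U_j$ with respect to $q_i$: by the product rule and the fundamental theorem of calculus, the derivative of the $\beta_j$ term cancels the term coming from differentiating $x_j(\Phi_j(q_i))$ inside the product, leaving $U_j'(q_i) = v'(q_i)x_j(\Phi_j(q_i))$. The crucial observation is that this derivative depends on contest $j$ only through the factor $x_j(\Phi_j(q_i))$ (since $v'(q_i)$ is common to all contests), which is what lets the argmax over $U_j$ be replaced by an argmax over $x_j(\Phi_j(\cdot))$.

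The core of the argument is then to translate the pointwise best-response condition on $U_j$ into one on $x_j(\Phi_j(q_i))$. For the forward direction, if some contest $j$ is in the support at $q_i$ (i.e.\ $\Phi_j'(q_i)>0$) but does not maximize $x_j(\Phi_j(q_i))$, I would argue that on a neighborhood this gap in the $x$-values integrates up, via the derivative identity $U_j'=v'x_j(\Phi_j)$ together with $v'<0$ (from $v$ strictly decreasing), to make $U_j(q_i)$ strictly smaller than the payoff of the maximizing contest, contradicting condition 2. The natural vehicle is to compare two contests $j,j'$ both in the support on an interval and show that equal membership forces equal $x$-values; I would work with a common boundary point where both payoffs agree (for instance at $q_i=1$, where $\beta_j(1)=0$ and $U_j(1)=v(1)x_j(\Phi_j(1))$) and integrate the derivative identity backward. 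For the converse, assuming the $x_j(\Phi_j(q_i))$ condition holds everywhere, I would integrate the derivative identity from $1$ down to $q_i$ to recover that contests sharing the maximal $x$-value also share the maximal $U_j$-value, thereby verifying condition 2.

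The main obstacle I anticipate is the careful handling of the integral comparison across contests that are \emph{not} simultaneously in the support, and of boundary and measure-zero issues: the argmax set in condition 2 compares all contests, including those with $\Phi_j'=0$ on a given interval, so I must show that a contest temporarily out of the support cannot accumulate a payoff advantage that would later violate the inclusion. This requires tracking how $\Phi_j(q_i)$ (hence $x_j(\Phi_j(q_i))$) evolves while contest $j$ is dormant and verifying that the derivative identity, which only pins down $U_j$ where the payoff is tied to actual participation, still yields a consistent global comparison. I expect to resolve this by invoking continuity of all $\Phi_j$ and of $x_j$, together with the monotonicity of $v$, to rule out crossings that would break the inclusion.
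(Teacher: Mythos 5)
Your proposal follows essentially the same route as the paper: it reduces the claim to condition 2 of Lemma~\ref{lemma:contestant-equilibrium-effort-condition}, uses the envelope identity $U_j'(q_i)=v'(q_i)x_j(\Phi_j(q_i))$ (equivalently, the paper's integration by parts giving $\hat{u}_j(q_i)=v(1)x_j(\Phi_j(1))-\int_{q_i}^1 x_j(\Phi_j(t))\,dv(t)$) so that utility comparisons become integrals of $x$-value gaps against $dv$, anchors the comparison at $q_i=1$, and handles dormant contests by continuity and monotonicity of $v$ --- exactly the paper's case analysis. One minor imprecision: the payoffs $U_j(1)=v(1)x_j(\Phi_j(1))$ need not agree across contests at $q_i=1$, but the argument only needs this boundary representation, not equality there.
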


\Cref{thm:contestant-equilibrium-choice-condition} refines the condition 2 in \Cref{lemma:contestant-equilibrium-effort-condition} into a more direct condition. Essentially, while all contestants aim to choose the contest offering maximum expected utility, their equilibrium behavior boils down to choosing the contest offering maximum expected prize. This mirrors a simpler model introduced in \cite{DLLQ22}, where contestants choose contests without exerting efforts, and are then ranked by skill levels. In that model, contestants with higher quantiles never affect those with lower quantiles, so intuitively, as $q$ moves from 0 to 1, a contestant with quantile $q$ evaluates the expected remaining highest prize $x_j(\Phi_j(q))$ in each contest $j$, and selects among the highest ones. This process coincides with our equilibrium condition in \Cref{thm:contestant-equilibrium-choice-condition}.

Building on the results in \cite{DLLQ22}, we present a detailed characterization of the cummulative equilibrium choice strategy in \Cref{theorem:contestant-equilibrium-wine22}. We record some notations:
Let $x_j^{-1}(x):=\max\{\phi:x_j(\phi)$ $\geq x\}$\footnote{Specifically, when $x_j(0)<x$, define $x_j^{-1}(x)=0$.}, $Q(x):=\sum_{j\in[m]}x_j^{-1}(x)$ and $Q^{-1}(q):=\max\{x:Q(x)\geq q\}$. Note that $x_j^{-1}(x), Q(x)$ and $Q^{-1}(q)$ are non-increasing functions.

\begin{corollary}\label{theorem:contestant-equilibrium-wine22}
When all $x_j(\phi)$ are strictly decreasing, the unique cumulative equilibrium choice strategy $\vec{\Phi}(q)$ is given by

$$\Phi_j(q)=x_j^{-1}(Q^{-1}(q)).$$

On the other hand, suppose there is a partition of contests $M_1\subseteq[m]$ and $M_2=[m]\setminus M_1$, such that for every $j\in M_1$, $x_j(\phi)$ is a constant on $[0,1]$, and that for every $j\in [m]\setminus M_2$, $x_j(\phi)$ is strictly decreasing. Denote $x^*=\max_{j\in M_1}x_j(0)$ as the maximum among the constant values in $M_1$, and let $M_1^*=\{j\in M_1:x_j(0)=x^*\}$ and $q^*=\min\{1,\sum_{j\in M_2}x_j^{-1}(x^*)\}$. Then all cumulative equilibrium choice strategies $\vec{\Phi}(q)$ are given by:
\begin{itemize}
    \item For all $j\in M_2$, it holds that $\Phi_j(q)=x_j^{-1}(Q^{-1}(q))$ for all $q\in[0,1]$.
    \item For all $j\in M_1^*$ and all $q\in[0,q^*]$, it holds that $\Phi_j(q)=0$. For all $q\in(q^*,1]$, the value of $\Phi_j(q)$  can be arbitrary as long as $\sum_{j\in M_1^*}\Phi_j(q)=q-q^*$.
    \item For all $j\in M_1\setminus M_1^*$ and all $q\in[0,1]$, it holds that $\Phi_j(q)=0$.
\end{itemize}
\end{corollary}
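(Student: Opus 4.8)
The plan is to derive both statements directly from the equilibrium condition in \Cref{thm:contestant-equilibrium-choice-condition}, which, together with the defining properties of a cumulative choice strategy in \Cref{def:cumulative behavior}, reduces the problem to a purely analytic fixed-point characterization: find all continuous, non-negative, non-decreasing $\Phi_j$ with $\sum_j \Phi_j(q) = q$ such that, at every $q$, the active contests $\{j : \Phi_j'(q) > 0\}$ are all among the maximizers of $x_{j'}(\Phi_{j'}(q))$. I would treat existence and uniqueness separately, and note that the all-decreasing case is the special case $M_1 = \emptyset$ of the mixed case.

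For existence I would verify the proposed strategies by substitution. In the strictly decreasing case, setting $\Phi_j(q) = x_j^{-1}(Q^{-1}(q))$ gives $x_j(\Phi_j(q)) = Q^{-1}(q)$ for every active contest (using continuity and strict monotonicity of each polynomial $x_j$), so all active contests share the common value $Q^{-1}(q)$, which is the maximum; that $\sum_j \Phi_j(q) = Q(Q^{-1}(q)) = q$ follows from continuity of $Q = \sum_j x_j^{-1}$. For the mixed case the only new point is that for $q > q^*$ the constant contests in $M_1^*$ sit exactly at the tie value $x^*$, which equals the running maximum, so any continuous non-decreasing split with $\sum_{j \in M_1^*}\Phi_j(q) = q - q^*$ keeps every active contest among the maximizers.

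The core of the argument is uniqueness, and its central lemma is value equalization: in any equilibrium, for every $q$ and every $j$ with $\Phi_j(q) > 0$ one has $x_j(\Phi_j(q)) = y(q)$, where $y(q) := \max_{j'} x_{j'}(\Phi_{j'}(q))$ is non-increasing. To prove this I would exploit that $\Phi_j(q) = \int_0^q \pi_j(t)\,dt$ is absolutely continuous, so $\Phi_j(q) > 0$ forces $\Phi_j' > 0$ on a positive-measure subset of $[0,q]$; letting $q^\dagger$ be the supremum of such active points up to $q$, the function $\Phi_j$ is constant on $(q^\dagger, q]$, while at active points approaching $q^\dagger$ the equilibrium condition gives $x_j(\Phi_j) = y$. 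Passing to the limit (continuity of $x_j \circ \Phi_j$, monotonicity of $y$) yields $x_j(\Phi_j(q)) = x_j(\Phi_j(q^\dagger)) \geq y(q^\dagger) \geq y(q)$, hence equality with the maximum. Combined with the observation that an inactive contest must satisfy $x_j(0) \le y(q)$ (otherwise it would exceed the max), this pins down $\Phi_j(q) = x_j^{-1}(y(q))$ for every strictly decreasing contest and forces $\Phi_j \equiv 0$ for every constant contest whose value lies strictly below $y(q)$.

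Finally I would close the characterization by mass conservation. Summing $\Phi_j(q) = x_j^{-1}(y(q))$ over all contests gives $Q(y(q)) = q$, which by continuity and monotonicity of $Q$ forces $y(q) = Q^{-1}(q)$, delivering the unique strategy in the all-decreasing case. In the mixed case the value-equalization lemma also shows $y(q)$ can never fall below $x^*$, since a contest in $M_1^*$ always offers value $x^*$; concretely, the constant contests contribute a jump of size $|M_1^*| \ge 1$ to $Q$ at $x = x^*$, so $Q^{-1}(q) = x^*$ for every $q \in (q^*, 1]$. Hence for $q \le q^*$ only the $M_2$ contests are active and are pinned down by $\Phi_j(q) = x_j^{-1}(Q^{-1}(q))$ with $Q^{-1}(q) \ge x^*$, the contests in $M_1 \setminus M_1^*$ stay at zero throughout, and for $q > q^*$ the value is frozen at $x^*$, so each $M_2$ contest holds the fixed mass $x_j^{-1}(x^*)$ while the residual mass $q - q^*$ is split among the tied contests $M_1^*$ — the sole remaining degree of freedom. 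The main obstacle I anticipate is making the value-equalization lemma rigorous, in particular the ``last active point'' argument despite the only a.e.-defined derivative and the possible discontinuities of $y$; the secondary technical nuisance is handling the jump of $Q$ at the tie value $x^*$ when translating $Q(y(q)) = q$ back into $y(q) = Q^{-1}(q)$.
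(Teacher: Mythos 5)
Your proposal is correct in substance, but it takes a very different route from the paper: the paper's entire proof of this corollary is a one-line citation, observing that once \Cref{thm:contestant-equilibrium-choice-condition} reduces the equilibrium condition to ``active contests maximize $x_j(\Phi_j(q))$,'' the characterization of all solutions to that condition is exactly the result already established in \cite{DLLQ22} for the effort-free model, so it is imported wholesale. You instead re-derive that characterization from scratch. Your value-equalization lemma (every contest with $\Phi_j(q)>0$ attains the running maximum $y(q)$, proved via the last-active-point/limit argument using absolute continuity of $\Phi_j$ and monotonicity of $y$) is the right key step, and the closure by mass conservation $Q(y(q))=q$, together with the observation that $Q$ may be constant on an interval only where every $x_j^{-1}$ is constant (so $\Phi_j(q)=x_j^{-1}(y(q))=x_j^{-1}(Q^{-1}(q))$ regardless), correctly handles both the all-decreasing case and the tie at $x^*$ in the mixed case. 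What your approach buys is self-containedness and an explicit account of where the non-uniqueness for $M_1^*$ on $(q^*,1]$ comes from; what it costs is that you are essentially reproving the main structural result of the cited prior work, including the measure-theoretic care you flag (the condition of \Cref{thm:contestant-equilibrium-choice-condition} is only meaningful at points where the derivative exists, so ``active points'' must be read as a full-measure subset of the support, which your integral representation $\Phi_j(q)=\int_0^q\pi_j$ does justify). If you intend this as a standalone proof rather than a citation, the one place to be explicit is the identification $\Phi_j(q)=x_j^{-1}(y(q))$ in the boundary case $\Phi_j(q)=0$: there you only get $x_j(0)\le y(q)$, and you need the paper's convention $x_j^{-1}(x)=\max\{\phi:x_j(\phi)\ge x\}$ (with value $0$ on the empty set) to conclude the formula still holds.
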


\Cref{theorem:contestant-equilibrium-wine22} shows that if all interim allocation functions are strictly decreasing (i.e., for all $j\in [m]$, $w_{j,1}>w_{j,n}$), then a unique contestant equilibrium exists. However, when there is some contest $j$ with $w_{j,1}=w_{j,n}$, the cummulative choice strategy can be non-unique. Despite this, each contestant's utility from any equilibrium within this set remains the same.

\subsection{Model with Disclosed Number of Competitors}\label{subsec: known number of players}
In the model discussed so far, contestants decide the chosen contest and exerted effort $(J_i,e_i)$ simultaneously. In this subsection, we examine a variant of our model where each contestant first choose a contest and then decide the effort after knowing the number of competitors in the same contest, which may better reflect certain real-world scenarios. Specifically, in this variant, the second stage is divided into two substages. In the first substage, each contestant $i\in[n]$ only chooses the contest $J_i$ to participate in. In the second substage, after being informed of the number of participants in contest $J_i$ (denoted by $k_{J_i}:=|I_{J_i}|$), the contestant then decides her effort $e_i$. Note that the skills and quantiles are still private to each contestant.


Compared to the original model, this variant offers greater flexibility by allowing contestants to adjust their efforts based on the number of competitors. However, our following analysis will demonstrate that the equilibrium behavior of contestants remains fundamentally unchanged. This provides the practical implication that disclosing the number of competitors in a contest may not significantly impact the outcome.


In this variant model, similar to the argument in Lemma \ref{lemma:SBNE-deterministic-effort}, randomization over $e_i$ is unnecessary once $q_i$, $J_i$ and $k_{J_i}$ are given. Thus, without loss of generality, we can represent a symmetric equilibrium strategy by a cumulative choice strategy $\vec{\Phi}(q)$, as defined before, and an effort strategy $\vec{\beta}(q)=(\beta_{j,k}(q))_{j\in[m],k\in[n]}$, where each contestant $i$ determines her effort as $e_i=\beta_{J_i,k_{J_i}}(q_i)$.

We begin by characterizing the effort strategy in this variant model. Suppose contestant $i$ chooses contest $j\in[m]$. Under the cumulative choice strategy $\vec{\Phi}(q)$, any other contestant $i'$ will also select contest $j$ with a probability $\Phi_j(1)$. Consequently, the number of competitors (excluding contestant $i$) in contest $j$, namely $k_j-1$, follows a binomial distribution $B(\Phi_j(1),n-1)$. Furthermore, the conditional probability that contestant $i'$ is ranked before contestant $i$ given $J_{i'}=j$ is $\Pr[q_{i'}\leq q_i|J_{i'}=j]={\Phi_j(q_i)}/{\Phi_j(1)}$. Therefore, conditioning on that $k_j=k$, the distribution of $\mathrm{rank}(i,\boldsymbol{J},\boldsymbol{e})-1$ is $B(\frac{\Phi_j(q_i)}{\Phi_j(1)},k-1)$. Similar to \Cref{lemma:contestant-equilibrium-effort-condition}, we can derive the effort strategy as 
\begin{equation}
    \beta_{j,k}(q_i)=\int_{q_i}^1v(t)(-\frac{dx_{\vec{w}_j}^{(k)}(\frac{\Phi(t)}{\Phi(1)})}{dt})dt,\label{eq:betajk}
\end{equation}
where $x_{\vec{w}_j}^{(k)}(\phi)$ is defined as
\[
    x_{\vec{w}_j}^{(k)}(\phi):=\sum_{l=1}^{k}w_{j,l}\binom{k-1}{l-1}\phi^{l-1}(1-\phi)^{k-l}.
\]
Note that $x_{\vec{w}_j}^{(k)}(\frac{\Phi_j(q_i)}{\Phi_j(1)})$ represents the conditional expectation of the prize obtained by contestant $i$ when $J_i=j$ and $k_j=k$. Interestingly, although the value of $x_{\vec{w}_j}^{(k)}(\frac{\Phi_j(q_i)}{\Phi_j(1)})$ varies with $k$, a key observation is that the expected prize amount for contestant $i$ from contest $j$ remains unchanged, which is formally stated in the following lemma:
\begin{lemma}\label{lemma:known-competitor-number-same-prize}
    For each $j \in [m]$, it holds that $\E_{k}[x_{\vec{w}_j}^{(k)}(\frac{\Phi_j(q_i)}{\Phi_j(1)})]$ $=x_{\vec{w}_j}(\Phi_j(q_i))$, where $k-1\sim B(n-1,\Phi_j(1))$.
\end{lemma}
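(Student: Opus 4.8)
The plan is to recognize \Cref{lemma:known-competitor-number-same-prize} as an instance of binomial thinning combined with the law of total expectation, which makes the identity transparent before any algebra. Fix the contest $j$ and the quantile $q_i$, and abbreviate $\phi=\Phi_j(q_i)$, $p=\Phi_j(1)$, and $\psi=\phi/p$. The quantity $x_{\vec{w}_j}^{(k)}(\psi)$ is, by construction, the expected prize of contestant $i$ conditional on exactly $k$ participants being present in contest $j$: given $k$ participants, $i$'s rank minus one is $B(k-1,\psi)$ (with trials listed first, as in the lemma statement), and $x_{\vec{w}_j}^{(k)}(\psi)=\sum_{l=1}^k w_{j,l}\binom{k-1}{l-1}\psi^{l-1}(1-\psi)^{k-l}$ is precisely $\E[w_{j,Y+1}\mid k\text{ participants}]$ for $Y\sim B(k-1,\psi)$. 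Likewise $x_{\vec{w}_j}(\phi)=\E[w_{j,Y+1}]$ for $Y\sim B(n-1,\phi)$. So the lemma is exactly the statement that averaging the conditional expectation over $k$ recovers the unconditional one.

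To execute this probabilistically, I would introduce, for each of the $n-1$ other contestants $i'$, the indicator that $i'$ both selects contest $j$ and ranks ahead of $i$. By the monotone effort strategy guaranteed in \Cref{lemma:contestant-equilibrium-effort-condition} (a lower quantile yields strictly higher effort, with skill-based tie-breaking), $i'$ ranks ahead of $i$ in contest $j$ exactly when $q_{i'}<q_i$, so this indicator has probability $\Pr[J_{i'}=j\wedge q_{i'}<q_i]=\Phi_j(q_i)=\phi$, and these indicators are i.i.d.\ across $i'$. Hence the total number $Y$ of contestants ranked ahead of $i$ is $B(n-1,\phi)$, giving $\E[w_{j,Y+1}]=x_{\vec{w}_j}(\phi)$. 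On the other hand, write $X=k-1$ for the number of the other $n-1$ contestants who pick $j$, so $X\sim B(n-1,p)$; conditioning on $J_{i'}=j$ scales the ``ahead'' probability down to $\psi=\phi/p$, so $Y\mid X\sim B(X,\psi)$ and $\E[w_{j,Y+1}\mid X=k-1]=x_{\vec{w}_j}^{(k)}(\psi)$. The law of total expectation then yields $x_{\vec{w}_j}(\phi)=\E[w_{j,Y+1}]=\E_X\!\big[x_{\vec{w}_j}^{(X+1)}(\psi)\big]$, which is the claimed identity once we set $k=X+1$.

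The only real subtlety — the place I would be most careful — is justifying the conditional step cleanly: that conditioning on the event $\{$exactly $k-1$ of the others chose $j\}$ leaves the ``ranked ahead'' indicators of those $k-1$ contestants i.i.d.\ Bernoulli$(\psi)$. This follows from the exchangeability of the i.i.d.\ pairs $(J_{i'},q_{i'})$ together with $\Pr[q_{i'}<q_i\mid J_{i'}=j]=\Phi_j(q_i)/\Phi_j(1)=\psi$, but it is worth spelling out to avoid the usual conditioning-on-a-count pitfalls. Should one prefer an algebra-only argument, the same identity drops out by substituting the definition of $x_{\vec{w}_j}^{(k)}$ into $\sum_{k}\binom{n-1}{k-1}p^{k-1}(1-p)^{n-k}x_{\vec{w}_j}^{(k)}(\psi)$, cancelling the factors $p^{k-1}$, swapping the order of the $k$- and $l$-summations, applying the subset-of-a-subset identity $\binom{n-1}{k-1}\binom{k-1}{l-1}=\binom{n-1}{l-1}\binom{n-l}{k-l}$, and collapsing the inner sum by the binomial theorem to $(1-\phi)^{n-l}$; what remains is term-by-term $x_{\vec{w}_j}(\phi)$. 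Locating that binomial identity is the crux of the purely computational route.
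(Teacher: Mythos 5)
Your proposal is correct, but your primary argument takes a genuinely different route from the paper's. The paper proves \Cref{lemma:known-competitor-number-same-prize} by pure coefficient matching: it writes out $\sum_{k}\binom{n-1}{k-1}p^{k-1}(1-p)^{n-k}x^{(k)}_{\vec{w}_j}(q/p)$, fixes a prize index $l$, and shows via the expansion $(1-q)^{n-l}=\big((p-q)+(1-p)\big)^{n-l}$ together with the identity $\binom{n}{n-l}\binom{n-l}{n-k}=\binom{n}{n-k}\binom{k}{l}$ that the coefficient of each $w_{j,l+1}$ agrees on both sides --- exactly the ``algebra-only'' alternative you sketch in your last paragraph, so that sketch is essentially the paper's proof run in the reverse direction. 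Your main argument instead observes that $x^{(k)}_{\vec{w}_j}$ and $x_{\vec{w}_j}$ are by definition expectations of $w_{j,Y+1}$ over binomial ranks, so the lemma reduces to the binomial thinning fact that a $B(X,\phi/p)$ mixture over $X\sim B(n-1,p)$ is $B(n-1,\phi)$; this is cleaner, explains \emph{why} the identity holds, and correctly isolates the one delicate step (that conditioning on the count of contestants choosing $j$ leaves their ``ranked ahead'' indicators i.i.d.\ Bernoulli$(\phi/p)$, which follows from independence across contestants). Two small remarks: the appeal to the equilibrium effort characterization is not really needed, since the lemma is a statement about the functions $x^{(k)}_{\vec{w}_j}$ and $x_{\vec{w}_j}$ themselves rather than about realized ranks, and the per-contestant indicator construction already gives the thinning claim directly; and neither you nor the paper addresses the degenerate case $\Phi_j(1)=0$, where $\phi/p$ is undefined but both sides trivially equal $w_{j,1}$. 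What the paper's computation buys is self-containedness with no probabilistic scaffolding; what yours buys is a one-line conceptual proof once the thinning lemma is granted.
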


By \Cref{lemma:known-competitor-number-same-prize} and \Cref{eq:betajk}, we know that when $k-1\sim B(n-1,\Phi_j(1))$, it holds that
\begin{align*}
    \E_k[\beta_{j,k}(q_i)]=&\int_{q_i}^1v(t)(-\frac{d\E_k[x_{\vec{w}_j}^{(k)}(\frac{\Phi(t)}{\Phi(1)})]}{dt})dt
    \\=&\int_{q_i}^1v(t)(-\frac{d x_{\vec{w}_j}(\Phi(t))}{dt})dt =\beta_j(q_i).
\end{align*}
This implies that in the variant model, given $\vec{\Phi}(q)$, the contestant $i$'s expected utility from choosing contest $j$ in the first substage is exactly $v(q_i)x_j(\Phi_j(q_i))-\beta_j(q_i)$, which matches the expected utility in the original model. Consequently, the cumulative choice strategy in an sBNE is still characterized by \Cref{theorem:contestant-equilibrium-wine22} and \Cref{theorem:contestant-equilibrium-wine22}. Moreover, the utility of designers are also unchanged, for either effort objective or participation objective, which suggests that our subsequent results on prize structure design are also applicable to this variant model.



\section{Optimal Prize Structure}\label{sec:designer}
In the last section, we have characterized the equilibrium of contestants in the second stage given all prize structures. 
In this section, we move back to the first stage of the game model, and explore the optimal strategies for designers (i.e., prize structure design) under two kinds of different objectives. 

Before examining the optimal prize structure, we introduce some symbols to help us simplify the objective functions. For a contest designer $j\in[m]$, let $\Phi_j^*(q;x_1(\phi), \cdots, $ $x_m(\phi))$ denote the $j$-th component of the cumulative choice strategy in the contestant equilibrium as given by \Cref{theorem:contestant-equilibrium-wine22}. Note that $\Phi_j^*(q;x_1(\phi),\cdots,x_m(\phi))$ is uniquely determined as long as $x_j(\phi)$ is strictly decreasing (or equivalently, $w_{j,1}>w_{j,n}$). Specifically, for all $q\in[0,1]$, 

$$\Phi_j^*(q;x_1(\phi),\cdots,x_m(\phi))=x_j^{-1}(Q^{-1}(q)).$$

Recall that $Q^{-1}(q)$ is the inverse function of $Q(x)=\sum_{j'\in[m]}x_{j'}^{-1}(x)=x_j^{-1}(x)+\sum_{j'\neq j}x_{j'}^{-1}(x).$ When analyzing designer $j$'s strategy, the effect of other contests is captured by the term $\sum_{j'\neq j}x_{j'}^{-1}(x)$. For convenience, we denote $x_{-j}^{-1}(x):=\sum_{j'\neq j}x_{j'}^{-1}(x)$. We use the notation $\Phi_j^*(q;x_j(\phi),x_{-j}^{-1}(x))$ or simply $\Phi_j^*(q)$ to refer to $\Phi_j^*(q;x_1(\phi),\cdots,x_m(\phi))$ when there is no ambiguity.

Remark that a subtlety arises when $w_{j,1}=w_{j,n}$, meaning that $x_j(\phi)$ is constant for $\phi\in[0,1]$, which leads to potential non-uniqueness of $\Phi_j^*(\theta_j;x_j(\phi),x_{-j}^{-1}(x))$ by \Cref{theorem:contestant-equilibrium-wine22}. In such case, a reasonable selection rule for contestant equilibrium is equal division: If $x_j(\phi)$ and $x_{j'}(\phi)$ are constant and equal for some $j, j'\in[m]$, we assume $\Phi_j^*(q;x_1(\phi),\cdots,x_m(\phi))=\Phi_{j'}^*(q;x_1(\phi),\cdots,x_m(\phi))$ for all $q\in[0,1]$. In other words, identical contests always attract contestants equally. 

To characterize the optimal contests, we define the concept of a simple contest, as studied in the literature \cite{EGG21}.
\begin{definition}
    A prize structure $\vec{w}_j$ is called a simple contest, if there exists $k\in[n]$, such that the prize is equally allocated to the first $k$ contestants, i.e., $w_{j,1}=\cdots=w_{j,k}>0$, and $w_{j,k+1}=\cdots=w_{j,n}=0$.
    
    Denote $\vec{w}_j^{(k,T)}$ by the simple contest with $k$ positive prizes dividing the budget of $T$, i.e., $w_{j,1}^{(k,T)}=\cdots=w_{j,k}^{(k,T)}=\frac{T}{k}$.
\end{definition}

Consider the simple contest having $k$ positive prizes with a total budget of $1$. We denote its corresponding interim allocation function by

$$\xi_k(\phi):=x_{\vec{w}_j^{(k,1)}}(\phi)=\frac1k\sum_{l=1}^k\binom{n-1}{l-1}\phi^l(1-\phi)^{n-l}.$$


\subsection{Effort Objective}
When contest designer $j$'s utility is determined by an effort objective function specified by $\vec{\alpha}_j$, her utility depends on the efforts of participants at each rank. We first derive a convenient expression for designer $j$'s utility.
\begin{lemma}\label{lemma:designer-utility-effort-objective}
    Given $x_j(\phi),x_{-j}^{-1}(x)$ and $\vec{\alpha}_j$, denote designer $j$'s utility as $\hat{R}_j(x_j(\phi),$ $x_{-j}^{-1}(x),\vec{\alpha}_j)$, then we have
    \begin{align*}
        \hat{R}_j&(x_j(\phi),x_{-j}^{-1}(x),\vec{\alpha}_j)\\
    =&\int_{0}^{+\infty}v(x_j^{-1}(x)+x_{-j}^{-1}(x))G(x_j^{-1}(x);\vec{\alpha}_j)dx.
    \end{align*}
    Here we define $v(q)=0$ for all $q>1$, and define 
    $$G(\phi;\vec{\alpha}_j):=n\sum_{k=1}^n\alpha_{j,k}\int_0^{\phi} g_k(t)dt$$
    with $g_k(\phi):=\Pr[\mathrm{rank}(i,\boldsymbol{J},\boldsymbol{e})=k|\Phi_j^*(q_i)=\phi]=\binom{n-1}{k-1}\phi^{k-1}(1-\phi)^{n-k}$.
\end{lemma}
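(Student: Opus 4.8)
The plan is to compute $R_j$ by collapsing the rank-weighted sum over participants to a single generic contestant, and then converting the resulting $q$-integral into an $x$-integral by a change of variables driven by the equilibrium characterization. First I would rewrite the objective as a per-contestant contribution: since $\sum_{k=1}^n \alpha_{j,k} e^{(k)}_j = \sum_{i:J_i=j}\alpha_{j,\mathrm{rank}(i)}e_i$ (empty ranks contribute $0$) and all contestants are symmetric, $R_j = n\,\E[\mathbf{1}[J_i=j]\,\alpha_{j,\mathrm{rank}(i)}\,e_i]$ for a generic $i$. Conditioning on $q_i$ and on $J_i=j$, the effort is the deterministic value $\beta_j(q_i)$ of \Cref{lemma:contestant-equilibrium-effort-condition}, and the number of competitors ahead of $i$ satisfies $\mathrm{rank}(i)-1\sim B(n-1,\Phi_j^*(q_i))$, so $\E[\alpha_{j,\mathrm{rank}(i)}\mid J_i=j,q_i]=\sum_{k=1}^n\alpha_{j,k}g_k(\Phi_j^*(q_i))$. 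Writing $\Phi_j'(q)$ for the choice density $\pi_j(q)$, this yields
\[
R_j=n\int_0^1 \Phi_j'(q)\,\beta_j(q)\sum_{k=1}^n\alpha_{j,k}g_k(\Phi_j(q))\,dq ,
\]
where $\Phi_j$ denotes the equilibrium cumulative choice strategy, and I would then note that $G'(\phi)=n\sum_k\alpha_{j,k}g_k(\phi)$ by the definition of $G$, so the sum equals $\tfrac1n G'(\Phi_j(q))$.

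The second step removes the nested integral hidden inside $\beta_j$. Since $\Phi_j'(q)G'(\Phi_j(q))=\frac{d}{dq}G(\Phi_j(q))$, the integral is $\int_0^1\beta_j(q)\,d\big[G(\Phi_j(q))\big]$, and I would either integrate by parts or apply Fubini to the double integral, using $\beta_j(1)=0$ and $G(\Phi_j(0))=G(0)=0$ to kill the boundary terms. Differentiating the closed form of $\beta_j$ gives $\beta_j'(q)=v(q)\,x_j'(\Phi_j(q))\,\Phi_j'(q)$, and substituting produces the intermediate expression
\[
R_j=\int_0^1 v(q)\,\big(-x_j'(\Phi_j(q))\big)\,\Phi_j'(q)\,G(\Phi_j(q))\,dq .
\]

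The final step is the change of variables $x=x_j(\Phi_j(q))$. Because $x_j$ is strictly decreasing and continuous, $\big(-x_j'(\Phi_j(q))\big)\Phi_j'(q)\,dq=-dx$ and $\Phi_j(q)=x_j^{-1}(x)$, so $G(\Phi_j(q))=G(x_j^{-1}(x))$. To rewrite $v(q)$ I would invoke the equilibrium: at every $q$ with $\Phi_j'(q)>0$ (the only points contributing), \Cref{thm:contestant-equilibrium-choice-condition} forces $x=x_j(\Phi_j(q))=\max_{j'}x_{j'}(\Phi_{j'}(q))=Q^{-1}(q)$, while \Cref{theorem:contestant-equilibrium-wine22} gives $\Phi_{j'}(q)=x_{j'}^{-1}(x)$ for every $j'$; summing and using $\sum_{j'}\Phi_{j'}(q)=q$ yields $Q(x)=\sum_{j'}x_{j'}^{-1}(x)=q$, hence $v(q)=v(x_j^{-1}(x)+x_{-j}^{-1}(x))$. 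This turns the integral into $\int v(x_j^{-1}(x)+x_{-j}^{-1}(x))\,G(x_j^{-1}(x))\,dx$ over the range of $x=x_j(\Phi_j(q))$, and it remains to extend the limits to $[0,+\infty)$.

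I expect this extension to be the main obstacle, since $x_j^{-1}$, $Q$ and $Q^{-1}$ are monotone (generalized) inverses that may have flat segments and jumps, and one must show the added region contributes nothing. For $x>x_j(0)=w_{j,1}$ this is immediate because $x_j^{-1}(x)=0$ and $G(0)=0$; for small $x$ one argues $Q(x)>1$, so that $v$ vanishes under the convention $v(q)=0$ for $q>1$. I would verify the latter using that $x_j^{-1}$ is strictly decreasing on $(w_{j,n},w_{j,1})$, which keeps $Q$ strictly decreasing there and pins down where $Q(x)=1$. The tightest point is the degenerate regime where a single contest attracts (almost) all contestants, so that $Q$ is flat at value $1$ on a nondegenerate interval on which $x_j^{-1}\equiv 1$; there the integrand equals $v(1)G(1)$, and the extension is clean precisely when this boundary contribution vanishes (e.g.\ when the skill support reaches $0$, so $v(1)=0$). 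This is the case I would treat most carefully to confirm the stated closed form.
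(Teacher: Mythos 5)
Your proposal is correct and follows essentially the same route as the paper's proof: reduce to a per-contestant contribution $n\,\E[\mathbf{1}[J_i=j]\,\alpha_{j,\mathrm{rank}(i)}\,e_i]$, swap the order of integration hidden in $\beta_j$ (your integration by parts is the paper's Fubini step), and change variables using the equilibrium identity $x_j(\Phi_j^*(q))=Q^{-1}(q)$ to land on the $x$-integral. If anything, your treatment of extending the limits to $[0,+\infty)$ is more careful than the paper's, which performs the substitution $x=Q^{-1}(s)$ without comment; the edge case you flag --- $Q$ flat at value $1$ on a nondegenerate interval (e.g.\ a single contest with $w_{j,n}>0$) combined with $v(1)>0$, where the added region contributes $v(1)G(1)$ per unit length --- is a genuine boundary subtlety that the paper's proof silently glosses over.
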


We say that a coefficient vector $\vec{\alpha}_j$ is \emph{weight-monotone}, if $G(\phi;\vec{\alpha}_j)$ is non-decreasing and concave in $\phi\in[0,1]$, which is equivalent to that  $\sum_{k=1}^n\alpha_{j,k}g_k(\phi)$ is non-negative and non-increasing in $\phi$. The following lemma provides a sufficient condition for weight-monotonicity.

\begin{lemma}\label{lemma: sufficient condition on weight-monotone}
    If $\vec{\alpha}_j$ is non-increasing and non-negative, i.e., $\alpha_{j,1}\geq \alpha_{j,2}\geq\cdots\geq\alpha_{j,n}\geq 0$, then $\vec{\alpha}_j$ is weight-monotone.
\end{lemma}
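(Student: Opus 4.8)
The plan is to work directly with the equivalent reformulation stated just before the lemma: $\vec{\alpha}_j$ is weight-monotone if and only if $S(\phi):=\sum_{k=1}^n\alpha_{j,k}g_k(\phi)$ is non-negative and non-increasing on $[0,1]$. Non-negativity is immediate, since under the hypothesis each $\alpha_{j,k}\ge 0$ and each $g_k(\phi)=\binom{n-1}{k-1}\phi^{k-1}(1-\phi)^{n-k}\ge 0$. The entire content of the lemma therefore lies in showing that $\alpha_{j,1}\ge\cdots\ge\alpha_{j,n}\ge 0$ forces $S$ to be non-increasing.

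To establish this I would differentiate $S$ and exploit the recursive structure of the binomial weights. Writing $N=n-1$ and $c_l:=\alpha_{j,l+1}$ for $l=0,\dots,N$, we have $g_{l+1}(\phi)=B_l(\phi)$ with $B_l(\phi):=\binom{N}{l}\phi^l(1-\phi)^{N-l}$, so that $S(\phi)=\sum_{l=0}^N c_l B_l(\phi)$ where $c_0\ge c_1\ge\cdots\ge c_N\ge 0$. A short computation yields the standard derivative identity
\begin{equation*}
    B_l'(\phi)=N\bigl(\tilde{B}_{l-1}(\phi)-\tilde{B}_l(\phi)\bigr),\qquad \tilde{B}_m(\phi):=\tbinom{N-1}{m}\phi^m(1-\phi)^{N-1-m},
\end{equation*}
adopting the conventions $\tilde{B}_{-1}\equiv\tilde{B}_N\equiv 0$. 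Substituting this into $S'(\phi)=\sum_l c_l B_l'(\phi)$ and re-indexing (a discrete summation by parts) collapses the telescoping structure into
\begin{equation*}
    S'(\phi)=N\sum_{m=0}^{N-1}\bigl(c_{m+1}-c_m\bigr)\tilde{B}_m(\phi).
\end{equation*}
Since $\vec{\alpha}_j$ is non-increasing, every coefficient $c_{m+1}-c_m=\alpha_{j,m+2}-\alpha_{j,m+1}\le 0$, while each $\tilde{B}_m(\phi)\ge 0$; hence $S'(\phi)\le 0$ on $[0,1]$, which is exactly the required monotonicity, and the two properties together give weight-monotonicity.

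The argument is largely routine, so I do not expect a genuine obstacle; the only place demanding care is the bookkeeping in the last two displays — getting the derivative identity right (in particular the factor $N$ and the index shift from $N$-parameter to $(N-1)$-parameter binomial weights) and handling the boundary terms $\tilde{B}_{-1}$ and $\tilde{B}_N$ in the re-indexing so that no spurious endpoint contributions survive. As a cross-check and alternative route, one can instead note that $S(\phi)=\E[\alpha_{j,L+1}]$ with $L\sim B(n-1,\phi)$: because the family $B(n-1,\phi)$ is stochastically increasing in $\phi$ (provable by coupling the $n-1$ Bernoulli trials) and the map $l\mapsto\alpha_{j,l+1}$ is non-increasing, the expectation is non-increasing in $\phi$, recovering the same conclusion via first-order stochastic dominance without any explicit differentiation.
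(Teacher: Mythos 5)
Your proof is correct. The core of your main argument is the same summation-by-parts idea the paper uses, but applied one level down: the paper writes $\sum_{k=1}^n\alpha_{j,k}g_k(\phi)=\sum_{k=1}^n k(\alpha_{j,k}-\alpha_{j,k+1})\xi_k(\phi)$ via Abel summation (using $k\xi_k(\phi)=\sum_{l\leq k}g_l(\phi)$) and then simply cites the already-established fact that each simple-contest allocation function $\xi_k$ is non-increasing, whereas you differentiate $S$ first, invoke the Bernstein derivative identity, and perform the summation by parts on $S'$. Differentiating the paper's identity recovers exactly your display for $S'(\phi)$, so the two computations are equivalent; the paper's version is shorter because it reuses $\xi_k$, while yours is self-contained and your handling of the boundary terms $\tilde{B}_{-1}\equiv\tilde{B}_N\equiv 0$ is correct. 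Your closing observation --- that $S(\phi)=\E[\alpha_{j,L+1}]$ with $L\sim B(n-1,\phi)$, so monotonicity in $\phi$ follows from first-order stochastic dominance of the binomial family together with the monotonicity of $l\mapsto\alpha_{j,l+1}$ --- is a genuinely different and arguably cleaner route that avoids all index bookkeeping; the paper does not take it.
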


Notably, by \Cref{lemma: sufficient condition on weight-monotone}, weight-monotone effort objectives encompass many widely-studied objectives in contest theory, such as total effort \cite{MS01}, maximum effort \cite{CHS19}, or total effort of the top $k$ contestants \cite{AS09}.

We now analyze the optimal prize structure $\vec{w}_j$ (i.e., the best response) given $x_{-j}^{-1}(x)$, utilizing the single-crossing property \cite{MOLDOVANU200670} of the interim allocation functions induced by the prize structure.
\begin{definition}\label{def:single-crossing}
For two functions $f(\phi)$ and $g(\phi)$ on $[0,1]$, we say $f(\phi)$ is single-crossing with respect to $g(\phi)$, if there exists some point $\phi_0\in[0,1]$, such that $f(\phi)\geq g(\phi)$ for all $\phi\in[0,\phi_0)$, and $f(\phi)\leq g(\phi)$ for all $\phi\in(\phi_0,1]$.

For two interim allocation functions $x_j(\phi)$ and $\tilde{x}_j(\phi)$, we say $x_j(\phi)$ single-crossing-dominates $\tilde{x}_j(\phi)$, if the following three conditions hold:
\begin{enumerate}
    \item $\int_0^1x_j(\phi)d\phi\geq \int_0^1\tilde{x}_j(\phi)d\phi$.
    \item $x_j(\phi)$ is single-crossing with respect to $\tilde{x}_j(\phi)$.
    \item $\frac{-dx_j(\phi)}{d\phi}$ is single-crossing with respect to $\frac{-d\tilde{x}_j(\phi)}{d\phi}$.
\end{enumerate}
\end{definition}

Intuitively, if $x_j(\phi)$ is single-crossing with respect to $\tilde{x}_j(\phi)$, then $x_j(\phi)$ offers a prize structure that is more attractive to high-skill contestants, but potentially less attractive to low-skill contestants, compared to $\tilde{x}_j(\phi)$. Furthermore, if $x_j(\phi)$ single-crossing-dominates $\tilde{x}_j(\phi)$, the former prize structure will guarantee to outperform the latter regardless of the strategies of other designers, as stated in the following theorem.

\begin{theorem}\label{thm: compare under single-crossing}
    Assuming $\vec{\alpha}_j$ is weight-monotone, for any two prize structures $\vec{w}_j$ and $\vec{w}'_j$ such that $x_{\vec{w}_j}(\phi)$ single-crossing-dominates $x_{\vec{w}_j'}(\phi)$, it holds for any $x_{-j}^{-1}(x)$ that
    $$\hat{R}_j(x_{\vec{w}_j}(\phi),x_{-j}^{-1}(x),\vec{\alpha}_j)\geq \hat{R}_j(x_{\vec{w}_j'}(\phi),x_{-j}^{-1}(x),\vec{\alpha}_j).$$
\end{theorem}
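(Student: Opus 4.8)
The plan is to start from the closed form for the designer's payoff in \Cref{lemma:designer-utility-effort-objective} and reduce the comparison to a family of ``total effort of the top positions'' comparisons, one per threshold, where single-crossing dominance can be exploited directly. Write $s(\cdot):=x_{-j}^{-1}(\cdot)$ and abbreviate $x:=x_{\vec w_j}$, $\tilde x:=x_{\vec w_j'}$, so the lemma (with integration variable renamed $y$) reads $\hat R_j=\int_0^\infty v(x^{-1}(y)+s(y))\,G(x^{-1}(y))\,dy$. Using $G(\phi)=\int_0^\phi G'(t)\,dt$ (recall $G(0)=0$) together with Fubini, I would first establish the identity
\[
\hat R_j(x,s,\vec\alpha_j)=\int_0^1 G'(\phi)\,A_x(\phi)\,d\phi,\qquad A_x(\phi):=\int_0^{x(\phi)} v\bigl(x^{-1}(y)+s(y)\bigr)\,dy,
\]
where $A_x(\phi)$ is exactly the equilibrium effort of the contestant sitting at cumulative position $\phi$ in contest $j$, and $G'(\phi)=n\sum_k\alpha_{j,k}g_k(\phi)$ is nonnegative and non-increasing precisely because $\vec\alpha_j$ is weight-monotone.

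Next I would linearize the weight. Since $G'$ is nonnegative and non-increasing on $[0,1]$, it is the ``survival function'' of a nonnegative measure: $G'(\phi)=\int_{[0,1]}\mathbf 1[\phi\le c]\,d\mu(c)$ for some $\mu\ge0$. Substituting and swapping the order of integration gives $\hat R_j(x,s,\vec\alpha_j)=\int_{[0,1]}\mathcal A_x(c)\,d\mu(c)$, where $\mathcal A_x(c):=\int_0^c A_x(\phi)\,d\phi=\int_0^\infty v\bigl(x^{-1}(y)+s(y)\bigr)\min\{c,x^{-1}(y)\}\,dy$ is the total equilibrium effort of the top-$c$ mass of participants in contest $j$. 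Because $\mu\ge0$, the theorem follows once I prove the pointwise domination $\mathcal A_x(c)\ge\mathcal A_{\tilde x}(c)$ for every threshold $c\in[0,1]$; note that $c=1$ is exactly the classical total-effort objective. This reduction cleanly separates the role of the weight-monotone coefficients (absorbed into $\mu$) from the structural comparison of the two prize schedules.

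It remains to prove $\mathcal A_x(c)\ge\mathcal A_{\tilde x}(c)$, which is where the three conditions of \Cref{def:single-crossing} enter and which I expect to be the main obstacle. I would work in the prize-value variable $y$, where the competition term $s(y)$ is common to both schedules. Condition~2 (single-crossing of $x$ with respect to $\tilde x$) transfers to the inverses: there is a single value $y_0=x(\phi_0)=\tilde x(\phi_0)$ with $x^{-1}(y)\ge \tilde x^{-1}(y)$ for $y>y_0$ and $x^{-1}(y)\le\tilde x^{-1}(y)$ for $y<y_0$. The difficulty is that the integrand $v(a+s(y))\min\{c,a\}$ is a product of the marginal-skill factor $v$, decreasing in the position $a$, and the quantity factor $\min\{c,a\}$, increasing in $a$; hence on each side of $y_0$ the two factors push the sign of $\mathcal A_x-\mathcal A_{\tilde x}$ in opposite directions. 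Economically, making contest $j$ more top-heavy simultaneously attracts stronger types (helping) and thins the effort/competition faced at each prize (hurting), and these effects must be balanced rather than compared pointwise.

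To close this, I would integrate the difference $\mathcal A_x(c)-\mathcal A_{\tilde x}(c)$, split the range at $y_0$, and integrate by parts so that the comparison is expressed through $x(\phi)-\tilde x(\phi)$ and through $\bigl(-x'(\phi)\bigr)-\bigl(-\tilde x'(\phi)\bigr)$ weighted by nonnegative monotone kernels. Condition~1 supplies the aggregate (area/budget) inequality $\int_0^1 x\ge\int_0^1\tilde x$ that lets the favorable high-type region dominate, and condition~3 (single-crossing of $-x'$ with respect to $-\tilde x'$) controls the prize density that governs the induced efforts on the unfavorable region. The workhorse throughout is the elementary single-crossing integration lemma: if $f$ is nonnegative-then-nonpositive with $\int_0^1 f\ge0$ and $\omega\ge0$ is non-increasing, then $\int_0^1\omega f\ge0$, which follows by comparing $\omega$ to its value at the crossing point. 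I expect the heaviest part of the argument to be the bookkeeping in this final step: matching each single-crossing condition to the correct kernel after integration by parts, and handling the boundary contribution from the range $y\in[0,x(1)]$ when $x(1)>0$.
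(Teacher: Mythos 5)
Your opening reduction is correct and is in fact a cleaner way to dispose of the weight-monotone coefficients than the paper's own device: starting from \Cref{lemma:designer-utility-effort-objective}, the Fubini step giving $\hat R_j=\int_0^1 G'(\phi)A_x(\phi)\,d\phi$, the identification of $A_x(\phi)$ with the equilibrium effort at cumulative position $\phi$, and the layer-cake decomposition of the non-increasing $G'$ into indicators $\mathbf 1[\phi\le c]$ are all sound, and they validly reduce the theorem to the family of inequalities $\mathcal A_x(c)\ge\mathcal A_{\tilde x}(c)$. (The paper instead keeps $G$ intact and uses concavity to show $G(\phi;\vec\alpha_j)/\phi$ is non-increasing.) The problem is that the proposal stops exactly where the theorem's content begins. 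Already for $c=1$ the inequality $\mathcal A_x(1)\ge\mathcal A_{\tilde x}(1)$ is the full total-effort comparison, and your plan for it --- ``split at $y_0$, integrate by parts, apply the single-crossing integration lemma, do the bookkeeping'' --- is not an argument: the difference of the two integrands is not single-crossing in $y$, and there is no fixed monotone kernel multiplying a single-crossing difference, because the ``skill'' factor $v(x^{-1}(y)+x_{-j}^{-1}(y))$ and the ``quantity'' factor $\min\{c,x^{-1}(y)\}$ both move with $x^{-1}$ and in opposite directions, exactly the tension you identify. No single application of the elementary lemma resolves it, and the proposal does not exhibit the decomposition that would.

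The missing idea, which is how the paper actually closes the proof, is to reparametrize both payoffs by the cumulative quantity $S_{\vec w_j}(\phi)=\int_0^{\phi}t\,(-x_{\vec w_j}'(t))\,dt$ rather than by $y$ or $\phi$, writing $\hat R_j=\int V(\phi,x_{\vec w_j}(\phi))\,dS_{\vec w_j}(\phi)$ with $V(\phi,x)=v(\phi+x_{-j}^{-1}(x))\,G(\phi;\vec\alpha_j)/\phi$, which is non-increasing in $\phi$ and non-decreasing in $x$. Two pointwise facts are then established: (i) $S_{\vec w_j}(\phi)\ge S_{\vec w_j'}(\phi)$ for all $\phi$, from condition 3 of \Cref{def:single-crossing} together with $S_{\vec w_j}(1)\ge S_{\vec w_j'}(1)$ (which follows from conditions 1 and 2 via integration by parts); and (ii) $\int_X^{\infty}x_{\vec w_j}^{-1}(y)\,dy\ge\int_X^{\infty}x_{\vec w_j'}^{-1}(y)\,dy$ for all $X$, from conditions 1 and 2. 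Since a change of variables gives $S_{\vec w_j}(\phi)=\int_{x_{\vec w_j}(\phi)}^{\infty}x_{\vec w_j}^{-1}(y)\,dy$, facts (i) and (ii) combine to show that whenever $S_{\vec w_j}(\phi_1)=S_{\vec w_j'}(\phi_2)$ one has both $\phi_1\le\phi_2$ and $x_{\vec w_j}(\phi_1)\ge x_{\vec w_j'}(\phi_2)$, so the integrand for $\vec w_j$ dominates at every matched $S$-level; with $S_{\vec w_j}(1)\ge S_{\vec w_j'}(1)$ and $V\ge0$ this yields the theorem. Your decomposition over $c$ would slot into this scheme verbatim (replace $G(\phi;\vec\alpha_j)/\phi$ by $\min\{c,\phi\}/\phi$, also non-increasing), but without steps (i), (ii) and the matched-level comparison the proof is incomplete.
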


According to Theorem \ref{thm: compare under single-crossing}, if $x_{\vec{w}_j}(\phi)$ single-crossing-dominates $x_{\vec{w}_j'}(\phi)$, then $\vec{w}_j$ will yield higher utility for designer $j$. This implies that if we can construct a $\vec{w}_j$ such that $x_{\vec{w}_j}(\phi)$ single-crossing-dominates any $x_{\vec{w}_j'}(\phi)$, then $\vec{w}_j$ is a dominant strategy and thus the optimal prize structure. Based on this argument, we show that the winner-take-all prize structure is optimal, as demonstrated in Theorem \ref{thm: winner-take-all dominant}.
\begin{theorem}\label{thm: winner-take-all dominant}
    Assuming $\vec{\alpha}_j$ is weight-monotone, then the winner-take-all prize structure $\vec{w}_j^*=\vec{w}_j^{(1,T_j)}$ is a dominant strategy for designer $j$. That is, for any $x_{-j}^{-1}(x)$ and any feasible prize structure $\vec{w}'_j$, it holds that
    $\hat{R}_j(x_{\vec{w}_j^*}(\phi),x_{-j}^{-1}(x),\vec{\alpha}_j$ $\geq \hat{R}_j(x_{\vec{w}_j'}(\phi),x_{-j}^{-1}(x),\vec{\alpha}_j).$
\end{theorem}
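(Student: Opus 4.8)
The plan is to reduce everything to \Cref{thm: compare under single-crossing}: it suffices to show that the winner-take-all interim allocation function single-crossing-dominates $x_{\vec{w}_j'}(\phi)$ for \emph{every} feasible competing prize structure $\vec{w}_j'$, after which the weight-monotonicity of $\vec{\alpha}_j$ and that theorem immediately give the claimed utility inequality for all $x_{-j}^{-1}(x)$. First I would record the explicit forms. For winner-take-all, $w_{j,1}^*=T_j$ and $w_{j,k}^*=0$ for $k\ge 2$, so $x_{\vec{w}_j^*}(\phi)=T_j(1-\phi)^{n-1}$. Summation by parts on the Bernstein expansion of a general interim allocation function yields
$$-x_{\vec{w}_j'}'(\phi)=(n-1)\sum_{k=1}^{n-1}(w_{j,k}'-w_{j,k+1}')\binom{n-2}{k-1}\phi^{k-1}(1-\phi)^{n-1-k},$$
whose coefficients $w_{j,k}'-w_{j,k+1}'\ge 0$ are non-negative because prizes are non-increasing; specializing gives $-x_{\vec{w}_j^*}'(\phi)=(n-1)T_j(1-\phi)^{n-2}$.

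Next I would verify the three conditions of \Cref{def:single-crossing} with $f=x_{\vec{w}_j^*}$ and $g=x_{\vec{w}_j'}$. Condition~1 (integral dominance) follows from the identity $\int_0^1\binom{n-1}{k-1}\phi^{k-1}(1-\phi)^{n-k}\,d\phi=\tfrac1n$, so $\int_0^1 x_{\vec{w}_j'}(\phi)\,d\phi=\tfrac1n\sum_k w_{j,k}'\le\tfrac{T_j}{n}=\int_0^1 x_{\vec{w}_j^*}(\phi)\,d\phi$ by the budget constraint. For conditions~2 and~3 the key device is the substitution $t=\phi/(1-\phi)$, which is strictly increasing on $[0,1)$. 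Dividing $x_{\vec{w}_j'}(\phi)$ by $(1-\phi)^{n-1}$ turns it into the polynomial $P(t)=\sum_k w_{j,k}'\binom{n-1}{k-1}t^{k-1}$ with non-negative coefficients, hence non-decreasing in $t$, whereas $x_{\vec{w}_j^*}(\phi)/(1-\phi)^{n-1}\equiv T_j$ is constant. Thus the sign of $x_{\vec{w}_j^*}(\phi)-x_{\vec{w}_j'}(\phi)$ matches that of $T_j-P(t)$, which is non-increasing in $t$ and therefore in $\phi$; since $T_j-P(0)=T_j-w_{j,1}'\ge 0$, it changes sign at most once, from $\ge 0$ to $\le 0$, which is exactly condition~2. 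Applying the identical argument to the derivatives, now dividing by $(1-\phi)^{n-2}$ so that the competing term becomes a non-negative-coefficient polynomial in $t$ and the winner-take-all term becomes the constant $(n-1)T_j$, and using $T_j-(w_{j,1}'-w_{j,2}')\ge 0$ at $t=0$, gives condition~3.

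Having established single-crossing-dominance for every feasible $\vec{w}_j'$, \Cref{thm: compare under single-crossing} yields $\hat{R}_j(x_{\vec{w}_j^*}(\phi),x_{-j}^{-1}(x),\vec{\alpha}_j)\ge \hat{R}_j(x_{\vec{w}_j'}(\phi),x_{-j}^{-1}(x),\vec{\alpha}_j)$ for all $x_{-j}^{-1}(x)$, i.e., winner-take-all is dominant. The only points requiring care are the boundary $\phi=1$, where the division by $(1-\phi)$ is invalid (handled by working on $[0,1)$ and extending the inequalities by continuity), and the degenerate case where $\vec{w}_j'$ is itself winner-take-all with a smaller budget, in which $P$ is constant and all the relevant differences are one-signed so single-crossing holds trivially. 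I expect the main obstacle to be conceptual rather than computational: the substitution $t=\phi/(1-\phi)$ is what simultaneously linearizes both single-crossing requirements, reducing each to ``$T_j$ minus a non-decreasing polynomial crosses zero once''; without spotting it, directly tracking the sign changes of the Bernstein-polynomial differences in conditions~2 and~3 would be substantially more delicate.
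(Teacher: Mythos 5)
Your proposal is correct and follows essentially the same route as the paper's proof: reduce to \Cref{thm: compare under single-crossing}, verify the integral condition from the budget constraint, and establish the two single-crossing conditions by observing that the ratio of the competing allocation (or its negative derivative) to the winner-take-all one is a non-negative-coefficient polynomial in $\phi/(1-\phi)$, hence monotone, with the correct sign at $\phi=0$. The substitution $t=\phi/(1-\phi)$ you highlight is exactly the ratio computation the paper performs.
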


Since the winner-take-all prize structure is a dominant strategy for any designer, it follows as a corollary that it forms an equilibrium among designers. Combining this with the corresponding contestant equilibrium, we can obtain an SPE of our two-stage game model.
\begin{corollary}\label{cor:effort-objective-designer-SPE}
    Assuming $\vec{\alpha}_j$ is weight-monotone for all designers $j\in[m]$, there is an SPE where
    every designer implements the winner-take-all prize structure with all budget.
\end{corollary}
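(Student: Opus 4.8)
The plan is to exhibit a single strategy profile for the two-stage game and verify the two defining conditions of a subgame perfect equilibrium (SPE): that contestants play an sBNE in \emph{every} second-stage subgame, and that, given the resulting continuation play, no designer gains from a unilateral deviation in the first stage. The key input is \Cref{thm: winner-take-all dominant}, which has already established that winner-take-all maximizes a designer's payoff against any configuration of the other contests; the present statement is essentially the observation that a profile of dominant strategies is automatically a (subgame perfect) equilibrium, so most of the work lies in setting up the profile precisely.

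First I would fix the candidate profile. In the first stage every designer $j\in[m]$ posts $\vec{w}_j^*=\vec{w}_j^{(1,T_j)}$. For the second stage I must specify continuation play after \emph{arbitrary} first-stage announcements $(\vec{w}_1,\dots,\vec{w}_m)$, not merely the on-path ones: given any such announcements, let the contestants adopt the sBNE whose cumulative choice strategy is the one characterized in \Cref{theorem:contestant-equilibrium-wine22} and whose effort strategy is the induced $\vec{\beta}(q)$ from \Cref{lemma:contestant-equilibrium-effort-condition}, applying the equal-division selection rule to pin down a specific equilibrium in the non-unique case (some contest with $w_{j,1}=w_{j,n}$). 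Since \Cref{theorem:contestant-equilibrium-wine22} guarantees existence of such an sBNE for arbitrary interim allocation functions, this continuation assignment is well defined, and it is an sBNE of the contestant subgame by construction, which gives subgame perfection in the second stage.

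It remains to check that the first-stage profile is a Nash equilibrium under this continuation. Fix a designer $j$ and assume all other designers post winner-take-all; their prizes enter designer $j$'s problem only through the aggregate $x_{-j}^{-1}(x)$. Because the continuation play in each subgame is exactly the characterized equilibrium, \Cref{lemma:designer-utility-effort-objective} gives that designer $j$'s realized payoff from posting any feasible $\vec{w}_j$ equals $\hat{R}_j(x_{\vec{w}_j}(\phi),x_{-j}^{-1}(x),\vec{\alpha}_j)$. Under the weight-monotonicity hypothesis, \Cref{thm: winner-take-all dominant} says this quantity is maximized at $\vec{w}_j=\vec{w}_j^{(1,T_j)}$ for \emph{every} $x_{-j}^{-1}(x)$, in particular against the others' winner-take-all choices. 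Hence $j$ has no profitable deviation, and since $j$ was arbitrary the profile is a first-stage Nash equilibrium; combining this with the second-stage sBNE yields the claimed SPE.

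The one point needing care, and the main conceptual obstacle, is that each deviation payoff must genuinely be $\hat{R}_j(x_{\vec{w}_j'}(\phi),x_{-j}^{-1}(x),\vec{\alpha}_j)$ rather than some other admissible equilibrium value: a designer could deviate to a structure with $w_{j,1}=w_{j,n}$, where the contestant equilibrium is non-unique. I would resolve this exactly as the paper does, by committing the continuation to the equal-division equilibrium and invoking the fact that all equilibria in the selected set are payoff-equivalent for the designer, so the dominant-strategy comparison of \Cref{thm: winner-take-all dominant} applies verbatim and no such deviation can help.
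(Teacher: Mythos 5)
Your proposal is correct and follows essentially the same route as the paper: invoke \Cref{thm: winner-take-all dominant} to conclude that winner-take-all is a dominant strategy for every designer, hence a first-stage equilibrium, and pair it with the characterized contestant equilibrium to obtain the SPE. The only difference is that you spell out the off-path continuation assignment and the equal-division selection for the degenerate case, details the paper's one-line proof leaves implicit.
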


\subsection{Participation Objective}
In this subsection, we focus on the participation objective, which is concerned with the number of eligible participants rather than their efforts.

    



When designer $j$ has a participation objective specified by $\theta_j$, let $\hat{R}_j(x_j(\phi),x_{-j}^{-1}(x),\theta_j)$ denote her expected utility given $x_j(\phi)$ and $x_{-j}^{-1}(x)$, then we have

$$\hat{R}_j(x_j(\phi),x_{-j}^{-1}(x),\theta_j)= n \cdot\Phi_j^*(\theta_j;x_j(\phi),x_{-j}^{-1}(x)).$$

Intuitively, to attract more eligible participants, the designer should focus on her contest's appeal to those contestants at the threshold, which is always achieved by a simple contest, as shown in the following.
\begin{theorem}\label{thm: participant objective- simple contest}
    Under the participation objective specified by $\theta_j$, given any $x_{-j}^{-1}(x)$, the optimal utility for designer $j$ is achieved by a simple contest. 
    Specifically, there is an optimal simple contest $\vec{w}_{j}^{(k^*,T_j)}$ with $k^*\in[n]$ positive prizes, such that
    
    $$\hat{R}_j(x_{\vec{w}_{j}^{(k^*,T_j)}}(\phi),x_{-j}^{-1}(x),\theta_j)=\max_{\vec{w}_j}\hat{R}_j(x_{\vec{w}_{j}}(\phi),x_{-j}^{-1}(x),\theta_j).$$
    Moreover, let $\phi^*=\Phi_j^*(\theta_j;x_{\vec{w}_{j}^{(k^*,T_j)}}(\phi),x_{-j}^{-1}(x))$, then $k^*$ satisfies
    $$k^*=\arg\max_{k}\xi_{k}(\phi^*).$$
\end{theorem}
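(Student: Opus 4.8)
The plan is to bypass the single-crossing machinery used for the effort objective and instead exploit that the participation objective $\hat{R}_j = n\cdot\Phi_j^*(\theta_j)$ depends on the prize structure $\vec{w}_j$ only through a \emph{single scalar}. By \Cref{theorem:contestant-equilibrium-wine22}, $\Phi_j^*(\theta_j) = x_j^{-1}(Q^{-1}(\theta_j))$, and since $Q(x) = x_j^{-1}(x) + x_{-j}^{-1}(x)$ satisfies $Q(Q^{-1}(\theta_j)) = \theta_j$ at the (interior) equilibrium, writing $\phi^* = \Phi_j^*(\theta_j)$ gives the balance equation $\phi^* + x_{-j}^{-1}(x_j(\phi^*)) = \theta_j$. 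The crucial observation is that this equation constrains $x_j$ only through its value $x_j(\phi^*)$ at the equilibrium point; intuitively, the designer should make the contest as attractive as possible at exactly that point, and a simple contest turns out to be the way to do so.

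First I would establish a pointwise maximization lemma: for any fixed $\phi\in[0,1]$, among all feasible prize structures the value $x_{\vec{w}_j}(\phi)$ is maximized by a simple contest. The budget constraint is equivalent to $\int_0^1 x_{\vec{w}_j}(\phi)\,d\phi \le T_j/n$ via the identity $\int_0^1 x_{\vec{w}_j}(\phi)\,d\phi = \frac1n\sum_k w_{j,k}$ (a Beta-integral computation), i.e.\ just $\sum_k w_{j,k}\le T_j$. For fixed $\phi$, the map $\vec{w}_j\mapsto x_{\vec{w}_j}(\phi) = \sum_{k=1}^n w_{j,k}\,g_k(\phi)$ is linear, and the feasible region $\{w_{j,1}\ge\cdots\ge w_{j,n}\ge 0,\ \sum_k w_{j,k}\le T_j\}$ is a polytope whose vertices are exactly the origin and the simple contests $\vec{w}_j^{(k,T_j)}$, $k\in[n]$. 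Hence the linear program attains its maximum at a simple contest, and the maximal value equals $T_j\max_k \xi_k(\phi)$, attained at $k=\arg\max_k \xi_k(\phi)$.

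Next I would convert this pointwise dominance into a comparison of equilibrium participation. Fix an arbitrary feasible $\vec{w}_j$ with equilibrium participation $\phi^*$, and let $k^*\in\arg\max_k \xi_k(\phi^*)$. By the pointwise lemma, $x_{\vec{w}_j^{(k^*,T_j)}}(\phi^*)\ge x_{\vec{w}_j}(\phi^*)$, and since $x_{-j}^{-1}$ is non-increasing, $g(\phi^*)\le \phi^* + x_{-j}^{-1}(x_{\vec{w}_j}(\phi^*)) = \theta_j$, where $g(\phi) := \phi + x_{-j}^{-1}(x_{\vec{w}_j^{(k^*,T_j)}}(\phi))$. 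Because $x_{-j}^{-1}$ and $x_{\vec{w}_j^{(k^*,T_j)}}$ are both non-increasing, their composition is non-decreasing, so $g$ is non-decreasing; as the simple contest's own equilibrium $\tilde{\phi}^*$ solves $g(\tilde{\phi}^*) = \theta_j\ge g(\phi^*)$, monotonicity forces $\tilde{\phi}^*\ge\phi^*$. Thus the simple contest attracts at least as many qualified participants as any feasible structure, proving the optimality of simple contests.

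Finally, for the characterization $k^*=\arg\max_k \xi_k(\phi^*)$ with $\phi^* = \Phi_j^*(\theta_j;\vec{w}_j^{(k^*,T_j)},x_{-j}^{-1}(x))$, I would apply the same monotone comparison among simple contests: if some $k'$ had $\xi_{k'}(\phi^*)>\xi_{k^*}(\phi^*)$, the argument above would give participation $\phi_{k'}^*\ge\phi^*$, so $k'$ is also optimal and can be selected as $k^*$; this shows the optimal index can always be taken in $\arg\max_k \xi_k(\phi^*)$ at its own induced fixed point. The main obstacle is precisely this endogeneity: the equilibrium point $\phi^*$ is itself a fixed point that changes with the chosen contest, so one cannot directly compare two contests at a common point; the monotonicity of $g$ is exactly what lets a pointwise comparison at one contest's fixed point translate into a global comparison of participation. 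Two edge cases need separate (routine) care: when $x_{\vec{w}_j}$ is constant (the equal-prize case $k=n$, where $w_{j,1}=w_{j,n}$) the equilibrium $\Phi_j^*$ is non-unique and the equal-division selection rule applies, and the boundary behavior of $Q^{-1}$ when the contest fails to attract a full $\theta_j$ mass must be checked against the $\max$-based definitions of $x_j^{-1}$ and $Q^{-1}$.
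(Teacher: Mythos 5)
Your proposal is correct and follows essentially the same route as the paper: the pointwise maximization $\max_{\vec{w}_j} x_{\vec{w}_j}(\phi) = T_j\max_{k}\xi_k(\phi)$ (the paper writes $x_{\vec{w}_j}=\sum_k k(w_{j,k}-w_{j,k+1})\xi_k$ instead of invoking polytope vertices, but it is the same fact), combined with a comparison of two contests at the incumbent's own equilibrium quantile, which is exactly the paper's \Cref{claim.thm4.7}; your strictly increasing map $g(\phi)=\phi+x_{-j}^{-1}(x_{\vec{w}_j^{(k^*,T_j)}}(\phi))$ is the monotonicity of $q-\Phi_j^*(q)$ used there. One caution: the balance equation $\phi^*+x_{-j}^{-1}(x_{\vec{w}_j}(\phi^*))=\theta_j$ can fail (with $>$) when a competitor contest is constant at the critical prize level $Q^{-1}(\theta_j)$, so the chain $g(\phi^*)\leq\theta_j$ needs the separate discontinuity argument the paper gives inside \Cref{claim.thm4.7} (a jump of $x_{-j}^{-1}$ there forces $x_{-j}^{-1}\geq 1\geq\theta_j$), and together with the constant-own-contest cases these "routine" edge cases are where most of the paper's proof actually lives.
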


By Theorem \ref{thm: participant objective- simple contest}, given the strategies of other designers, designer $j$'s best response is to use a simple contest and determine the number of prizes $k$. 
To find a designer equilibrium, we need to identify each designer's value of $k$ such that all designers are implementing their best response strategies. The following theorem demonstrates that if all designers have the same threshold, there is an efficient algorithm to determine the designer equilibrium.
\begin{theorem}\label{thm:participant objective- spe}
    When all designers have a common participation objective specified by $\theta_j=\theta$ for a common threshold $\theta\in[0,1]$, 
    the SPE exists and can be computed efficiently.
\end{theorem}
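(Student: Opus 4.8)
The plan is to collapse the multi-dimensional design game into a one-dimensional fixed-point problem in a single scalar---the common value level at the threshold---and then exploit monotonicity to solve it efficiently. I work throughout with the second-stage contestant equilibrium of \Cref{theorem:contestant-equilibrium-wine22}, so that an SPE is determined by the designers' first-stage choices together with this closed-form continuation. By \Cref{thm: participant objective- simple contest}, each designer's best response is a simple contest spending the whole budget, so I restrict the strategy of designer $j$ to a single integer $k_j\in[n]$ with induced interim allocation $x_j=T_j\xi_{k_j}$.

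First I would rewrite designer $j$'s payoff. Since all thresholds coincide at $\theta$, every contestant equilibrium has a common value level $x^*:=Q^{-1}(\theta)$ fixed by the market-clearing identity $\sum_{j'}x_{j'}^{-1}(x^*)=\theta$, and designer $j$'s utility equals $n\,\Phi_j^*(\theta)=n\,x_j^{-1}(x^*)=n\bigl(\theta-x_{-j}^{-1}(x^*)\bigr)$. Because $x_{-j}^{-1}(\cdot)$ is non-increasing, maximizing this is equivalent to pushing the clearing value $x^*$ as high as possible. Fixing the opponents' aggregate $x_{-j}^{-1}(\cdot)$, the realized $x^*$ is the abscissa where the demand curve $x\mapsto x_j^{-1}(x)=\xi_{k_j}^{-1}(x/T_j)$ meets $\theta-x_{-j}^{-1}(x)$; as the latter is independent of $k_j$ and both sides are non-increasing, the crossing is highest exactly when designer $j$ makes her demand curve pointwise largest. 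This is delivered by the upper envelope $\xi^*(\phi):=\max_{k\in[n]}\xi_k(\phi)$: for the clearing point it suffices to commit to the single $k$ maximizing $\xi_k$ at the captured mass, recovering the best-response rule $k_j\in\arg\max_k\xi_k(\Phi_j^*(\theta))$ of \Cref{thm: participant objective- simple contest}.

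Next I would reduce equilibrium to a single scalar equation. Replacing each designer's allocation by $T_j\xi^*$, define $\psi_{T_j}(x):=\max\{\phi:T_j\xi^*(\phi)\ge x\}$ and $g(x):=\sum_{j\in[m]}\psi_{T_j}(x)$. The envelope is a maximum of finitely many continuous non-increasing functions, hence continuous and non-increasing, so each $\psi_{T_j}$ and thus $g$ is non-increasing; moreover $\xi^*(\phi)\ge\xi_n(\phi)=1/n>0$ gives $g(0^+)=m\ge\theta$ while $g$ vanishes for $x>\max_j T_j$, so a solution of $g(x^*)=\theta$ exists. Setting $\mu_j=\psi_{T_j}(x^*)$ and $k_j\in\arg\max_k\xi_k(\mu_j)$ produces a candidate profile, which I then verify: the identity $\sum_j\mu_j=\theta$ certifies $x^*=Q^{-1}(\theta)$ and $\Phi_j^*(\theta)=\mu_j$, while the pointwise envelope bound shows any single-$k$ deviation yields a demand curve dominated by $T_j\xi^*$, hence a weakly lower clearing value and payoff, so the profile is a designer equilibrium. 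Because $g$ is monotone and piecewise over the finitely many pieces of $\xi^*$, the root $x^*$ is located by binary search or by scanning breakpoints, giving the claimed efficiency; pairing it with the continuation of \Cref{theorem:contestant-equilibrium-wine22} yields the SPE.

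The step I expect to be the main obstacle is the treatment of flat regions and ties in the envelope. Where $\xi^*$ is flat---including the ever-present piece $\xi_n\equiv 1/n$ and any interval where two $\xi_k$ coincide---the inverses $\psi_{T_j}$ jump, so $g$ may step over $\theta$ without attaining it, and several values of $k_j$ may be simultaneously optimal. Handling this requires placing the equilibrium at the jump and splitting the threshold mass among the tied designers via the equal-division selection rule adopted after \Cref{theorem:contestant-equilibrium-wine22}, then checking that this split is realized by a genuine contestant equilibrium and that no designer strictly gains by moving off the tie. Making this degenerate case precise, rather than only the generic strictly-decreasing case, is where the real work lies.
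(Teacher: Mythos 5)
Your construction is the same as the paper's: the envelope $\bar{x}_j(\phi)=T_j\max_{k}\xi_k(\phi)$ (your $T_j\xi^*$), the aggregate inverse $\bar{Q}(x)=\sum_j\bar{x}_j^{-1}(x)$ (your $g$), the clearing value $X^*=\bar{Q}^{-1}(\theta)$, the assignment $k_j^*\in\arg\max_k\xi_k(\phi_j^*)$, and binary search for efficiency. In the regime where every $x_{\vec{w}_j^*}$ is strictly decreasing your verification matches the paper's, which invokes the comparison claim from the proof of \Cref{thm: participant objective- simple contest}.

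However, the degenerate case you flag in your last paragraph and explicitly defer is not a boundary nuisance — it is roughly half of the paper's proof, and your argument is incomplete without it. When $X^*$ falls at or below $w^+=\frac1n\max_jT_j$, some designer's envelope-optimal choice is $k_j^*=n$, her interim allocation is constant, $\Phi_j^*$ is pinned down only by the selection rule, and your clearing identity $\sum_j\mu_j=\theta$ and the payoff formula $n(\theta-x_{-j}^{-1}(x^*))$ both break because the inverses jump. The paper resolves this with a specific structural argument: it shows $\xi_{n-1}(\tfrac12)\geq\xi_n(\tfrac12)$, hence $\lim_{x\to w^++0}\bar{x}_{j}^{-1}(x)\geq\tfrac12$ for any maximum-budget designer, so if the degenerate regime occurs at all then the maximum-budget designer is \emph{unique} (otherwise $\bar{Q}(w^++0)\geq1\geq\theta$ and we would be in the non-degenerate regime). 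It then verifies the best response separately for that one designer $j^+$ by bounding $Q^{-1}(\theta)\leq w^+$ for any deviation, and for the others as in the generic case. Your proposal contains no substitute for this uniqueness argument, and without it the tie-splitting you gesture at (several designers flat at the same constant value, equal division among them) would actually have to be analyzed; the paper's point is that this situation cannot arise. So the gap is concrete: you have the right fixed-point construction but no proof that the candidate profile is an equilibrium outside the strictly-decreasing regime.
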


\section{Conclusion and Future Works}
In this paper, we analyze a game model involving multiple rank-order contests held in parallel. Contestants select a contest and exert costly effort to win better prizes, while contest designers optimize prize structures for certain objectives. Given fixed prize structures, we provide a detailed characterization of contestant equilibrium. Additionally, we propose optimal prize structures for designers under specific assumptions for both effort and participation objectives.

We suggest two directions for future research. First, an interesting question is the behavior of contestants when they can attend in more than one contests. Second, generalizing our model to general cost function of contestants and other designer objectives may yield interesting results. 

\clearpage
\newpage
\bibliographystyle{named} 
\bibliography{main}

\newpage
\appendix
\section*{Appendix}

\section{Missing Proofs in Section \ref{sec:CE}}

\subsection{Proof of Lemma \ref{lemma:SBNE-deterministic-effort}}
\begin{proof}
If $w_{j,1}=w_{j,n}$, then for all $q_i\in[0,1]$ and $e\geq 0$, we have $\bar{u}(j,e;\tau(q);q_i)=v(q_i)w_{j,1}-e$. This implies that any strategy $(j,e)$ with $e>0$ is strictly dominated by $(j,0)$. Therefore, the sBNE $\tau(q)$ must satisfy  $\Pr_{(J_i,e_i)\sim \tau(q_i)}[e_i=0|J_i=j]=1$. For the subsequent analysis, we assume $w_{j,1}>w_{j,n}$, and thus $x_j(\phi)$ is strictly decreasing.

Let $\tau_{e_i|j}(q_i)$ denote the distribution of $e_i$ conditioning on $J_i=j$ when $(J_i,e_i)\sim \tau(q_i)$. Let $\bar{\tau}_{e_i|j}$ denote the marginal distribution conditioning on $J_i=j$ when $q_i\sim U[0,1]$ and $(J_i,e_i)\sim \tau(q_i)$.
We first prove the following claims:
\begin{claim}\label{claim.lem1.1}
$\bar{\tau}_{e_i|j}$ is supported on some interval $[0,\bar{e}]$ or $[0,+\infty)$ with positive density and no mass point. In other words, $\Psi_j^{\tau(q)}(e)$ is strictly decreasing and continuous in $e$ on $[0,\bar{e}]$ (or $[0,+\infty)$, respectively), with $\Psi_j^{\tau(q)}(0)=\pi_j(1)=\Pr_{(J_i,e_i)\sim \tau(q_i)}[J_i=j]$ and $\Psi_j^{\tau(q)}(\bar{e})=0$ (or $\lim_{e_i\to+\infty}\Psi_j^{\tau(q)}(e_i)=0$, respectively).
\end{claim}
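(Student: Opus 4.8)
The plan is to pin down the three asserted features of $\Psi:=\Psi_j^{\tau(q)}$ --- support equal to an interval anchored at $0$, absence of mass points, and strict monotonicity (positive density) --- purely through profitable-deviation arguments, using throughout that $x_j$ is strictly decreasing (the standing assumption $w_{j,1}>w_{j,n}$) and that a contestant playing $(j,e)$ collects expected prize $x_j(\rho)$, where $\rho$ is the per-rival probability of being outranked (which equals $\Psi(e)$ away from atoms, and is tie-adjusted at an atom). First I would record the boundary behaviour. By definition $\Psi$ is non-increasing and $\Psi(0)=\Pr_{(J_i,e_i)\sim\tau(q_i)}[J_i=j]$, since every effort is nonnegative. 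For the top, the option $(j,0)$ already secures the nonnegative payoff $v(q_i)x_j(\Psi(0))\ge 0$, so in an sBNE no type ever plays an effort exceeding $v(q_i)w_{j,1}$; hence the mass with effort above $E$ is at most $\Pr[v(q_i)>E/w_{j,1}]\to 0$, giving $\lim_{e\to\infty}\Psi(e)=0$ and a well-defined $\bar{e}:=\sup(\mathrm{supp}\,\bar{\tau}_{e_i|j})\in(0,+\infty]$.

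Next I would rule out mass points by the classical all-pay upward deviation. Suppose $\bar{\tau}_{e_i|j}$ carried an atom of mass $p>0$ at some $e^*\ge 0$, and set $\Psi^+:=\lim_{e\downarrow e^*}\Psi(e)=\Psi(e^*)-p$. Because ties break in favour of higher skill, a contestant stationed at $e^*$ outranks only the part of the atom with strictly higher quantile; consequently the upper-half-by-quantile of the atom --- a set of measure $p/2$ --- is outranked by each rival with probability at least $\Psi^++p/2$, and so earns at most $x_j(\Psi^++p/2)$. Shifting to $e^*+\epsilon$ lets such a contestant leapfrog the whole atom, so her expected prize rises to $x_j(\Psi(e^*+\epsilon))$, which tends to $x_j(\Psi^+)$ as $\epsilon\to0$; since $x_j$ is strictly decreasing, the prize gain stays bounded below by the positive constant $x_j(\Psi^+)-x_j(\Psi^++p/2)$, whereas the extra cost is only $\epsilon$. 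As $v(q_i)>0$, the deviation strictly improves utility once $\epsilon$ is small, contradicting the sBNE condition for these types; hence $\Psi$ is continuous.

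Finally I would establish strict monotonicity, equivalently the absence of gaps in $\mathrm{supp}\,\bar{\tau}_{e_i|j}$ (the anchoring at $0$ being the special case of a gap of the form $[0,\cdot]$). Suppose $\Psi\equiv c$ on a maximal interval $[a,b]$ with $a<b$ and $c>0$. There the prize $x_j(\Psi(e))=x_j(c)$ is flat, so the payoff $v(q_i)x_j(c)-e$ strictly falls with $e$, and the deviation that should bite is pushing a contestant with equilibrium effort just above $b$ down to $a$: as $e\downarrow b$ the prize loss $x_j(\Psi(e))-x_j(c)\to 0$ by continuity, while the effort saved stays at least $b-a>0$. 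The best-response inequality $v(q_i)\bigl(x_j(\Psi(e))-x_j(c)\bigr)\ge e-a$ forces $v(q_i)\ge (e-a)/\bigl(x_j(\Psi(e))-x_j(c)\bigr)\to+\infty$ as $e\downarrow b$, i.e.\ only unboundedly high-skill types could legitimately sit just above the gap. This is exactly the step I expect to be the main obstacle: it is immediate when $F$ has bounded support (then $v(q_i)\le v(0)<\infty$ gives the contradiction at once), and for unbounded skills I would close it by a tail estimate --- the measure of types crowded just above $b$ cannot exceed $\Pr\bigl[v(q_i)\ge (e-a)/(x_j(\Psi(e))-x_j(c))\bigr]$, which under finite mean (so that $s\,(1-F(s))\to 0$) is too small to be consistent with $\Psi(b)-\Psi(e)>0$, forcing $a=b$. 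Combining the three steps yields that $\Psi$ is continuous and strictly decreasing on $[0,\bar{e}]$ with $\Psi(0)=\Pr[J_i=j]$ and $\Psi(\bar{e})=0$, equivalently that $\bar{\tau}_{e_i|j}$ is supported on $[0,\bar{e}]$ (or $[0,+\infty)$) with positive density and no mass point.
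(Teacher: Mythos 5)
Your boundary analysis and your no-mass-point argument are essentially the paper's: the atom is eliminated by the classical upward deviation $e^*\mapsto e^*+\epsilon$, using the tie-breaking rule to locate types at the atom whose expected prize is bounded away from $x_j(\Psi^+)$, exactly as in the paper's proof. The genuine problem is in your third step (no gaps / strict monotonicity). Your only tool there is the downward deviation from $e$ just above $b$ to $a$, which yields the lower bound $v(q_i)\geq (e-a)/\bigl(x_j(\Psi(e))-x_j(c)\bigr)\to+\infty$. As you yourself note, this is only conclusive when $v$ is bounded. For unbounded skills your proposed tail estimate does not close the argument as written: the statement that $\Pr[v(q_i)\geq s_e]$ is ``too small to be consistent with $\Psi(b)-\Psi(e)>0$'' is not a contradiction, because $\Psi(b)-\Psi(e)$ also tends to $0$ as $e\downarrow b$; you would need a quantitative comparison of the two rates (e.g.\ the Lipschitz bound $x_j(\Psi(e))-x_j(c)\leq L\,(\Psi(b)-\Psi(e))$ to convert the tail inequality into $s\,(1-F(s))\geq (b-a)/L>0$), and even then the conclusion rests on an integrability assumption on $F$ (finite mean) that the paper never makes -- the model only assumes $F$ continuous with $v>0$.

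The paper closes this step without any such assumption by running a \emph{second, upward} deviation first: a type playing in $[e_2,e_2+\epsilon')$ who does not want to deviate up to $e_2+\epsilon$ must satisfy $v(q_i)\leq (\epsilon-\epsilon')/\bigl(x_j(\Psi(e_2+\epsilon))-x_j(\Psi(e_2+\epsilon'))\bigr)=:M$, i.e.\ types sitting just above the gap cannot be too skilled (otherwise they would pay a tiny extra cost to jump past the crowd at $e_2$ and collect a strictly better prize). With this uniform upper bound $M$ in hand, the downward deviation to $e_1$ becomes strictly profitable for those same types, since the prize loss is at most $M\cdot\bigl(x_j(\Psi(e_i))-x_j(\Psi(e_1))\bigr)\to 0$ while the cost saving is at least $e_2-e_1>0$. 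You should replace your tail estimate with this upper-bound-then-drop argument; your lower bound alone cannot finish the proof at the paper's level of generality.
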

\begin{proof}
The proof of this claim resembles standard arguments in auction theory.
Firstly, we take $\bar{e}=\inf\{e_i:\Psi_j^{\tau(q)}(\bar{e})=0\}$. In case that $\bar{e}=+\infty$, we still use the notation $[0,\bar{e}]$ to denote the support $[0,+\infty)$. 

Next, we prove that $\bar{\tau}_{e_i|j}$ has no mass point. Suppose for contradiction that $e_1$ is a mass point of 
$\bar{\tau}_{e_i|j}$, i.e., 

$$\lim_{e_i\to e_1-0}\Psi_j^{\tau(q)}(e_i)>\lim_{e_i\to e_1+0}\Psi_j^{\tau(q)}(e_i).$$ 
Take 

$$\delta=x_j(\lim_{e_i\to e_1+0}\Psi_j^{\tau(q)}(e_i))-x_j(\lim_{e_i\to e_1-0}\Psi_j^{\tau(q)}(e_i)),$$ 
then it holds that $\delta>0$ since $x_j(\phi)$ is strictly decreasing. By the tie-breaking rule, there exists some quantile $q_i$ such that a contestant with $q_i$ plays $e_1$ with positive probability, and that her expected prize by playing $e_1$ is less than $x_j(\lim_{e_i\to e_1+0}\Psi_j^{\tau(q)}(e_i))-\frac12\delta$. For such contestants, the effort $e_1$ is strictly dominated by $e_1+\epsilon$ for arbitrary $\epsilon\in(0,\frac12{\delta}v(q_i))$, which contradicts.

Then, we prove that $\bar{\tau}_{e_i|j}$ have positive probability mass in every subinterval of $[0,\bar{e}]$, i.e., $\Psi_j^{\tau(q)}(e)$ is strictly decreasing. Suppose for contradiction that there is $(e_1,e_2)$ with $0\leq e_1<e_2\leq \bar{e}$ such that $\Pr_{e_i\sim\tau_{e_i|j}(q_i)}[e_i\in(e_1,e_2)]=0$. Without loss of generality we take $e_2$ such that 

$$e_2=\sup\{e_i\in(e_1,+\infty):\Pr_{e_i\sim\tau_{e_i|j}(q_i)}[e_i\in(e_1,e_2)]=0\}.$$ 
We know $e_2<+\infty$ because otherwise $\bar{e}\leq e_1$. Now we have $\Pr_{e_i\sim\tau_{e_i|j}(q_i)}[e_i\in[e_2,e_2+\epsilon)]>0$ for any $\epsilon>0$. However, since $\bar{\tau}_{e_i|j}$ has no mass point, we have $\lim_{e_i\to e_2+0}\Psi_j^{\tau(q)}(e_i)=\Psi_j^{\tau(q)}(e_1)$. By the continuity of $x_j$, we have $\lim_{e_i\to e_2+0}x_j(\Psi_j^{\tau(q)}(e_i))=x_j(\Psi_j^{\tau(q)}(e_1))$. Take a sufficiently small $\epsilon'\in(0,\epsilon)$, such that $\Psi_j^{\tau(q)}(e_2)>\Psi_j^{\tau(q)}(e_2+\epsilon')>\Psi_j^{\tau(q)}(e_2+\epsilon)$. For any $q_i$ such that $\Pr_{e_i\sim \tau_{e_i|j}(q_i)}[e_i\in [e_2,e_2+\epsilon')]>0$, since a contestant with $q_i$ has no incentive to deviate to $e_2+\epsilon$, we have 

$$v(q_i)x_j(\Psi_j^{\tau(q)}(e_2+\epsilon'))-e_2\geq v(q_i)x_j(\Psi_j^{\tau(q)}(e_2+\epsilon))-(e_2+\epsilon),$$ 
which implies 

$$v(q_i)\leq \frac{\epsilon-\epsilon'}{x_j(\Psi_j^{\tau(q)}(e_2+\epsilon))-x_j(\Psi_j^{\tau(q)}(e_2+\epsilon'))}=:M.$$
Therefore, for sufficiently small $\epsilon''\in(0,\epsilon')$, it holds that, for all $e_i\in[e_2,e_2+\epsilon'')$, 
\begin{align*}
    &v(q_i)x_j(\Psi_j^{\tau(q)}(e_i))-e_i-(v(q_i)x_j(\Psi_j^{\tau(q)}(e_1))-e_1)\\
    \leq &M\cdot(x_j(\Psi_j^{\tau(q)}(e_i))-x_j(\Psi_j^{\tau(q)}(e_1)))-e_2+e_1<0,
\end{align*}
i.e., $e_i$ is strictly dominated by $e_1$, which contradicts. 

This completes the proof of this claim.
\end{proof}
The next claim tells that under an sBNE, the exerted effort of contestant with higher ability in any contest will be not less than that of contestant with lower ability.
\begin{claim}\label{claim.lem1.2}
    For any $0\leq q_1< q_2\leq 1$ and any contest $j$, it holds that $\Pr_{e_1\sim\tau_{e_i|j}(q_1),e_2\sim\tau_{e_i|j}(q_2)}[e_1\geq e_2]=1$.
\end{claim}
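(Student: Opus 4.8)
The plan is to run the standard two-type single-crossing (monotonicity) argument from the pairwise no-deviation conditions of the sBNE. As in the beginning of the proof of \Cref{lemma:SBNE-deterministic-effort}, I first dispose of the degenerate case $w_{j,1}=w_{j,n}$: there every positive effort in contest $j$ is strictly dominated by effort $0$, so $e_1=e_2=0$ almost surely and the claim is immediate. Henceforth I assume $w_{j,1}>w_{j,n}$, so that $x_j(\phi)$ is strictly decreasing, and I take for granted that both conditionals $\tau_{e_i|j}(q_1),\tau_{e_i|j}(q_2)$ are well-defined (otherwise the stated event is vacuous). Writing $\Psi_j:=\Psi_j^{\tau(q)}$ for brevity, I recall the interim-utility representation: a contestant with quantile $q$ who plays $(j,e)$ against $\tau(q)$ obtains expected utility $v(q)\,x_j(\Psi_j(e))-e$.

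Next I extract the key inequalities. Since $\Psi_j$ is continuous (\Cref{claim.lem1.1}) and $x_j$ is continuous, the map $e\mapsto v(q)x_j(\Psi_j(e))-e$ is continuous, so the equilibrium best-response value for a type $q$ who selects $j$ is attained on the entire closed support of $\tau_{e_i|j}(q)$. Hence, for every $e_1\in\operatorname{supp}\tau_{e_i|j}(q_1)$ and every $e_2\in\operatorname{supp}\tau_{e_i|j}(q_2)$, playing $(j,e_1)$ is optimal for type $q_1$ and $(j,e_2)$ is optimal for type $q_2$; in particular neither wants to mimic the other's effort, which gives
\begin{align*}
v(q_1)x_j(\Psi_j(e_1))-e_1 &\ge v(q_1)x_j(\Psi_j(e_2))-e_2,\\
v(q_2)x_j(\Psi_j(e_2))-e_2 &\ge v(q_2)x_j(\Psi_j(e_1))-e_1.
\end{align*}

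I then add the two inequalities; the effort terms $-e_1-e_2$ cancel on both sides, and rearranging yields
$$\bigl(v(q_1)-v(q_2)\bigr)\bigl(x_j(\Psi_j(e_1))-x_j(\Psi_j(e_2))\bigr)\ge 0.$$
Since $q_1<q_2$ and $v$ is strictly decreasing, $v(q_1)-v(q_2)>0$, so $x_j(\Psi_j(e_1))\ge x_j(\Psi_j(e_2))$. Using that $x_j$ is strictly decreasing, this gives $\Psi_j(e_1)\le\Psi_j(e_2)$, and then the strict monotonicity of $\Psi_j$ (again from \Cref{claim.lem1.1}) forces $e_1\ge e_2$. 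As this holds for an arbitrary pair of support points, we get $\inf\operatorname{supp}\tau_{e_i|j}(q_1)\ge\sup\operatorname{supp}\tau_{e_i|j}(q_2)$, and therefore $\Pr_{e_1\sim\tau_{e_i|j}(q_1),\,e_2\sim\tau_{e_i|j}(q_2)}[e_1\ge e_2]=1$, as claimed.

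The computation itself is routine; the only step requiring genuine care is justifying that \emph{every} element of the conditional support---not merely positive-probability atoms---constitutes a best response, since for a continuous effort distribution individual efforts carry no mass. This is precisely where I invoke the continuity of $\Psi_j$ (hence of the interim utility in $e$) established in \Cref{claim.lem1.1}, which lets me pass from ``the distribution $\tau_{e_i|j}(q)$ is optimal'' to ``each point of its closed support is optimal.'' Once the pairwise no-deviation inequalities are available for all such support pairs, the passage from the pointwise ordering of supports to the almost-sure event $\{e_1\ge e_2\}$ is immediate.
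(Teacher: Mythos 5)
Your proposal is correct and follows essentially the same route as the paper: the standard revealed-preference argument of writing the two no-deviation inequalities for types $q_1$ and $q_2$ at efforts $e_1,e_2$, cancelling the effort terms, and using $v(q_1)>v(q_2)$ together with the strict monotonicity of $x_j$ and $\Psi_j^{\tau(q)}$ to conclude $e_1\geq e_2$. Your extra care in justifying that every point of the closed conditional support is a best response (via continuity of the interim utility from \Cref{claim.lem1.1}) is a welcome tightening of the paper's more informal ``with probability $1$'' phrasing, but it is the same argument.
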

\begin{proof}
    Fix any $q_1<q_2$, let $e_1$ and $e_2$ be independently drawn from $\tau_{e_i|j}(q_1)$ and $\tau_{e_i|j}(q_2)$, respectively. Denote $X_{e_1}=x_j(\Psi_j^{\tau(q)}(e_1))$ and $X_{e_2}=x_j(\Psi_j^{\tau(q)}(e_2))$. By the definition of sBNE, with probability $1$, both of the following inequalities hold:
    
    $$v(q_1)X_{e_1}-e_1\geq v(q_1)X_{e_2}-e_2$$
    and 
    $$v(q_2)X_{e_2}-e_2\geq v(q_2)X_{e_1}-e_1.$$ 
    It implies that
    \begin{align*}
        v(q_1)(X_{e_2}-X_{e_1})\leq e_2-e_1\leq v(q_2)(X_{e_2}-X_{e_1}).
    \end{align*}

Since $q_1<q_2$, we have $v(q_1)>v(q_2)$, and it must hold that $X_{e_2}-X_{e_1}\leq 0$. As $x_j(\phi)$ and $\Psi_j^{\tau(q)}(e_i)$ are both strictly decreasing, this implies $e_2\leq e_1$.
\end{proof}

Furthermore, we show that the ``equality'' cannot be reached. 
\begin{claim}\label{claim.lem1.3}
    For any $0\leq q_1< q_2\leq 1$, if $\Phi_{\pi_j}(q_1)<\Phi_{\pi_j}(q_2)$, or equivalently, $\Pr[q_i\leq q_1\land J_i=j]<\Pr[q_i\leq q_2\land J_i=j]$, then $\Pr_{e_1\sim\tau_{e_i|j}(q_1),e_2\sim\tau_{e_i|j}(q_2)}[e_1>e_2]=1$.
\end{claim}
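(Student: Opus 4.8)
The plan is to argue by contradiction, leveraging the monotone-sorting property already established in \Cref{claim.lem1.2} together with the no-mass-point property from \Cref{claim.lem1.1}. Throughout I would keep the standing assumption $w_{j,1}>w_{j,n}$, so that $x_j(\phi)$ is strictly decreasing and continuous. Since \Cref{claim.lem1.2} already gives $e_1\geq e_2$ almost surely, the negation of the desired conclusion is exactly $\Pr_{e_1\sim\tau_{e_i|j}(q_1),\,e_2\sim\tau_{e_i|j}(q_2)}[e_1=e_2]>0$, and this is what I would aim to refute.

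First I would translate the almost-sure ordering into a separation of essential supports. Because $e_1$ and $e_2$ are drawn independently and satisfy $e_1\geq e_2$ almost surely, a standard independence argument yields $\operatorname{ess\,inf}\tau_{e_i|j}(q_1)\geq \operatorname{ess\,sup}\tau_{e_i|j}(q_2)$: otherwise one could pick a value $c$ strictly between them with $\Pr[e_1<c]>0$ and $\Pr[e_2>c]>0$, and independence would force $\Pr[e_1<c<e_2]>0$, contradicting $e_1\geq e_2$ almost surely. Next, the hypothesis $\Pr[e_1=e_2]>0$ forces $e_1$ and $e_2$ to share a common atom $e^*$, since the identity $\Pr[e_1=e_2]=\int \Pr[e_1=y]\,d\mu_{q_2}(y)$ (with $\mu_{q_2}$ the law of $e_2$) can be positive only if a positive-$\mu_{q_2}$-measure set of points $y$ are atoms of $e_1$. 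Combining the support separation with the fact that such $e^*$ lies in both supports pins down $e^*=\operatorname{ess\,inf}\tau_{e_i|j}(q_1)=\operatorname{ess\,sup}\tau_{e_i|j}(q_2)$.

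The decisive step is then to propagate this value to all intermediate quantiles and invoke $\Phi_{\pi_j}(q_1)<\Phi_{\pi_j}(q_2)$. For any $q\in(q_1,q_2)$, applying \Cref{claim.lem1.2} to the pair $q_1<q$ gives $\operatorname{ess\,sup}\tau_{e_i|j}(q)\leq \operatorname{ess\,inf}\tau_{e_i|j}(q_1)=e^*$, while applying it to the pair $q<q_2$ gives $\operatorname{ess\,inf}\tau_{e_i|j}(q)\geq\operatorname{ess\,sup}\tau_{e_i|j}(q_2)=e^*$; hence $\tau_{e_i|j}(q)$ is the point mass at $e^*$ for every such $q$. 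Since $\Phi_{\pi_j}(q_2)-\Phi_{\pi_j}(q_1)=\int_{q_1}^{q_2}\pi_j(t)\,dt>0$, a positive mass of contestants choose contest $j$ with quantile in $(q_1,q_2)$, and all of them play exactly $e^*$. Consequently $\Pr[J_i=j\land e_i=e^*]\geq \Phi_{\pi_j}(q_2)-\Phi_{\pi_j}(q_1)>0$, so $e^*$ is a mass point of $\bar{\tau}_{e_i|j}$, contradicting \Cref{claim.lem1.1}. Therefore $\Pr[e_1=e_2]=0$, and together with $e_1\geq e_2$ almost surely this gives $e_1>e_2$ almost surely.

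I expect the main obstacle to be the passage from the almost-sure inequality between two \emph{independent} draws to the essential-support separation and the extraction of a common atom; some care is needed to handle atoms versus continuous parts correctly. Once that bookkeeping is in place, the propagation to intermediate quantiles and the resulting contradiction with the no-mass-point property of \Cref{claim.lem1.1} follow immediately.
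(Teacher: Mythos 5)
Your proof is correct and follows essentially the same route as the paper's: reduce to showing $\Pr[e_1=e_2]=0$ via \Cref{claim.lem1.2}, force every intermediate quantile in $(q_1,q_2)$ to play a single common value, and contradict the no-mass-point property of \Cref{claim.lem1.1}. Your version is in fact slightly more careful than the paper's, which asserts the existence of the common value $e_0$ and a mass of ``at least $q_2-q_1$'' without the support-separation and common-atom bookkeeping you supply, and which should really bound the mass by $\Phi_{\pi_j}(q_2)-\Phi_{\pi_j}(q_1)$ (positive by hypothesis) as you do.
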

\begin{proof}
We only need to prove $\Pr[e_1=e_2]=0$. Observe that for any $q_3\in(q_1,q_2)$ and $e_3\sim \tau_{e_i|j}(q_3)$, we have already known that $e_1\geq e_3\geq e_2$ with probability $1$. Suppose for contradiction that $\Pr[e_1=e_2]>0$, then there exists some $e_0$, such that for any $q_3\in(q_1,q_2)$, and $\Pr_{e_3\sim \tau_{e_i|j}(q_3)}[e_3=e_0]=1$. This implies that $\bar{\tau}_{e_i|j}$ has a point of mass at least $q_2-q_1$ at $e_0$, contradicting with that of $\bar{\tau}_{e_i|j}$ has no mass point.
\end{proof}

Now, we state that in an sBNE, the probability that one contestant joins contest and exert an effort greater than $e_1$ is equal to the probability that one contestant with quantile less than $q_1$ joins contest $j$.
\begin{claim}\label{claim.lem1.4}
For any $q_1\in[0,1]$ such that $\pi_j(q_1)>0$, it holds that $\Pr_{e_1\sim\tau_{e_i|j}(q_1)}[\Psi_j^{\tau(q)}(e_1)=\Phi_{\pi_j}(q_1)]=1$.
\end{claim}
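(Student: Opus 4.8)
The plan is to establish $\Psi_j^{\tau(q)}(e_1)=\Phi_{\pi_j}(q_1)$ for almost every realized effort $e_1\sim\tau_{e_i|j}(q_1)$ (well-defined since $\pi_j(q_1)>0$) by a two-sided bound, reading both quantities as probabilities over an independent population draw and exploiting the monotone-effort structure already established in \Cref{claim.lem1.1}, \Cref{claim.lem1.2} and \Cref{claim.lem1.3}. I write $\Phi_{\pi_j}(q_1)=\int_0^{q_1}\pi_j(t)\,dt=\Pr[J'=j\land q'\le q_1]$ and $\Psi_j^{\tau(q)}(e_1)=\Pr[J'=j\land e'\ge e_1]$, where $q'\sim U[0,1]$ and $(J',e')\sim\tau(q')$ are drawn independently of $e_1$. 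The guiding idea is that, by \Cref{claim.lem1.2}, among contestants choosing $j$ the more skilled ones (with $q'<q_1$) exert at least as much effort as our reference contestant, while the less skilled ones (with $q'>q_1$) exert at most as much; so the set of contest-$j$ choosers with effort $\ge e_1$ should coincide, up to a null set, with those having quantile $\le q_1$, whose mass is exactly $\Phi_{\pi_j}(q_1)$. The technical device throughout is Tonelli's theorem, used to convert the almost-sure orderings of two independent effort draws into almost-sure statements about the single fixed realization $e_1$.

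For the lower bound, fix $q'<q_1$; \Cref{claim.lem1.2} gives $e'\ge e_1$ almost surely when $e'\sim\tau_{e_i|j}(q')$ and $e_1\sim\tau_{e_i|j}(q_1)$ are independent. Integrating these zero conditional probabilities against $\pi_j(q')\,dq'$ over $\{q'<q_1\}$ shows that, in the joint law of $(e_1,q',e')$, the event $\{q'<q_1\land J'=j\land e'<e_1\}$ has probability zero. By Tonelli, for almost every $e_1$ we then obtain $\Pr[q'<q_1\land J'=j\land e'<e_1]=0$, and hence $\Psi_j^{\tau(q)}(e_1)\ge\Pr[q'<q_1\land J'=j]=\Phi_{\pi_j}(q_1)$, where the final equality uses $\Pr[q'=q_1]=0$.

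For the upper bound, I decompose $\Psi_j^{\tau(q)}(e_1)=\Pr[q'\le q_1\land J'=j\land e'\ge e_1]+\Pr[q'>q_1\land J'=j\land e'\ge e_1]$; the first term is at most $\Pr[q'\le q_1\land J'=j]=\Phi_{\pi_j}(q_1)$. For the second term, \Cref{claim.lem1.2} now yields $e'\le e_1$ almost surely for each fixed $q'>q_1$, so the same Tonelli argument forces $\Pr[q'>q_1\land J'=j\land e'>e_1]=0$ for almost every $e_1$, while the atomlessness of $\bar{\tau}_{e_i|j}$ from \Cref{claim.lem1.1} gives $\Pr[J'=j\land e'=e_1]=0$ for every fixed $e_1$, disposing of the boundary event. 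Thus the second term vanishes and $\Psi_j^{\tau(q)}(e_1)\le\Phi_{\pi_j}(q_1)$ almost surely; combining with the lower bound gives the claimed equality. I expect the main obstacle to be exactly this bookkeeping step that transfers the joint almost-sure orderings of \Cref{claim.lem1.2} into an almost-sure statement about a single realization $e_1$, where the no-mass-point property of \Cref{claim.lem1.1} is precisely what is needed to discard the boundary contribution $\{e'=e_1\}$.
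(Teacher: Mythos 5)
Your proof is correct and follows essentially the same route as the paper: both turn the almost-sure monotone coupling between quantiles and efforts from the preceding claims into the identity $\Psi_j^{\tau(q)}(e_1)=\Phi_{\pi_j}(q_1)$ via a Fubini/Tonelli step, with your two-sided bound using \Cref{claim.lem1.2} plus the atomlessness of \Cref{claim.lem1.1} to kill the boundary event $\{e'=e_1\}$, whereas the paper invokes the strict ordering of \Cref{claim.lem1.3} (itself derived from exactly those two ingredients) to get the equivalence $q_2>q_1\iff e_2<e_1$ in one stroke. If anything, your write-up makes explicit the Tonelli bookkeeping that the paper's one-line conclusion leaves implicit; just note that, like the paper's claims, your argument lives under the standing assumption $w_{j,1}>w_{j,n}$ made at the start of the lemma's proof, since atomlessness fails in the degenerate equal-prize case.
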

\begin{proof}
Let $e_1$ be drawn from $\tau_{e_i|j}(q_1)$ and let $q_2\sim U[0,1]$ and $e_2\sim \tau_{e_i|j}(q_2)$. Observe that the event $\{\Phi_{\pi_j}(q_1)\neq\Phi_{\pi_j}(q_2)\}$ happens with probability $1$, and conditioning on this, by \Cref{claim.lem1.3}, we know that $q_2>q_1$ if and only if $e_2<e_1$. Therefore, we obtain that $\Pr_{e_1\sim \tau_{e_i|j}(q_1)}[\Psi_j^{\tau(q)}(e_1)=\Phi_{\pi_j}(q_1)]=1$.
\end{proof}

Finally, we show that if two quantiles have the same $\pi_j(q)$, they will exert the same effort in contest $j$.
\begin{claim}\label{claim.lem1.5}
For any $q_1$ and $q_2\in[0,1]$ such that $\pi_j(q_1)>0$ and $\pi_j(q_2)>0$, if $\Phi_{\pi_j}(q_1)=\Phi_{\pi_j}(q_2)$, then it holds that $\Pr_{e_1\sim\tau_{e_i|j}(q_1),e_2\sim\tau_{e_i|j}(q_2)}[e_1=e_2]=1$.
\end{claim}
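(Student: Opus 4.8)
The plan is to combine the two structural facts already in hand, namely \Cref{claim.lem1.4} and \Cref{claim.lem1.1}, after which the conclusion is almost immediate. First I would apply \Cref{claim.lem1.4} to each quantile separately: since $\pi_j(q_1)>0$, a draw $e_1\sim\tau_{e_i|j}(q_1)$ satisfies $\Psi_j^{\tau(q)}(e_1)=\Phi_{\pi_j}(q_1)$ with probability $1$, and since $\pi_j(q_2)>0$, a draw $e_2\sim\tau_{e_i|j}(q_2)$ satisfies $\Psi_j^{\tau(q)}(e_2)=\Phi_{\pi_j}(q_2)$ with probability $1$. Writing $\phi_0:=\Phi_{\pi_j}(q_1)=\Phi_{\pi_j}(q_2)$ for the common value assumed in the hypothesis, both draws then satisfy $\Psi_j^{\tau(q)}(e)=\phi_0$ almost surely.

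Next I would invoke the monotonicity half of \Cref{claim.lem1.1}, which states that $\Psi_j^{\tau(q)}(e)$ is strictly decreasing (and continuous) on the support of $\bar{\tau}_{e_i|j}$. Strict monotonicity makes $\Psi_j^{\tau(q)}$ injective, so the level set $\{e:\Psi_j^{\tau(q)}(e)=\phi_0\}$ consists of at most a single point $e^*$. Because \Cref{claim.lem1.4} guarantees that this level is attained with probability $1$ by both $e_1$ and $e_2$, we are forced to have $e_1=e^*=e_2$ almost surely, which is exactly the claimed equality $\Pr[e_1=e_2]=1$.

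There is essentially no obstacle here beyond bookkeeping; the argument is a direct corollary of the two preceding claims. The only point meriting a line of care is confirming that $\phi_0$ genuinely lies in the attained range of $\Psi_j^{\tau(q)}$, so that the injective preimage is a bona fide point rather than empty — but this is immediate, since \Cref{claim.lem1.4} itself exhibits efforts realizing $\Psi_j^{\tau(q)}(e)=\phi_0$ with probability $1$. Boundary configurations, such as $\phi_0=\pi_j(1)$ corresponding to the infimum effort $e=0$, are subsumed by the same reasoning because the injectivity furnished by \Cref{claim.lem1.1} holds throughout the support $[0,\bar e]$. Hence the claim follows without any further estimation.
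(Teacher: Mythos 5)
Your argument is correct in substance but follows a genuinely different route from the paper's. You derive $e_1=e_2$ purely from the injectivity of $\Psi_j^{\tau(q)}$ on $[0,\bar e]$ (Claim \ref{claim.lem1.1}) after pinning both efforts to the same level set via Claim \ref{claim.lem1.4}. The paper instead argues by an equilibrium deviation: if $e_1\neq e_2$ with positive probability, then since Claim \ref{claim.lem1.4} forces $\Psi_j^{\tau(q)}(e_1)=\Psi_j^{\tau(q)}(e_2)$ almost surely, the two efforts yield the same expected prize $x_j(\Phi_{\pi_j}(q_1))$, so the contestant with quantile $q_2$ would strictly prefer the cheaper effort, contradicting the sBNE condition. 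Your approach is more direct and avoids invoking the best-response property a second time; the paper's approach avoids having to worry about where the level set $\{e:\Psi_j^{\tau(q)}(e)=\phi_0\}$ lives.

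One small point you should tighten: the problematic boundary case is not $\phi_0=\Psi_j^{\tau(q)}(0)$ at $e=0$ (which is harmless) but $\phi_0=0$. There, $\Psi_j^{\tau(q)}(e)=0$ holds for \emph{every} $e\geq\bar e$, so the level set is a ray rather than a point and injectivity alone does not close the argument; moreover, Claim \ref{claim.lem1.1} only constrains the marginal $\bar\tau_{e_i|j}$, so it does not by itself forbid the conditional distribution at the single quantile $q_1$ from placing mass above $\bar e$. You need one extra line: any $e>\bar e$ earns the same expected prize $x_j(0)$ as $\bar e$ at strictly higher cost, hence is never a best response, so $e_1,e_2\in[0,\bar e]$ almost surely and injectivity applies. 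With that sentence added, your proof is complete.
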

\begin{proof}
Assume, for the sake of contradiction, that the probability $\Pr_{e_1\sim\tau_{e_i|j}(q_1),e_2\sim\tau_{e_i|j}(q_2)}[e_1=e_2]<1$. Let $e_1$ and $e_2$ be independently drawn from $\tau_{e_i|j}(q_1)$ and $\tau_{e_i|j}(q_2)$, respectively. It happens with positive probability that $e_1<e_2$. However, by \Cref{claim.lem1.4} we have $\Psi_j^{\tau(q)}(e_1)=\Phi_{\pi_j}(q_1)=\Phi_{\pi_j}(q_2)=\Psi_j^{\tau(q)}(e_2)$ with probability $1$. Therefore $v(q_2)x_j(\Psi_j^{\tau(q)}(e_1))-e_1>v(q_2)x_j(\Psi_j^{\tau(q)}(e_2))-e_2$, i.e., the contestant with quantile $q_2$ obtains higher utility using $e_1$ than using $e_2$, which contradicts with that $\tau(q)$ is an sBNE.
\end{proof}

For each $j\in[m]$ and all $q_1\in[0,1]$ such that $\pi_j(q_1)>0$, by setting $q_2=q_1$ in \Cref{claim.lem1.5}, we find that $\tau_{e_i|j}(q_1)$ is a one-point distribution. In other words, there exists a function $\beta_j:[0,1]\to\mathbb{R}_{\geq 0}$ such that for all $q_i\in[0,1]$ with $\pi_j(q_i)>0$, it holds that $\Pr_{e_i\sim\tau_{e_i|j}(q_i)}[e_i=\beta_j(q_i)]=1$. Equivalently, $\Pr_{(J_i,e_i)\sim\tau(q_i)}[e_i=\beta_j(q_i)|J_i=j]=1$. 

Moreover, by \Cref{claim.lem1.5}, we have $\beta_j(q_1)=\beta_j(q_2)$ whenever $\Phi_{\pi_j}(q_1)=\Phi_{\pi_j}(q_2)$. This implies that $\beta_j(q_i)$ can be represented as a function of $\Phi_{\pi_j}(q_i)$, denoted by $\beta_j(q_i)=\hat{\beta}_j(\Phi_{\pi_j}(q_i))$. 
Finally, by \Cref{claim.lem1.3}, $\hat{\beta}_j(\phi)$ is strictly decreasing.
By \Cref{claim.lem1.1}, $\hat{\beta}_j(\phi)$ is continuous because otherwise $\Psi_j^{\tau(q)}(e)$ is discontinuous.
\end{proof}

\subsection{Proof of Lemma \ref{lemma:contestant-equilibrium-effort-condition}}

\begin{proof}
We first show the sufficiency of this lemma. 
\subsubsection{Sufficiency:} ~
Assuming the two conditions are satisfied, we show that $\vec{\Phi}(q)$ and $\vec{\beta}(q)$ constitute an sBNE. Given that $\vec{\Phi}(q)$ is a cumulative choice strategy, by \Cref{def:cumulative behavior}, for all $j\in[m]$, $\Phi_j(q)$ is non-decreasing on $[0,1]$, and thus is differentiable almost everywhere. 

We construct $\vec{\pi}(q)$ as follows: for any $q_i\in[0,1]$, if $\Phi_j'(q_i)$ exists for all $j\in[m]$, then let $\pi_j(q_i)=\Phi_j'(q_i)$ for all $j\in[m]$. Otherwise, let $\vec{\pi}_j(q_i)$ be an arbitrary distribution on $\arg\max_{j\in[m]}v(q_i)x_j(\Phi_j(q_i))-\beta_j(q_i)$. Let $E_1$ denote the set of all the former kind of $q_i$, and let $E_2=[0,1]\setminus E_1$ denote all the latter kind of $q_i$. For all $q_i\in E_1$, we have 

$$\sum_{j\in[m]}\pi_j(q_i)=\frac{d\sum_{j\in[m]}\Phi_j(q_i)}{dq_i}=\frac{dq_i}{dq_i}=1,$$ 
so $\vec{\pi}$ is a valid choice strategy. Moreover, as $E_2$ has measure zero, we have that $\vec{\Phi}(q)$ is the cumulation of $\vec{\pi}(q)$. 

Now we only need to verify that $\vec{\pi}(q)$ and $\vec{\beta}(q)$ represent an sBNE. Let $\tau(q)$ denote the mixed strategy represented by $\vec{\pi}(q)$ and $\vec{\beta}(q)$. For all $q_i\in E_2$, the equilibrium condition is already satisfied by the construction of $\vec{\pi}(q_i)$. For all $q_i\in E_1$, we prove that for any $j\in[m]$ with $\Phi'_j(q_i)>0$, it holds that 

$$(j,\beta_j(q_i))\in\arg\max_{J_i,e_i}\bar{u}(J_i,e_i;\tau(q);q_i).$$
We first prove the following claim:
\begin{claim}\label{claim.lem2.6}
    For any $j\in[m]$, the effort strategy satisfies that $\beta_j(q_i)\in \arg\max_{e_i}\bar{u}(j,e_i;\tau(q);q_i)$.
\end{claim}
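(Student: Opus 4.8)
The plan is to fix the contestant's quantile $q_i$ and the contest $j$, and to reduce the one-dimensional effort choice to a ``mimicking'' choice of a quantile $s$, so that optimality follows from a single-crossing (envelope) argument. The first step is to pin down the survival function $\Psi_j^{\tau(q)}$ under the strategy $\tau(q)$ just constructed. By the construction, an active competitor with quantile $t$ (i.e.\ $\Phi_j'(t)>0$) chooses contest $j$ and exerts effort $\beta_j(t)$, where $\beta_j$ is non-increasing and strictly decreasing on the active region; moreover $\Pr[J_{i'}=j\wedge q_{i'}\le s]=\Phi_j(s)$. Since $\beta_j(q_{i'})\ge\beta_j(s)$ exactly when $q_{i'}\le s$, a direct computation gives $\Psi_j^{\tau(q)}(\beta_j(s))=\Phi_j(s)$ for all $s\in[0,1]$. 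I would then note that $\Psi_j^{\tau(q)}$ is continuous and strictly decreasing on the support $[0,\bar e]$ with $\bar e=\beta_j(0)$, so every target level $\phi=\Psi_j^{\tau(q)}(e_i)$ is attained by the effort $\beta_j(s)$ for some $s$ with $\Phi_j(s)=\phi$; efforts $e_i>\bar e$ merely pay extra cost for the same prize $x_j(0)$ and are strictly dominated by $\bar e=\beta_j(0)$. Hence it suffices to compare $\beta_j(q_i)$ against the family $\{\beta_j(s):s\in[0,1]\}$.

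Next I would define the ``mimicking utility''
\[
H(s):=\bar u(j,\beta_j(s);\tau(q);q_i)=v(q_i)x_j(\Phi_j(s))-\beta_j(s),
\]
using $\Psi_j^{\tau(q)}(\beta_j(s))=\Phi_j(s)$, and show it is maximized at $s=q_i$. Since the condition-1 formula of \Cref{lemma:contestant-equilibrium-effort-condition} writes $\beta_j(s)=\int_s^1 v(t)(-x_j'(\Phi_j(t)))\Phi_j'(t)\,dt$ as an integral of an $L^1$ function, $\beta_j$ is absolutely continuous with $\frac{d}{ds}\beta_j(s)=v(s)x_j'(\Phi_j(s))\Phi_j'(s)$ almost everywhere, so $H$ is absolutely continuous and
\[
H'(s)=\bigl(v(q_i)-v(s)\bigr)\,x_j'(\Phi_j(s))\,\Phi_j'(s)\quad\text{a.e.}
\]
Because $x_j$ is decreasing ($x_j'\le0$), $\Phi_j'\ge0$, and $v$ is strictly decreasing, the factor $x_j'(\Phi_j(s))\Phi_j'(s)$ is non-positive, while $v(q_i)-v(s)<0$ for $s<q_i$ and $>0$ for $s>q_i$. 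Thus $H'(s)\ge0$ on $[0,q_i]$ and $H'(s)\le0$ on $[q_i,1]$, so integrating $H'$ shows $H(s)\le H(q_i)$ for all $s$, i.e.\ $s=q_i$ is optimal. Combining with the reduction of the first step yields $\bar u(j,e_i;\tau(q);q_i)\le H(s)\le H(q_i)=\bar u(j,\beta_j(q_i);\tau(q);q_i)$ for every $e_i\ge0$, which is exactly the claim.

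I expect the main obstacle to be the first step, namely making the effort-to-quantile correspondence fully rigorous rather than the sign calculation, which is routine. The delicate points are: (i) verifying $\Psi_j^{\tau(q)}$ is continuous and strictly decreasing on its support under the constructed $\tau(q)$ (paralleling \Cref{claim.lem1.1}), so that the inverse map $e_i\mapsto s$ is well defined; (ii) handling flat pieces of $\Phi_j$, where $\beta_j$ is constant and carries no probability mass, so that $H$ is merely locally constant there and the monotonicity conclusion is unaffected; and (iii) justifying differentiation under the integral sign and the use of $\Phi_j$'s almost-everywhere differentiability, which is precisely what absolute continuity of $\beta_j$ supplies. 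Once these technical points are settled, the single-crossing structure of $H'$ closes the argument.
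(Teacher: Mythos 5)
Your proposal is correct and follows essentially the same route as the paper: reduce any effort to a mimicked quantile via $\Psi_j^{\tau(q)}(\beta_j(s))=\Phi_j(s)$, dispose of efforts above $\beta_j(0)$ by domination, and then verify optimality at $s=q_i$ from the sign of $(v(q_i)-v(s))x_j'(\Phi_j(s))\Phi_j'(s)$. The only cosmetic difference is that you obtain the inequality by integrating $H'$, whereas the paper writes the utility difference directly as the integral $\int_{q'}^{q_i}(v(t)-v(q_i))(-x_j'(\Phi_j(t)))\,d\Phi_j(t)$ — the same quantity.
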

\begin{proof}
If $w_{j,1}=w_{j,n}$, $x_j(\phi)$ is a constant on $[0,1]$, and therefore $\beta_j(q_i)=0$ for all $q_i\in[0,1]$ by definition, which satisfies this claim. 
Now we assume $w_{j,1}>w_{j,n}$, thus, we know that $x_j'(\phi)<0$ for all $\phi\in(0,1)$. We can write the condition 1 as 

$$\beta_j(q_i)=\int_{t=q_i}^1v(t)(-x_j'(\Phi_j(t)))d\Phi_j(t).$$
We can observe that for all $q_1,q_2\in[0,1]$, $\beta_j(q_1)\leq\beta_j(q_2)$ if and only if $\Phi_j(q_1)\leq \Phi_j(q_2)$. This implies that for any $q_i\in[0,1]$, $\Psi_j^{\tau(q)}(\beta_j(q_i))=\Phi_j(q_i)$. Therefore, for any $q'\in[0,1]$, we have

$$\bar{u}(j,\beta_j(q');\tau(q);q_i)=v(q_i)x_j(\Phi_j(q'))-\beta_j(q').$$

Given any $q_i\in[0,1]$, consider a contestant with quantile $q_i$ choosing $J_i=j$. It is easy to see that any $e_i>\beta_j(0)$ is strictly dominated by $\beta_j(0)$, so we only need to consider $e_i=\beta_j(q')$ for all $q'\in[0,1]$ and check that $\bar{u}(j,\beta_j(q_i);\tau(q);q_i)\geq\bar{u}(j,\beta_j(q');\tau(q);q_i)$ by
\begin{align*}
    &\bar{u}(j,\beta_j(q_i);\tau(q);q_i)-\bar{u}(j,\beta_j(q');\tau(q);q_i)
    \\=&v(q_i)(x_j(\Phi_j(q_i))-x_j(\Phi_j(q')))-\beta_j(q_i)+\beta_j(q')
    \\=&v(q_i)(x_j(\Phi_j(q_i))-x_j(\Phi_j(q')))+\int_{t=q'}^{q_i}v(t)(-x_j'(\Phi_j(t)))d\Phi_j(t)
    \\=&\int_{t=q'}^{q_i}(v(t)-v(q_i))(-x_j'(\Phi_j(t)))d\Phi_j(t)
    \\ \geq&0
\end{align*}
The last inequality is obtained by discussing the cases of $q'\leq q$ and $q'\geq q$. In the former case, $v(t)\geq v(q_i)$ holds for all $t\in[q',q_i]$; In the latter case, $v(t)\leq v(q_i)$ holds for all $t\in[q_i,q']$. This completes the proof of the claim.
\end{proof}

By \Cref{claim.lem2.6}, it suffices to verify that for any $j\in [m]$ with $\Phi'_j(q_i)>0$, it holds that $(j,\beta_j(q_i))\in\arg\max_{j'\in[m]}$ $\bar{u}(j',\beta_{j'}(q_i);\tau(q);q_i)$. From the proof of \Cref{claim.lem2.6}, we know that $\bar{u}(j',\beta_{j'}(q_i);\tau(q);q_i)=v(q_i)x_{j'}(\Phi_{j'}(q_i))-\beta_{j'}(q_i)$. Thus, this is precisely condition 2.

\subsubsection{Necessity:} ~
Suppose $\vec{\Phi}(q)$ is the cummulation of $\vec{\pi}(q)$, which forms an sBNE with $\vec{\beta}(q)$. We may use $\Phi_j(q)$ and $\Phi_{\pi_j}(q)$ interchangeably. Next, we prove that the conditions 1 and 2 are satisfied.

We first demonstrate that condition 1 holds. According to \Cref{lemma:SBNE-deterministic-effort}, for each $j\in[m]$, there exists a continuous and strictly decreasing function $\hat{\beta}_j(\phi)$ such that $\beta_j(q_i)=\hat{\beta}_j(\Phi_{\pi_j}(q_i))$ for all $q_i\in[0,1]$ with $\pi_j(q_i)>0$. For convenience, for any $q_i$ with $\pi_j(q_i)=0$, we also define $\beta_j(q_i)=\hat{\beta}_j(\Phi_{\pi_j}(q))$. It follows that for any $q_i\in[0,1]$, $\Psi_j^{\tau(q)}(\beta_j(q_i))=\Phi_{\pi_j}(q_i)$. Now for any $q_i\in[0,1]$ with $\pi_j(q_i)>0$, for any other $q'\in[0,1]$, we have 

$$v(q_i)x_j(\Phi_j(q_i))-\beta_j(q_i)\geq v(q_i)x_j(\Phi_j(q'))-\beta_j(q'),$$ 
i.e., 
\begin{align*}
    v(q_i)(x_j(\Phi_j(q_i))-x_j(\Phi_j(q')))\geq &\beta_j(q_i)-\beta_j(q')\\
    =&\hat{\beta}_j(\Phi_j(q_i))-\hat{\beta}_j(\Phi_j(q')).
\end{align*}
Based on above inequality, for any $q'>q$ with $\Phi_j(q')>\Phi_j(q_i)$, we get 

$$\frac{v(q_i)(x_j(\Phi_j(q'))-x_j(\Phi_j(q_i)))}{\Phi_j(q')-\Phi_j(q_i)}\leq\frac{\hat{\beta}_j(\Phi_j(q'))-\hat{\beta}_j(\Phi_j(q_i))}{\Phi_j(q')-\Phi_j(q_i)}.$$
When $\Phi_j(q')$ tends to $\Phi_j(q_i)+0$, we obtain that $\hat{\beta}_j'(\Phi_j(q_i))$ $\geq v(q_i)x_j'(\Phi_j(q_i))$.

Similarly, for any $q'$ with $\Phi_j(q')<\Phi_j(q_i)$, we have $$\frac{v(q_i)(x_j(\Phi_j(q_i))-x_j(\Phi_j(q')))}{\Phi_j(q_i)-\Phi_j(q')}\geq\frac{\hat{\beta}_j(\Phi_j(q_i))-\hat{\beta}_j(\Phi_j(q'))}{\Phi_j(q_i)-\Phi_j(q')}.$$
When $\Phi_j(q')$ tends to $\Phi_j(q_i)-0$, we know that  $\hat{\beta}_j'(\Phi_j(q_i))$ $\leq v(q_i)x_j'(\Phi_j(q_i))$.

Thus, we obtain $\hat{\beta}_j'(\Phi_j(q_i))=v(q_i)x_j'(\Phi_j(q_i))$ for any $q_i\in[0,1]$ with $\pi_j(q_i)>0$. It follows that for all $q_i\in[0,1]$, $\hat{\beta}_j(\Phi_j(1))-\hat{\beta}_j(\Phi_j(q_i))=\int_{t=q_i}^{1}v(q_i)x_j'(\Phi_j(t))\pi_j(t)dt$. It is hot hard to see that $\beta_j(1)=0$, otherwise exerting $0$ effort will be strictly better. Therefore, for any $q_i\in[0,1]$ we have 
\begin{align*}
    \beta_j(q_i)=&\hat{\beta}_j(\Phi_j(q_i))\\
    =&\beta_j(1)-\int_{t=q_i}^{1}v(q_i)x_j'(\Phi_j(t))\pi_j(t)dt\\
    =&\int_{t=q_i}^{1}v(q_i)(-x_j'(\Phi_j(t)))\pi_j(t)dt\\
    =&\int_{t=q_i}^{1}v(q_i)(-x_j'(\Phi_j(t)))\Phi_j'(t)dt.
\end{align*}
Note that the last equation holds because $\Phi_j'(q)=\pi_j(q)$ holds almost everywhere for $q\in[0,1]$.

Next, we prove that condition 2 holds. For any $j\in[m]$ and $q_i\in[0,1]$ such that $\Phi_j'(q_i)>0$, we show that $v(q_i)x_j(\Phi_j(q_i))-\beta_j(q_i)=\max_{j'\in[m]}(v(q_i)x_{j'}(\Phi_{j'}(q_i))-\beta_{j'}(q_i))$. Since $\vec{\pi}(q)$ and $\vec{\beta}(q)$ form an sBNE, this holds for all $q_i\in[0,1]$ that $\pi_j(q_i)>0$. 

For any $q_i\in[0,1]$ with $\Phi_j'(q_i)>0$, $q_i$ is a cluster point of $\{q'\in[0,1]:\pi_j(q')>0\}$, otherwise there exists some $q'_1<q_i<q'_2$ such that $\Phi_j(q)$ is constant on $(q'_1,q'_2)$, contradicting with $\Phi_j'(q_i)>0$. Let such $q'$ tend to $q_i$, by definition of sBNE, we know $v(q')x_j(\Phi_j(q'))-\beta_j(q')-\max_{j'\in[m]}(v(q')x_{j'}(\Phi_{j'}(q'))-\beta_{j'}(q'))=0$ for all $q'$. By continuity, we finally get the condition 2, $v(q_i)x_j(\Phi_j(q_i))-\beta_j(q_i)-\max_{j'\in[m]}(v(q_i)x_{j'}(\Phi_{j'}(q_i))-\beta_{j'}(q_i))=0$. 
\end{proof}

\subsection{Proof of Theorem \ref{thm:contestant-equilibrium-choice-condition}}

\begin{proof}
Denote the expected utility as\begin{align*}
    \hat{u}_j(q_i)=&v(q_i)x_j(\Phi_j(q_i))-\beta_j(q_i)\\
    =&v(q_i)x_j(\Phi_j(q_i))-\int_{q_i}^1 v(t)\left(-x_j'(\Phi_j(t))\right)\Phi_j'(t)dt.
\end{align*} 
Recall the condition 2 in \Cref{lemma:contestant-equilibrium-effort-condition} which states that for all $q_i\in[0,1]$
\begin{equation}
\label{eq:argmax-uj}
    \{j\in[m]:\Phi_j'(q_i)>0\}\subseteq \arg\max_{j\in[m]} \hat{u}_j(q_i).
\end{equation}

We prove that \Cref{eq:argmax-uj} holds for all $q_i\in[0,1]$ if and only if $\{j\in[m]:\Phi_j'(q_i)>0\}\subseteq \arg\max_{j\in[m]}x_j(\Phi_j(q_i))$ holds for all $q_i\in[0,1]$.

\subsubsection{Sufficiency:} ~
Assume that $\{j\in[m]:\Phi_j'(q_i)>0\}\subseteq \arg\max_{j\in[m]}$ $x_j(\Phi_j(q_i))$ holds for all $q_i\in[0,1]$. Define $X(q_i):=\max_{j\in[m]}x_j(\Phi_j(q_i))$. Then for any $j\in[m]$ and $q_i\in[0,1]$, we have $x_j(\Phi_j(q_i))\leq X(q_i)$, and the equality holds when $\Phi_j'(q_i)>0$. 

For any $j\in[m]$ and $q_i\in[0,1]$, applying integration by parts, we obtain
\begin{align*}
    \hat{u}_j(q_i)&=v(q_i)x_j(\Phi_j(q_i))-\int_{q_i}^1 v(t)\left(-x_j'(\Phi_j(t))\right)\Phi_j'(t)dt\\
    &=v(q_i)x_j(\Phi_j(q_i))+\int_{t=q_i}^1 v(t)dx_j(\Phi_j(t))\\
    &=v(q_i)x_j(\Phi_j(q_i))+v(1)x_j(\Phi_j(1))-v(q_i)x_j(\Phi_j(q_i))\\& -\int_{t=q_i}^1 x_j(\Phi_j(t))dv(t)\\
    &=v(1)x_j(\Phi_j(1))-\int_{t=q_i}^1 x_j(\Phi_j(t))dv(t).
\end{align*}
As $v(q)$ is decreasing and $x_j(\Phi_j(t))\leq X(t)$ for all $t\in[0,1]$, we have 

$$\hat{u}_j(q_i)\leq v(1)X(1)-\int_{t=q_i}^1 X(t)dv(t).$$

Consider any $j\in[m]$. We will show that, for all $q_i\in[0,1]$ with $\Phi'_j(q_i)>0$, it holds that $\hat{u}_j(q_i)=v(1)X(1)-\int_{t=q_i}^1 X(t)dv(t)$, which implies that $j\in \arg\max_{j\in[m]}\hat{u}_j(q_i)$. Note that if $\Phi_j(1)=0$, then it holds that $\Phi'_j(q_i)=0$ for all $q_i\in[0,1]$, so the claim holds trivially.
Thus, we consider the case of  $\Phi_j(1)>0$. 

Let $\underline{q}_j=\inf\{q_i\in[0,1]:\Phi'_j(q_i)>0\}$. We claim that for all $q_i\in[\underline{q}_j,1]$, we have $x_j(\Phi_j(q_i))=X(q_i)$. If this claim holds, we can substitute $x_j(\Phi_j(q_i))$ with $X(q_i)$ in $\hat{u}_j(q_i)$ and demonstrate that $\hat{u}_j(q_i)$ achieves its maximum value.

Now, we prove this claim. For any $q_i\in[\underline{q}_j,1]$ with $\Phi_j'(q_i)>0$,  the claim holds by assumption. Therefore, we only need to prove it for $q_i\in(\underline{q}_j,1]$ such that either $\Phi_j'(q_i)=0$ or $\Phi_j$ is not differentiable at $q_i$. 
Take $\underline{q}'=\sup\{q'\in[\underline{q}_j,q_i]:\Phi'_j(q')>0\}$. We have $\underline{q}'\leq q_i$ and $\Phi_j(q_i)=\Phi_j(\underline{q}')+\int_{\underline{q}'}^{q_i}\Phi'_j(t)dt=\Phi_j(\underline{q}')$. Since it holds that $x_j(\Phi_j(q'))= X(q')$ for all $q'$ with $\Phi_j'(q')>0$, by continuity of $x_j(q)$ and $X(q)$, $x_j(\Phi_j(\underline{q}'))=X(\underline{q}')$ also holds. As $X(q)$ is non-increasing, we have $X(q_i)\leq X(\underline{q}')$. Combining these together, we have 

$$X(q_i)\leq X(\underline{q}')=x_j(\Phi_j(\underline{q}'))=x_j(\Phi_j(q_i)).$$
On the other hand, by definition, we know $X(q_i)\geq x_j(\Phi_j(q_i))$. Thus, $x_j(\Phi_j(q_i))=X(q_i)$.

For any $q_i\in[0,1]$ with $\Phi_j'(q_i)>0$, we have $[q_i,1]\subseteq [\underline{q}_j,1]$, and by the above claim we obtain \begin{align*}
    \hat{u}_j(q_i)=v(1)X(1)-\int_{q_i}^1 X(t)dv(t).
\end{align*}
This completes the proof of sufficiency.

\subsubsection{Necessity:} ~
Assume that \Cref{eq:argmax-uj} holds for all $q_i\in[0,1]$. We prove that for all $j\in[m]$ and $q_i\in[0,1]$ with $\Phi_j'(q_i)>0$, it holds that $x_j(\Phi_j(q_i))=\max_{j'\in[m]}x_{j'}(\Phi_{j'}(q_i))$. 

Suppose, for the sake of contradiction, that there exists $j\in[m]$ and $q_i\in[0,1]$, such that $\Phi_j'(q_i)>0$ but $x_j(\Phi_j(q_i))<\max_{j'\in[m]}x_{j'}(\Phi_{j'}(q_i))$. Let $j'$ be a contest satisfying that $x_j(\Phi_j(q_i))<x_{j'}(\Phi_{j'}(q_i))$. We discuss three cases:

\noindent \textbf{(a)} $\hat{u}_j(q_i)\leq \hat{u}_{j'}(q_i)$.

First, since $x_j(\Phi_j(q_i))<x_{j'}(\Phi_{j'}(q_i))$, by the continuity of $x_j$ and $x_{j'}$, there exists $q_1<q_i$ such that $x_j(\Phi_j(q))<x_{j'}(\Phi_{j'}(q))$ for all $q\in[q_1,q_i]$. Due to $\Phi_j'(q_i)>0$, we have $\Phi_j(q_1)<\Phi(q_i)$, and therefore there exists $q_2\in(q_1,q_i)$ such that $\Phi_j'(q_2)>0$. By definition, we have 

$$\hat{u}_j(q_2)=\hat{u}_j(q_i)-\int_{t=q_2}^{q_i}x_j(\Phi_j(t))dv(t)$$ 
and 
$$\hat{u}_{j'}(q_2)=\hat{u}_{j'}(q_i)-\int_{t=q_2}^{q_i}x_{j'}(\Phi_{j'}(t))dv(t).$$ 
Since $v(q)$ is decreasing, it follows that $\hat{u}_j(q_2)<\hat{u}_{j'}(q_2)$, but $\Phi_j'(q_2)>0$, contradicting with \Cref{eq:argmax-uj}.

\noindent \textbf{(b)} $\hat{u}_j(q_i)>\hat{u}_{j'}(q_i)$ and $\Phi_{j'}(q_i)=\Phi_{j'}(1)$. 

In this case, for any $q'\in[q_i,1]$, we have $\Phi_{j'}(q')=\Phi_{j'}(q_i)$. Therefore, it holds that  

$$x_j(\Phi_j(q'))\leq x_j(\Phi_j(q_i))<x_{j'}(\Phi_{j'}(q_i))=x_{j'}(\Phi_{j'}(q')).$$
It follows that 
\begin{align*}
    \hat{u}_j(q_i)&=v(1)x_j(\Phi_j(1))-\int_{t=q_i}^1 x_j(\Phi_j(t))dv(t)\\
    &<v(1)x_{j'}(\Phi_{j'}(1))-\int_{t=q_i}^1 x_{j'}(\Phi_{j'}(t))dv(t)\\
    &=\hat{u}_{j'}(q_i),
\end{align*}
which contradicts with \Cref{eq:argmax-uj}.

\noindent \textbf{(c)} $\hat{u}_j(q_i)>\hat{u}_{j'}(q_i)$ and $\Phi_{j'}(q_i)<\Phi_{j'}(1)$. 

Let $q_1=\inf\{q'\in[q_i,1]:\Phi_{j'}'(q')>0\}$, then $\Phi_{j'}(q_1)=\Phi_{j'}(q_i)$. By \Cref{eq:argmax-uj}, we know that $\hat{u}_{j'}(q')\geq \hat{u}_{j}(q')$ for all $q'$ with $\Phi_{j'}'(q')>0$. Thus, as $q'$ tends to $q_1$, by continuity we have $\hat{u}_{j'}(q_1)\geq \hat{u}_{j}(q_1)$. 
Similar to case (b), for any $q'\in[q_i,q_1]$, we have $\Phi_{j'}(q')=\Phi_{j'}(q_i)$, and therefore $x_j(\Phi_j(q'))<x_{j'}(\Phi_{j'}(q'))$.
It follows that 
\begin{align*}
    \hat{u}_j(q_i)&=\hat{u}_j(q_1)-\int_{t=q_i}^{q_1} x_j(\Phi_j(t))dv(t)\\
    &<\hat{u}_{j'}(q_1)-\int_{t=q_i}^{q_1} x_{j'}(\Phi_{j'}(t))dv(t)\\
    &=\hat{u}_{j'}(q_i),
\end{align*}
which contradicts.

This completes the proof of necessity.
\end{proof}

\subsection{Proof of \Cref{theorem:contestant-equilibrium-wine22}}
\begin{proof}
    Combining \Cref{thm:contestant-equilibrium-choice-condition} and applying the characterization on cummulative equilibrium behavior in \cite{DLLQ22}, we can get the cumulative equilibrium choice strategy in this model. 
\end{proof}

\subsection{Proof of Lemma \ref{lemma:known-competitor-number-same-prize}}
\begin{proof}
Define $p=\Phi_j(1)$ and $q=\Phi_j(q_i)$. We will prove that
\begin{align*}
    &\sum_{k=1}^n\binom{n-1}{k-1}p^{k-1}(1-p)^{n-k}\sum_{l=1}^{k}w_{j,l}\binom{k-1}{l-1}(\frac{q}{p})^{l-1}(1-\frac{q}{p})^{k-l}
    \\=&\sum_{k=1}^nw_{j,k}\binom{n-1}{k-1}q^{k-1}(1-q)^{n-k}.
\end{align*}
For convenience, we rewrite $n-1$ as $n$, $k-1$ as $k$, and $l-1$ as $l$, then we need to prove that
\begin{align*}
    &\sum_{k=0}^{n}\binom{n}{k}p^{k}(1-p)^{n-k}\sum_{l=0}^{k}w_{j,l+1}\binom{k}{l}(\frac{q}{p})^{l}(1-\frac{q}{p})^{k-l}
    \\=&\sum_{k=0}^nw_{j,k+1}\binom{n}{k}q^{k}(1-q)^{n-k}.
\end{align*}
We show that the both sides have equal coefficient for each $w_{j,l+1}$, as follows: 
\begin{align*}
&\binom{n}{l}q^{l}(1-q)^{n-l}
\\=&\binom{n}{l}p^{l}(\frac{q}{p})^{l}(p-q+1-p)^{n-l}
\\=&\binom{n}{l}p^{l}(\frac{q}{p})^{l}\sum_{t=0}^{n-l}\binom{n-l}{t}(p-q)^t(1-p)^{n-l-t}
\\=&\binom{n}{l}p^{l}(\frac{q}{p})^{l}\sum_{k=l}^{n}\binom{n-l}{k-l}(p-q)^{k-l}(1-p)^{n-k}\\
=&\sum_{k=l}^{n}\binom{n}{n-l}\binom{n-l}{n-k}p^{k}(1-\frac{q}{p})^{k-l}(1-p)^{n-k}(\frac{q}{p})^{l}
\\=&\sum_{k=l}^{n}\binom{n}{n-k}\binom{k}{l}p^{k}(1-p)^{n-k}(\frac{q}{p})^{l}(1-\frac{q}{p})^{k-l}.
\end{align*}
This completes the proof. 
\end{proof}

\section{Missing Proofs in Section \ref{sec:designer}}

\subsection{Proof of Lemma \ref{lemma:designer-utility-effort-objective}}

\begin{proof}
Let $\beta^*_j(q)$ denote the effort strategy in contest $j$ induced by $\Phi_j^*(q)$, as described in \Cref{lemma:contestant-equilibrium-effort-condition}. For each $k\in[n]$, the expected effort of the $k$-th contestant in contest $j$ is given by 
\begin{align*}
    \E[e_j^{(k)}]&=n\E[e_i|J_i=j\land \mathrm{rank}(i,\boldsymbol{J},\boldsymbol{e})=k]
    \\&=n\int_{0}^1\beta_j^*(t)\Pr[\mathrm{rank}(i,\boldsymbol{J},\boldsymbol{e})=k|q_i=t]{\Phi_j^*}'(t)dt
    \\&=n\int_{0}^1\beta_j^*(t)g_k(\Phi_j^*(t)){\Phi_j^*}'(t)dt.
\end{align*}
Recall that $\beta_j^*(q_i)=\int_{q_i}^1v(t)(-\frac{dx_j(\Phi_j^*(t))}{dt})dt$. Observe that $x_j(\Phi_j^*(t))=x_j(x_j^{-1}(Q^{-1}(t)))=Q^{-1}(t)$ whenever $\Phi_j^*(t)>0$. Thus, we have $\beta_j^*(q_i)=\int_{t=q_i}^1v(t)d(-Q^{-1}(t))$.

It follows that
\begin{align*}
    \E[e_j^{(k)}]
    =&n\int_{t=0}^1\int_{s=t}^1v(s)d(-Q^{-1}(s))\cdot g_k(\Phi_j^*(t)){\Phi_j^*}'(t)dt
    \\=&n\int_{s=0}^1v(s)\int_{0}^s g_k(\Phi_j^*(t)){\Phi_j^*}'(t)dtd(-Q^{-1}(s))
    \\=&n\int_{s=0}^1v(s)\int_0^{\Phi_j^*(s)} g_k(z)dzd(-Q^{-1}(s))
\end{align*}
By definition, $G(\phi;\vec{\alpha}_j)=n\sum_{k=1}^n\alpha_{j,k}\int_0^{\phi} g_k(t)dt$, the expected utility of designer $j$ can be calculated as follows:
\begin{align*}
&\hat{R}_j(x_j(\phi),x_{-j}^{-1}(\phi),\vec{\alpha}_j)
\\=&\sum_{k=1}^n\alpha_{j,k}\E[e_j^{(k)}]
\\=&\int_{s=0}^1v(s)G(\Phi_j^*(s);\vec{\alpha}_j)d(-Q^{-1}(s))
\\=&\int_{0}^{+\infty}v(Q(x))G(x_j^{-1}(x);\vec{\alpha}_j)dx
\\=&\int_{0}^{+\infty}v(x_j^{-1}(x)+x_{-j}^{-1}(x))G(x_j^{-1}(x);\vec{\alpha}_j)dx.
\end{align*}
Note that if we define $v(q)=0$ for all $q>1$, the third equation holds by changing variable as $x=Q^{-1}(s)$, for which $\Phi_j^*(s)=x_j^{-1}(Q^{-1}(t))=x_j^{-1}(x)$. 
\end{proof}

\subsection{Proof of Lemma \ref{lemma: sufficient condition on weight-monotone}}
\begin{proof}
    Observe that 
    $$\sum_{k=1}^n\alpha_{j,k}g_k(\phi)=\sum_{k=1}^{n}k(\alpha_{j,k}-\alpha_{j,k+1})\xi_k(\phi),$$
    where $\alpha_{j,n+1}$ is defined as $0$. Since $\xi_k(\phi)$ is non-increasing on $[0,1]$ and $(\alpha_{j,k}-\alpha_{j,k+1})$ is non-negative for all $k\in[n]$, it follows that $\sum_{k=1}^n\alpha_{j,k}g_k(\phi)$ is non-increasing in $\phi\in[0,1]$. Thus, $\vec{\alpha}_j$ is weight-monotone by definition.
\end{proof}

\subsection{Proof of Theorem \ref{thm: compare under single-crossing}}
\begin{proof}
By \Cref{lemma:designer-utility-effort-objective}, the objective function can be expressed as:
\begin{align*}
    &\hat{R}_j(x_{\vec{w}_j}(\phi),x_{-j}^{-1}(x),\vec{\alpha}_j)\\
=&\int_{0}^{+\infty}v(x_{\vec{w}_j}^{-1}(x)+x_{-j}^{-1}(x))G(x_{\vec{w}_j}^{-1}(x);\vec{\alpha}_j)dx.
\end{align*}
By changing variable as $x=x_{\vec{w}_j}(\phi)$, we have 
\begin{align}
    &\hat{R}_j(x_{\vec{w}_j}(\phi),x_{-j}^{-1}(x),\vec{\alpha}_j)\nonumber\\
=&\int_{0}^{1}v(\phi+x_{-j}^{-1}(x_{\vec{w}_j}(\phi)))G(\phi;\vec{\alpha}_j)\frac{-d x_{\vec{w}_j}(\phi)}{d\phi}d\phi.\label{eq:thm2-Rj}
\end{align}

We first prove the following two claims:
\begin{claim}\label{claim.thm2.1}
    $\int_0^1\phi\frac{-d x_{\vec{w}_j}(\phi)}{d\phi}d\phi\geq \int_0^1\phi\frac{-d x_{\vec{w}_j'}(\phi)}{d\phi}d\phi$.
\end{claim}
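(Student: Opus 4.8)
The plan is to reduce the claim to a comparison of two elementary quantities by integrating by parts, and then to dispatch each piece with the defining conditions of single-crossing-dominance. Taking $u=\phi$ and $dv=\frac{-dx_{\vec{w}_j}(\phi)}{d\phi}\,d\phi$ (so that $v=-x_{\vec{w}_j}(\phi)$), integration by parts gives
\[
\int_0^1 \phi\,\frac{-dx_{\vec{w}_j}(\phi)}{d\phi}\,d\phi=\bigl[-\phi\,x_{\vec{w}_j}(\phi)\bigr]_0^1+\int_0^1 x_{\vec{w}_j}(\phi)\,d\phi=-x_{\vec{w}_j}(1)+\int_0^1 x_{\vec{w}_j}(\phi)\,d\phi,
\]
and the same identity holds with $\vec{w}_j'$ in place of $\vec{w}_j$. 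Subtracting, Claim~\ref{claim.thm2.1} is equivalent to
\[
\int_0^1\bigl(x_{\vec{w}_j}(\phi)-x_{\vec{w}_j'}(\phi)\bigr)\,d\phi\;\ge\;x_{\vec{w}_j}(1)-x_{\vec{w}_j'}(1).
\]

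Next I would bound the two sides separately. The left-hand side is precisely the difference of the areas under the two interim allocation functions, which is non-negative by condition~1 of single-crossing-dominance; indeed a direct Beta-integral computation gives $\int_0^1 x_{\vec{w}_j}(\phi)\,d\phi=\frac1n\sum_k w_{j,k}$, so condition~1 is just the statement that $\vec{w}_j$ carries at least as much total budget as $\vec{w}_j'$. For the right-hand side, the goal is to show it is non-positive, i.e. $x_{\vec{w}_j}(1)\le x_{\vec{w}_j'}(1)$. This is exactly where the single-crossing property (condition~2) is used: since $x_{\vec{w}_j}$ is single-crossing with respect to $x_{\vec{w}_j'}$, there is a threshold $\phi_0$ past which $x_{\vec{w}_j}(\phi)\le x_{\vec{w}_j'}(\phi)$, and evaluating at the right endpoint $\phi=1$ yields $x_{\vec{w}_j}(1)\le x_{\vec{w}_j'}(1)$, equivalently $w_{j,n}\le w_{j,n}'$. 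Combining the two bounds closes the argument. Note that this can also be read off from $D(\phi):=x_{\vec{w}_j'}(\phi)-x_{\vec{w}_j}(\phi)$: condition~2 forces $D\le 0$ then $D\ge 0$, while condition~3 forces $D$ to be single-peaked, and the only way these are consistent is for $D$ to stay non-negative after its single crossing, so that $D(1)\ge 0$.

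I expect the boundary term to be the main obstacle. The area comparison coming from condition~1 says nothing about the value $x_{\vec{w}_j}(1)-x_{\vec{w}_j'}(1)=w_{j,n}-w_{j,n}'$, and pinning down its sign is precisely what forces the use of the single-crossing \emph{direction} (condition~2) rather than merely the budget inequality. The subtlety is sharpest when the two structures exhaust the same budget, so that condition~1 holds with equality and the left-hand side vanishes; then the claim collapses to showing $w_{j,n}\le w_{j,n}'$, and one must argue that the single-peaked difference $D$ with zero integral necessarily changes sign from negative to positive, thereby fixing the sign of $D(1)$. Recognizing that it is the single-crossing structure at the right endpoint — not the area/budget comparison — that controls this boundary contribution is the crux of the proof.
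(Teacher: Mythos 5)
Your proof is correct and takes essentially the same route as the paper's: integrate by parts to reduce the claim to $-x_{\vec{w}_j}(1)+\int_0^1 x_{\vec{w}_j}(\phi)\,d\phi\geq -x_{\vec{w}_j'}(1)+\int_0^1 x_{\vec{w}_j'}(\phi)\,d\phi$, then handle the integrals via condition~1 and the boundary term via condition~2 evaluated at the right endpoint. The endpoint subtlety you flag (pinning down the sign of $x_{\vec{w}_j}(1)-x_{\vec{w}_j'}(1)=w_{j,n}-w_{j,n}'$ when the crossing point degenerates to $\phi_0=1$) is treated no more carefully in the paper, which simply reads $x_{\vec{w}_j}(1)\leq x_{\vec{w}_j'}(1)$ off the definition of single-crossing-dominance.
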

\begin{proof}
    We have the following through integration by parts:
\begin{align*}
    \int_0^1\phi\frac{-d x_{\vec{w}_j}(\phi)}{d\phi} d\phi
    =&-\phi x_{\vec{w}_j}(\phi)\vert_0^1+\int_0^1x_{\vec{w}_j}(\phi)d\phi\\
    =&-x_{\vec{w}_j}(1)+\int_0^1x_{\vec{w}_j}(\phi)d\phi
\end{align*}
Similarly, we have \begin{align*}
    \int_0^1\phi\frac{-d x_{\vec{w}_j'}(\phi)}{d\phi} d\phi
    =-x_{\vec{w}_j'}(1)+\int_0^1x_{\vec{w}_j'}(\phi)d\phi.
\end{align*}
Recall that  $\int_0^{1}x_{\vec{w}_j}(\phi)d\phi \geq\int_0^{1}x_{\vec{w}_j'}(\phi)d\phi$ and $x_{\vec{w}_j}(1)\leq x_{\vec{w}_j'}(1)$ by the definition of single-crossing-dominating. The claim follows immediately.
\end{proof}
\begin{claim}\label{claim.thm2.2}
    $\frac{G(\phi;\vec{\alpha}_j)}{\phi}$ is non-increasing in $\phi\in(0,1]$.
\end{claim}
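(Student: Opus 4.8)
The plan is to reduce the claim to a standard property of concave functions that vanish at the origin. By the weight-monotonicity hypothesis in \Cref{thm: compare under single-crossing}, the function $G(\phi;\vec{\alpha}_j)$ is concave on $[0,1]$; moreover, directly from its definition, $G(0;\vec{\alpha}_j)=n\sum_{k=1}^n\alpha_{j,k}\int_0^0 g_k(t)\,dt=0$. The quantity $G(\phi;\vec{\alpha}_j)/\phi$ is exactly the slope of the chord joining the origin $(0,G(0;\vec{\alpha}_j))$ to the point $(\phi,G(\phi;\vec{\alpha}_j))$, and for a concave function passing through the origin this chord slope is non-increasing, which is precisely the assertion to be proved.

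Concretely, I would fix $0<\phi_1<\phi_2\le 1$ and write $\phi_1$ as the convex combination $\phi_1=\bigl(1-\tfrac{\phi_1}{\phi_2}\bigr)\cdot 0+\tfrac{\phi_1}{\phi_2}\cdot\phi_2$. Applying concavity of $G(\cdot;\vec{\alpha}_j)$ and then using $G(0;\vec{\alpha}_j)=0$ gives $G(\phi_1;\vec{\alpha}_j)\ge \tfrac{\phi_1}{\phi_2}\,G(\phi_2;\vec{\alpha}_j)$; dividing through by $\phi_1>0$ yields $G(\phi_1;\vec{\alpha}_j)/\phi_1\ge G(\phi_2;\vec{\alpha}_j)/\phi_2$, which is exactly monotonicity of the ratio. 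This argument needs only concavity and the value at $0$, and avoids any differentiability assumption.

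If a differential argument is preferred instead, one can note that concavity makes $G'(\cdot;\vec{\alpha}_j)$ non-increasing, so $G(\phi;\vec{\alpha}_j)=\int_0^\phi G'(t;\vec{\alpha}_j)\,dt\ge \phi\,G'(\phi;\vec{\alpha}_j)$, whence the numerator $\phi\,G'(\phi;\vec{\alpha}_j)-G(\phi;\vec{\alpha}_j)$ of $\tfrac{d}{d\phi}\bigl(G(\phi;\vec{\alpha}_j)/\phi\bigr)$ is non-positive. Either route is elementary, and I do not anticipate a genuine obstacle here; the only subtlety is to invoke the correct consequence of weight-monotonicity, namely concavity of $G(\cdot;\vec{\alpha}_j)$ together with $G(0;\vec{\alpha}_j)=0$, rather than the equivalent restatement that $\sum_{k}\alpha_{j,k}g_k(\phi)$ is non-negative and non-increasing, which encodes the same hypothesis but is not what the chord argument directly consumes.
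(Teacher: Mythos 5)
Your argument is correct and is essentially identical to the paper's own proof: both write $\phi_1$ as a convex combination of $0$ and $\phi_2$, apply concavity of $G(\cdot;\vec{\alpha}_j)$ (which follows from weight-monotonicity), and use $G(0;\vec{\alpha}_j)=0$ to conclude that the chord slope from the origin is non-increasing. No gaps.
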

\begin{proof}
    Since $\vec{\alpha}_j$ is weight-monotone and non-negative, we have that $G(\phi;\vec{\alpha}_j)$ is concave and non-decreasing in $\phi$. For any $0<\phi_1<\phi_2$, we have $G(\phi_1;\vec{\alpha}_j)\geq \frac{\phi_2-\phi_1}{\phi_2}G(0;\vec{\alpha}_j)+\frac{\phi_1}{\phi_2}G(\phi_2;\vec{\alpha}_j)$. By definition $G(\phi_1;\vec{\alpha}_j)=0$, so it follows that $G(\phi_1;\vec{\alpha}_j)\geq \frac{\phi_1}{\phi_2}G(\phi_2;\vec{\alpha}_j)$, i.e., $\frac{G(\phi_1;\vec{\alpha}_j)}{\phi_1}\geq \frac{G(\phi_2;\vec{\alpha}_j)}{\phi_2}$.
\end{proof}

To help our analysis, we define the following two functions: \begin{align*}
    &V(\phi,x):=v(\phi+x_{-j}^{-1}(x))\frac{G(\phi;\vec{\alpha}_j)}{\phi},\\
    &S_{\vec{w}_j}(t):=\int_0^{t}\phi\frac{-d x_{\vec{w}_j}(\phi)}{d\phi}d\phi.
\end{align*}
We can rewrite \Cref{eq:thm2-Rj} as 
\begin{align*}
    \hat{R}_j(x_{\vec{w}_j}(\phi),x_{-j}^{-1}(x),\vec{\alpha}_j)
=&\int_{0}^{1}V(\phi,x_{\vec{w}_j}(\phi))\frac{d S_{\vec{w}_j}(\phi)}{d\phi}d\phi\\
=&\int_{\phi=0}^{1}V(\phi,x_{\vec{w}_j}(\phi))d S_{\vec{w}_j}(\phi).
\end{align*}
And similarly, 
\begin{align*}
    \hat{R}_j(x_{\vec{w}_j'}(\phi),x_{-j}^{-1}(x),\vec{\alpha}_j)=\int_{\phi=0}^{1}V(\phi,x_{\vec{w}_j'}(\phi))d S_{\vec{w}_j'}(\phi).
\end{align*}
To show the theorem, we need to prove that 

$$\int_{\phi=0}^{1}V(\phi,x_{\vec{w}_j}(\phi))d S_{\vec{w}_j}(\phi)\geq \int_{\phi=0}^{1}V(\phi,x_{\vec{w}_j'}(\phi))d S_{\vec{w}_j'}(\phi).$$
By \Cref{claim.thm2.1}, we have $S_{\vec{w}_j}(1)\geq S_{\vec{w}_j'}(1)$. Since $V(\phi,x)$ is always non-negative, the following claim is sufficient to prove the theorem.
\begin{claim}\label{claim.thm2.3}
    For any $\phi_1,\phi_2$ such that $S_{\vec{w}_j}(\phi_1)=S_{\vec{w}_j'}(\phi_2)$, it holds that $V(\phi_1,x_{\vec{w}_j}(\phi_1))\geq V(\phi_2,x_{\vec{w}_j'}(\phi_2))$.
\end{claim}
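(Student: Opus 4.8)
The goal is to prove Claim~\ref{claim.thm2.3}: for any $\phi_1,\phi_2$ with $S_{\vec{w}_j}(\phi_1)=S_{\vec{w}_j'}(\phi_2)$, we have $V(\phi_1,x_{\vec{w}_j}(\phi_1))\geq V(\phi_2,x_{\vec{w}_j'}(\phi_2))$. Recall $V(\phi,x)=v(\phi+x_{-j}^{-1}(x))\frac{G(\phi;\vec{\alpha}_j)}{\phi}$, which is a product of two nonnegative factors. The plan is to show that, under the single-crossing-dominance hypothesis, the pairing $S_{\vec{w}_j}(\phi_1)=S_{\vec{w}_j'}(\phi_2)$ forces $\phi_1\leq\phi_2$, and then to argue that each of the two factors of $V$ is weakly larger when evaluated at the left configuration $(\phi_1,x_{\vec{w}_j}(\phi_1))$ than at the right configuration $(\phi_2,x_{\vec{w}_j'}(\phi_2))$.

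First I would establish the ordering $\phi_1\leq\phi_2$. The function $S_{\vec{w}_j}(t)=\int_0^t\phi\,\frac{-dx_{\vec{w}_j}(\phi)}{d\phi}\,d\phi$ is nondecreasing in $t$ (its integrand is nonnegative since $x_{\vec{w}_j}$ is nonincreasing), and similarly for $S_{\vec{w}_j'}$. The key is condition~3 of single-crossing-dominance: $\frac{-dx_{\vec{w}_j}}{d\phi}$ is single-crossing with respect to $\frac{-dx_{\vec{w}_j'}}{d\phi}$, meaning the density of $x_{\vec{w}_j}$ starts above and ends below that of $x_{\vec{w}_j'}$. Weighting by $\phi$ and integrating, I expect to show that $S_{\vec{w}_j}(t)\geq S_{\vec{w}_j'}(t)$ for small $t$ fails in general, so I must instead directly compare the inverse pairing: if $S_{\vec{w}_j}(\phi_1)=S_{\vec{w}_j'}(\phi_2)$, I would argue via the single-crossing of the derivatives that $\phi_1\leq\phi_2$. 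Concretely, the crossing point $\phi_0$ of the two densities lets me compare $S_{\vec{w}_j}$ and $S_{\vec{w}_j'}$ region by region, and I would use monotonicity together with the endpoint inequality $S_{\vec{w}_j}(1)\geq S_{\vec{w}_j'}(1)$ from Claim~\ref{claim.thm2.1} to deduce the required ordering of the matched abscissae.

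Given $\phi_1\leq\phi_2$, I would handle the two factors of $V$ separately. For the factor $\frac{G(\phi;\vec{\alpha}_j)}{\phi}$, Claim~\ref{claim.thm2.2} gives that it is nonincreasing in $\phi$, so $\frac{G(\phi_1;\vec{\alpha}_j)}{\phi_1}\geq\frac{G(\phi_2;\vec{\alpha}_j)}{\phi_2}$, exactly the direction I want. For the factor $v(\phi+x_{-j}^{-1}(x))$, I would use that $v$ is decreasing, so it suffices to show $\phi_1+x_{-j}^{-1}(x_{\vec{w}_j}(\phi_1))\leq\phi_2+x_{-j}^{-1}(x_{\vec{w}_j'}(\phi_2))$. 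Here I would invoke condition~2 (single-crossing of the allocation functions themselves) together with $\phi_1\leq\phi_2$ and the fact that $x_{-j}^{-1}$ is nonincreasing, to compare the two skill-quantile arguments. Multiplying the two nonnegative inequalities then yields $V(\phi_1,x_{\vec{w}_j}(\phi_1))\geq V(\phi_2,x_{\vec{w}_j'}(\phi_2))$.

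The main obstacle I anticipate is the first step: establishing $\phi_1\leq\phi_2$ from the single-crossing of the derivatives. The weighting by $\phi$ inside $S$ interacts subtly with the crossing structure, and a naive claim that $S_{\vec{w}_j}\geq S_{\vec{w}_j'}$ pointwise need not hold; I expect to need a careful case analysis around the crossing point $\phi_0$, possibly splitting on whether the matched points lie to the left or right of $\phi_0$ and using that $S_{\vec{w}_j}-S_{\vec{w}_j'}$ is unimodal (first increasing then decreasing in a suitable sense) because of the single crossing. A secondary subtlety is the second factor: coupling condition~2 on $x_{\vec{w}_j}$ versus $x_{\vec{w}_j'}$ with the already-derived $\phi_1\leq\phi_2$ to control the argument of the nonincreasing function $x_{-j}^{-1}$ requires tracking which side of the crossing point of the allocation functions the matched values fall on. I would treat these cases explicitly rather than attempting a single unified estimate.
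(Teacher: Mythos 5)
Your overall skeleton matches the paper's: show $V$ is non-increasing in $\phi$ and non-decreasing in $x$, then reduce the claim to the two inequalities $\phi_1\leq\phi_2$ and $x_{\vec{w}_j}(\phi_1)\geq x_{\vec{w}_j'}(\phi_2)$. On the first inequality you are essentially correct, and in fact more is true than you suspect: the pointwise domination $S_{\vec{w}_j}(\phi)\geq S_{\vec{w}_j'}(\phi)$ \emph{does} hold for all $\phi$ (this is the paper's Claim~\ref{claim.thm2.4}); the difference $S_{\vec{w}_j}-S_{\vec{w}_j'}$ starts at $0$, is non-decreasing up to the crossing point of the derivatives, and then decreases down to the non-negative endpoint value $S_{\vec{w}_j}(1)-S_{\vec{w}_j'}(1)$, so it never goes negative. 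With that, $S_{\vec{w}_j}(\phi_1)=S_{\vec{w}_j'}(\phi_2)\leq S_{\vec{w}_j}(\phi_2)$ and monotonicity of $S_{\vec{w}_j}$ give $\phi_1\leq\phi_2$ directly; no inverse-pairing argument is needed.

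The genuine gap is in the second inequality. You propose to get $x_{\vec{w}_j}(\phi_1)\geq x_{\vec{w}_j'}(\phi_2)$ (equivalently, the comparison of the arguments of $x_{-j}^{-1}$) from condition~2 of single-crossing-dominance together with $\phi_1\leq\phi_2$. This does not suffice when both matched points lie to the right of the crossing point $\phi_0$ of the allocation functions: there you only know $x_{\vec{w}_j}(\phi_1)\leq x_{\vec{w}_j'}(\phi_1)$ and $x_{\vec{w}_j'}(\phi_1)\geq x_{\vec{w}_j'}(\phi_2)$, which combine in the wrong direction. The paper closes this case by re-using the hypothesis $S_{\vec{w}_j}(\phi_1)=S_{\vec{w}_j'}(\phi_2)$ in a different guise: a change of variables gives $S_{\vec{w}_j}(\phi_1)=\int_{x_{\vec{w}_j}(\phi_1)}^{+\infty}x_{\vec{w}_j}^{-1}(x)\,dx$ and likewise for $\vec{w}_j'$, so the matched points have equal tail integrals of the \emph{inverse} allocation functions; then a separate tail-domination result (Claim~\ref{claim.thm2.5}, which crucially uses condition~1, $\int_0^1 x_{\vec{w}_j}\geq\int_0^1 x_{\vec{w}_j'}$, in addition to condition~2) forces $x_{\vec{w}_j'}(\phi_2)\leq x_{\vec{w}_j}(\phi_1)$. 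Your plan never invokes condition~1 for this step and never converts $S$ into a tail integral of the inverse, so the case analysis you describe cannot be completed with the tools you list.
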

Before proving \Cref{claim.thm2.3}, we prove the following two claims using the single-crossing-dominating condition between $x_{\vec{w}_j}(\phi)$ and $x_{\vec{w}_j'}(\phi)$.
\begin{claim}\label{claim.thm2.4}
    For any $\phi\in[0,1]$, $S_{\vec{w}_j}(\phi)\geq S_{\vec{w}_j'}(\phi)$. 
\end{claim}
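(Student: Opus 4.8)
The plan is to introduce the difference $g(t):=S_{\vec{w}_j}(t)-S_{\vec{w}_j'}(t)$ and to show it is nonnegative on $[0,1]$ by combining its unimodal shape with the endpoint inequality already established in \Cref{claim.thm2.1}. For brevity write $a(\phi):=\frac{-dx_{\vec{w}_j}(\phi)}{d\phi}$ and $b(\phi):=\frac{-dx_{\vec{w}_j'}(\phi)}{d\phi}$, so that $S_{\vec{w}_j}(t)=\int_0^t\phi\,a(\phi)\,d\phi$ and $S_{\vec{w}_j'}(t)=\int_0^t\phi\,b(\phi)\,d\phi$.

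First I would record the two boundary values. Directly from the definition, $g(0)=0$. At the other end, \Cref{claim.thm2.1} gives exactly $S_{\vec{w}_j}(1)\geq S_{\vec{w}_j'}(1)$, i.e. $g(1)\geq 0$; note that this endpoint is where conditions 1 and 2 of single-crossing-dominating (the integral comparison and the single-crossing of $x_{\vec{w}_j}$ itself) have already been spent, so the present claim will rely only on condition 3. Since the interim allocation functions are polynomials, $S_{\vec{w}_j}$ and $S_{\vec{w}_j'}$ are smooth, and the fundamental theorem of calculus yields $g'(t)=t\,(a(t)-b(t))$.

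Next I would invoke condition 3 of \Cref{def:single-crossing}: $a(\phi)$ is single-crossing with respect to $b(\phi)$, so there is a threshold $\phi_0\in[0,1]$ with $a(\phi)\geq b(\phi)$ on $[0,\phi_0)$ and $a(\phi)\leq b(\phi)$ on $(\phi_0,1]$. Because the factor $t$ is nonnegative, it preserves this sign pattern: $g'(t)\geq 0$ on $[0,\phi_0]$ and $g'(t)\leq 0$ on $[\phi_0,1]$, so $g$ is nondecreasing and then nonincreasing on $[0,1]$.

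Finally, this unimodal shape forces the minimum of $g$ over $[0,1]$ to be attained at an endpoint: for $t\in[0,\phi_0]$ we have $g(t)\geq g(0)=0$, and for $t\in[\phi_0,1]$ we have $g(t)\geq g(1)\geq 0$. Hence $g(t)\geq 0$ for every $t\in[0,1]$, which is precisely $S_{\vec{w}_j}(\phi)\geq S_{\vec{w}_j'}(\phi)$. I do not anticipate a real obstacle; the argument is short once \Cref{claim.thm2.1} supplies the $t=1$ case, and the only point needing care is to confirm that the nonnegative multiplier $t$ keeps the sign pattern of $a-b$ intact, so that the nondecreasing-then-nonincreasing behavior of $g$ (and hence the endpoint minimum) indeed follows.
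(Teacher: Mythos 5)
Your proposal is correct and follows essentially the same route as the paper: both use condition 3 (single-crossing of the negated derivatives) to get the sign pattern of $t\,(a(t)-b(t))$, handle $\phi\leq\phi_0$ by integrating from $0$ and $\phi\geq\phi_0$ by comparing to the endpoint value at $1$ supplied by \Cref{claim.thm2.1}. Phrasing it as unimodality of the difference $g$ is just a repackaging of the paper's two-case integration argument.
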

\begin{proof}
    Since $\frac{-dx_{\vec{w}_j}(\phi)}{d\phi}$ is single-crossing with respect to $\frac{-dx_{\vec{w}_j'}(\phi)}{d\phi}$, by definition there exists $\hat{\phi_0}\in[0,1]$ such that $\frac{-dx_{\vec{w}_j}(\phi)}{d\phi}\geq \frac{-dx_{\vec{w}_j'}(\phi)}{d\phi}$ for all $\phi\in[0,\hat{\phi_0}]$ and $\frac{-dx_{\vec{w}_j}(\phi)}{d\phi}\leq \frac{-dx_{\vec{w}_j'}(\phi)}{d\phi}$ for all $\phi\in[\hat{\phi_0},1]$. 
    Then for any $\phi\in[0,\hat{\phi_0}]$, we have $S_{\vec{w}_j}(\phi)=\int_0^{\phi}\phi\frac{-dx_{\vec{w}_j}(\phi)}{d\phi}d\phi\geq\int_0^{\phi}\phi\frac{-dx_{\vec{w}_j'}(\phi)}{d\phi}d\phi =S_{\vec{w}_j'}(\phi)$. For any $\phi\in[\hat{\phi_0},1]$, we have 
    \begin{align*}
        S_{\vec{w}_j}(\phi)
        =&S_{\vec{w}_j}(1)-\int_{\phi}^{1}\phi\frac{-dx_{\vec{w}_j}(\phi)}{d\phi}d\phi\\
        \geq&S_{\vec{w}_j'}(1)-\int_{\phi}^{1}\phi\frac{-dx_{\vec{w}_j'}(\phi)}{d\phi}d\phi\\
        =& S_{\vec{w}_j'}(\phi). \qedhere
    \end{align*}
\end{proof}
\begin{claim}\label{claim.thm2.5}
    For any $X\in[0,+\infty)$, $\int_{X}^{+\infty}x_{\vec{w}_j}^{-1}(x)dx\geq \int_{X}^{+\infty}x_{\vec{w}_j'}^{-1}(x)dx$.
\end{claim}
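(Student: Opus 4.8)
The plan is to reduce the claim to a statement about the single-crossing structure of the generalized inverses $x_{\vec{w}_j}^{-1}$ and $x_{\vec{w}_j'}^{-1}$, and then close the argument with a unimodality observation. Define $D(x):=x_{\vec{w}_j}^{-1}(x)-x_{\vec{w}_j'}^{-1}(x)$ and $g(X):=\int_X^{+\infty}D(x)\,dx$, so that the goal is exactly to show $g(X)\ge 0$ for every $X\ge 0$.

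First I would record the ``area identity'' $\int_0^{+\infty}x_{\vec{w}_j}^{-1}(x)\,dx=\int_0^1 x_{\vec{w}_j}(\phi)\,d\phi$. This follows because $x_{\vec{w}_j}(\phi)$ is continuous and non-increasing, so $x_{\vec{w}_j}^{-1}(x)=\max\{\phi:x_{\vec{w}_j}(\phi)\ge x\}$ is precisely the Lebesgue measure of the super-level set $\{\phi\in[0,1]:x_{\vec{w}_j}(\phi)\ge x\}$; Fubini's theorem then converts the horizontal integral into the vertical one. Applying this to both prize structures and invoking condition~1 of single-crossing-dominating in \Cref{def:single-crossing}, I obtain $g(0)=\int_0^1 x_{\vec{w}_j}(\phi)\,d\phi-\int_0^1 x_{\vec{w}_j'}(\phi)\,d\phi\ge 0$. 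Moreover $g(X)\to 0$ as $X\to+\infty$, since both inverses vanish once $x$ exceeds the largest prize value.

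The key structural step is to prove that $D(x)$ crosses sign exactly once, being non-positive for small $x$ and non-negative for large $x$. Let $\phi_0$ be the crossing point of $x_{\vec{w}_j}$ and $x_{\vec{w}_j'}$ supplied by condition~2, and let $x_0=x_{\vec{w}_j}(\phi_0)=x_{\vec{w}_j'}(\phi_0)$ be their common value there (by continuity). For $x>x_0$, the point $\phi'=x_{\vec{w}_j'}^{-1}(x)$ must lie in $[0,\phi_0)$, where $x_{\vec{w}_j}\ge x_{\vec{w}_j'}$; hence $x_{\vec{w}_j}(\phi')\ge x$, forcing $x_{\vec{w}_j}^{-1}(x)\ge\phi'$, i.e. $D(x)\ge 0$. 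A symmetric argument applied at $\phi=x_{\vec{w}_j}^{-1}(x)$, which lies in $[\phi_0,1]$ when $x<x_0$, gives $D(x)\le 0$ there. This establishes the single crossing of $D$.

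Finally I would assemble these facts. Since $g'(X)=-D(X)$, the single crossing of $D$ makes $g$ non-decreasing on $[0,x_0]$ and non-increasing on $[x_0,+\infty)$, so $g$ is unimodal with its maximum at $x_0$. Thus for $X\in[0,x_0]$ we have $g(X)\ge g(0)\ge 0$, and for $X\ge x_0$ we have $g(X)\ge\lim_{X'\to+\infty}g(X')=0$; in either case $g(X)\ge 0$, which is the claim. I expect the main obstacle to be the careful handling of the generalized inverse in the single-crossing step, in particular verifying the super-level-set comparisons when $x_{\vec{w}_j}$ or $x_{\vec{w}_j'}$ is flat on an interval or when the intersection at $\phi_0$ is not transversal (including the degenerate cases where one function dominates the other everywhere, for which $D$ has constant sign and $g\ge 0$ is immediate). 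The unimodality argument itself is routine once $D$ is known to change sign only once.
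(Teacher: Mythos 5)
Your proposal is correct and follows essentially the same route as the paper: both establish the area identity $\int_0^{+\infty}x_{\vec{w}_j}^{-1}(x)\,dx=\int_0^1 x_{\vec{w}_j}(\phi)\,d\phi$, deduce from condition~2 that the generalized inverses single-cross at $X_0=x_{\vec{w}_j}(\phi_0)$ with $x_{\vec{w}_j}^{-1}\leq x_{\vec{w}_j'}^{-1}$ below $X_0$ and the reverse above, and then split into the cases $X\geq X_0$ (pointwise dominance of the integrand) and $X\leq X_0$ (subtracting $\int_0^X$ from the total area). Your repackaging via the unimodality of $g(X)=\int_X^{+\infty}D(x)\,dx$ is only a cosmetic rephrasing of that same case split.
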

\begin{proof}
Since $x_{\vec{w}_j}(\phi)$ is single-crossing with respect to $x_{\vec{w}_j'}(\phi)$, there exists $\phi_0\in[0,1]$ such that $x_{\vec{w}_j}(\phi)\geq x_{\vec{w}_j'}(\phi)$ for all $\phi\in[0,\phi_0]$ and $x_{\vec{w}_j}(\phi)\leq x_{\vec{w}_j'}(\phi)$ for all $\phi\in[\phi_0,1]$. 
Let $X_0=x_{\vec{w}_j}(\phi_0)$, then by the non-increase of $x_{\vec{w}_j}^{-1}(x)$ and $x_{\vec{w}_j'}^{-1}(x)$ and the single-crossing property, we have that $x_{\vec{w}_j}^{-1}(X)\leq x_{\vec{w}_j'}^{-1}(X)$ for all $X\leq X_0$ and $x_{\vec{w}_j}^{-1}(X)\geq x_{\vec{w}_j'}^{-1}(X)$ for all $X\geq X_0$. For any $X\geq X_0$, we immediately obtain $\int_{X}^{+\infty}x_{\vec{w}_j}^{-1}(x)dx\geq \int_{X}^{+\infty}x_{\vec{w}_j'}^{-1}(x)dx$. For any $X\leq X_0$, we have 
    \begin{align*}
        \int_{X}^{+\infty}x_{\vec{w}_j}^{-1}(x)dx
        =&\int_{0}^{+\infty}x_{\vec{w}_j}^{-1}(x)dx-\int_{0}^{X}x_{\vec{w}_j}^{-1}(x)dx\\
        \geq&\int_{0}^{+\infty}x_{\vec{w}_j'}^{-1}(x)dx-\int_{0}^{X}x_{\vec{w}_j'}^{-1}(x)dx\\
        =& \int_{X}^{+\infty}x_{\vec{w}_j'}^{-1}(x)dx,
    \end{align*}
    where the inequality holds by 
    \begin{align*}
        \int_{0}^{+\infty}x_{\vec{w}_j}^{-1}(x)dx=&\int_{0}^{1}x_{\vec{w}_j}(\phi)d\phi\\
        \geq &\int_{0}^{1}x_{\vec{w}_j'}(\phi)d\phi=\int_{0}^{+\infty}x_{\vec{w}_j'}^{-1}(x)dx,
    \end{align*}   
    and $x_{\vec{w}_j}^{-1}(x)\leq x_{\vec{w}_j'}^{-1}(x)$ for all $x\leq X_0$.    
\end{proof}
Finally, we prove \Cref{claim.thm2.3} in the following.
\begin{proof}[Proof of \Cref{claim.thm2.3}]
    By the non-increase of $v(\phi)$ and $x_{-j}^{-1}(x)$, and \Cref{claim.thm2.2}, one can see that $V(\phi,x)$ is non-increasing in $\phi$ and non-decreasing in $x$. So we only need to prove that $\phi_1\leq \phi_2$ and $x_{\vec{w}_j}(\phi_1)\geq x_{\vec{w}_j'}(\phi_2)$.

    We firstly prove that $\phi_1\leq\phi_2$. By \Cref{claim.thm2.4} we have $S_{\vec{w}_j}(\phi_2)\geq S_{\vec{w}_j'}(\phi_2)$. By assumption $S_{\vec{w}_j}(\phi_1)=S_{\vec{w}_j'}(\phi_2)$, it holds that $S_{\vec{w}_j}(\phi_1)\leq S_{\vec{w}_j}(\phi_2)$. This implies that $\phi_1\leq\phi_2$.
    
    Next we prove that $x_{\vec{w}_j}(\phi_1)\geq x_{\vec{w}_j'}(\phi_2)$. 
    Observe that we have 
    
    $$S_{\vec{w}_j}(\phi_1)=\int_0^{\phi_1}\phi\frac{-d x_{\vec{w}_j}(\phi)}{d\phi}d\phi=\int_{x_{\vec{w}_j}(\phi_1)}^{+\infty}x_{\vec{w}_j}^{-1}(x)dx,$$ 
    by changing variable as $x=x_{\vec{w}_j}(\phi)$, and similarly $S_{\vec{w}_j'}(\phi_2)= \int_{x_{\vec{w}_j'}(\phi_2)}^{+\infty}x_{\vec{w}_j'}^{-1}(x)dx$.
    Then, combining this with $S_{\vec{w}_j}(\phi_1)= S_{\vec{w}_j'}(\phi_2)$, we have 
    $\int_{x_{\vec{w}_j}(\phi_1)}^{+\infty}x_{\vec{w}_j}^{-1}(x)dx=\int_{x_{\vec{w}_j'}(\phi_2)}^{+\infty}x_{\vec{w}_j'}^{-1}(x)dx.$

    By \Cref{claim.thm2.4}, we have
    
    $$\int_{x_{\vec{w}_j}(\phi_1)}^{+\infty}x_{\vec{w}_j}^{-1}(x)dx\geq \int_{x_{\vec{w}_j}(\phi_1)}^{+\infty}x_{\vec{w}_j'}^{-1}(x)dx,$$ 
    and 
    
    $$\int_{x_{\vec{w}_j'}(\phi_2)}^{+\infty}x_{\vec{w}_j'}^{-1}(x)dx\geq \int_{x_{\vec{w}_j}(\phi_1)}^{+\infty}x_{\vec{w}_j'}^{-1}(x)dx.$$ 
    This implies that $x_{\vec{w}_j'}(\phi_2)\leq x_{\vec{w}_j}(\phi_1)$.

In conclusion, we have $\phi_1\leq \phi_2$ and $x_{\vec{w}_j}(\phi_1)\geq x_{\vec{w}_j'}(\phi_2)$, and therefore $V(\phi_1,x_{\vec{w}_j}(\phi_1))\geq V(\phi_2,x_{\vec{w}_j'}(\phi_2))$.
\end{proof}
By \Cref{claim.thm2.3}, we have $\int_{\phi=0}^{1}V(\phi,x_{\vec{w}_j}(\phi))d S_{\vec{w}_j}(\phi)\geq \int_{\phi=0}^{1}V(\phi,x_{\vec{w}_j'}(\phi))d S_{\vec{w}_j'}(\phi)$. This completes the proof.
\end{proof}

\subsection{Proof of Theorem \ref{thm: winner-take-all dominant} }
\begin{proof}
By \Cref{thm: compare under single-crossing}, we only need to prove that $x_{\vec{w}_j^*}(\phi)$ single-crossing-dominates $x_{\vec{w}_j'}(\phi)$. We check the conditions in the definition. 

The first condition holds as $\int_0^1x_{\vec{w}_j^*}(\phi)d\phi=\frac1nT_j\geq \frac1n\sum_{k=1}^nw_{j,k}'\geq \int_0^1x_{\vec{w}_j'}(\phi)d\phi$.

Next we check the second condition that $x_{\vec{w}_j^*}(\phi)$ is single-crossing with respect to $x_{\vec{w}_j'}(\phi)$.
Firstly, by definition we have $x_{\vec{w}_j^*}(\phi)=T_j\xi_1(\phi)=T_j(1-\phi)^{n-1}$, and \begin{align*}
    x_{\vec{w}_j'}(\phi)=&\sum_{k=1}^nw_{j,k}'\binom{n-1}{k-1}\phi^{k-1}(1-\phi)^{n-k}.
\end{align*}
For any $\phi\in(0,1)$ satisfying $x_{\vec{w}_j^*}(\phi)>0$,  we can calculate
\begin{align*}
    \frac{x_{\vec{w}_j'}(\phi)}{x_{\vec{w}_j^*}(\phi)}=&\frac{1}{T_j}\sum_{k=1}^nw_{j,k}'\binom{n-1}{k-1}(\frac{\phi}{1-\phi})^{k-1}.
\end{align*}

We discuss two cases:

\noindent \textbf{(a)}  For all $k>1$, $w_{j,k}'=0$. In this case, since $w_{j,1}'\leq T_j=w_{j,1}^*$, we have $x_{\vec{w}_j^*}(\phi)\geq x_{\vec{w}_j'}(\phi)$ for all $\phi\in[0,1]$. Therefore $x_{\vec{w}_j^*}(\phi)$ is single-crossing w.r.t. $x_{\vec{w}_j'}(\phi)$ by taking $\phi_0=1$ in \Cref{def:single-crossing}.

\noindent \textbf{(b)} There exists some $k>1$, $w_{j,k}'>0$. In this case ${x_{\vec{w}_j'}(\phi)}/{x_{\vec{w}_j^*}(\phi)}$ is strictly increasing in $\phi\in(0,1)$. Observe that $x_{\vec{w}_j^*}(0)=T_j> w_{j,1}'=x_{\vec{w}_j'}(0)$ and $x_{\vec{w}_j^*}(1)=0\leq w_{j,n}'=x_{\vec{w}_j'}(\phi)$. Therefore, there exists a zero-point $\phi_0\in(0,1]$ such that $x_{\vec{w}_j^*}(\phi_0)=x_{\vec{w}_j^*}(\phi_0)$. Since ${x_{\vec{w}_j'}(\phi)}/{x_{\vec{w}_j^*}(\phi)}$ is strictly increasing, such zero-point $\phi_0$ is unique. Therefore, we show that $x_{\vec{w}_j^*}(\phi)\geq x_{\vec{w}_j'}(\phi)$ for all $\phi\in[0,\phi_0]$ and $x_{\vec{w}_j^*}(\phi)\leq x_{\vec{w}_j'}(\phi)$ for all $\phi\in[\phi_0,1]$. By definition $x_{\vec{w}_j^*}(\phi)$ is single-crossing w.r.t. $x_{\vec{w}_j'}(\phi)$.

Lastly, we check the third condition that $\frac{-dx_{\vec{w}_j^*}(\phi)}{d\phi}$ is single-crossing with respect to $\frac{-dx_{\vec{w}_j'}(\phi)}{d\phi}$.

With some calculation we have 

$$\frac{-dx_{\vec{w}_j^*}(\phi)}{d\phi}=T_j(n-1)(1-\phi)^{n-2},$$ and 
\begin{align*}
    &\frac{-dx_{\vec{w}_j'}(\phi)}{d\phi}\\
    =&\sum_{k=1}^{n-1}(w_{j,k}'-w_{j,k+1}')(n-1)\binom{n-2}{k-1}\phi^{k-1}(1-\phi)^{n-k-1}.
\end{align*}
For any $\phi\in(0,1)$,  we can calculate
\begin{align*}
    \frac{\frac{-dx_{\vec{w}_j'}(\phi)}{d\phi}}{\frac{-dx_{\vec{w}_j^*}(\phi)}{d\phi}}=&\frac{1}{T_j}\sum_{k=1}^{n-1}(w_{j,k}'-w_{j,k+1}')\binom{n-2}{k-1}(\frac{\phi}{1-\phi})^{k-1}.
\end{align*}

Since $w_{j,k}'-w_{j,k+1}'\geq 0$ for all $k\in[n-1]$, 
it is non-decreasing in $\phi\in[0,1)$. At the point $\phi=0$, we have 
\begin{align*}
     \frac{-dx_{\vec{w}_j^*}(\phi)}{d\phi}|_{\phi=0}=&T_j(n-1)\\
    \geq & (w_{j,1}'-w_{j,2}')(n-1)=\frac{-dx_{\vec{w}_j'}(\phi)}{d\phi}|_{\phi=0}.
\end{align*}
At the point $\phi=1$, we have 
$\frac{-dx_{\vec{w}_j^*}(\phi)}{d\phi}|_{\phi=1}=0 \leq \frac{-dx_{\vec{w}_j'}(\phi)}{d\phi}|_{\phi=1}.$
 Therefore, $\frac{-dx_{\vec{w}_j^*}(\phi)}{d\phi}$ is single-crossing with respect to $\frac{-dx_{\vec{w}_j'}(\phi)}{d\phi}$.
\end{proof}

\subsection{Proof of \Cref{cor:effort-objective-designer-SPE}}
\begin{proof}
    By \Cref{thm: winner-take-all dominant}, we know that the winner-take-all prize structure with all budget is a dominant strategy for each $j\in [m]$, when each $\vec{\alpha}_j$ is weight-monotone, which consequently form the designer equilibrium. Combining with the corresponding contestant equilibrium, we obtain an SPE.
\end{proof}

\subsection{Proof of Theorem \ref{thm: participant objective- simple contest}}
\begin{proof}
Let $\phi^*=\frac1n\max_{\vec{w}_j}\hat{R}_j(x_{\vec{w}_{j}}(\phi),x_{-j}^{-1}(x),\theta_j)$. Since $\phi^*\in[0,1]$, if $\phi^*=0$, then the theorem immediately holds, because any prize structure is optimal for designer $j$. 

Now we consider $\phi^*>0$. For convenience, let’s assume the optimal prize structure satisfies $w_{j,1}>w_{j,n}$, and we will address the case $w_{j,1}=w_{j,n}$ at the end of this proof. We first prove the following claim:
\begin{claim}\label{claim.thm4.7}
     For any strictly decreasing interim allocation functions $x_j(\phi)$ and $\tilde{x}_j(\phi)$, let $\phi_0=\frac1n\hat{R}_j(x_{j}(\phi),x_{-j}^{-1}(x)$ $,\theta_j)=\Phi_j^*(\theta_j;x_{j}(\phi),x_{-j}^{-1}(x))$. If $\phi_0>0$, then the following statements hold:
     
    If $x_{j}(\phi_0)\geq\tilde{x}_{j}(\phi_0)$, then 
    $$\hat{R}_j(x_{j}(\phi),x_{-j}^{-1}(x),\theta_j)\geq \hat{R}_j(\tilde{x}_{j}(\phi),x_{-j}^{-1}(x),\theta_j);$$
    
    If $x_{j}(\phi_0)\leq\tilde{x}_{j}(\phi_0)$, then 
    $$\hat{R}_j(x_{j}(\phi),x_{-j}^{-1}(x),\theta_j)\leq \hat{R}_j(\tilde{x}_{j}(\phi),x_{-j}^{-1}(x),\theta_j).$$
\end{claim}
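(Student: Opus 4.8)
The plan is to reduce the claim to a monotone comparison of a single ``market-clearing'' value, exploiting the closed form $\Phi_j^*(\theta_j)=x_j^{-1}(Q^{-1}(\theta_j))$ from \Cref{theorem:contestant-equilibrium-wine22} together with $\hat{R}_j(x_j(\phi),x_{-j}^{-1}(x),\theta_j)=n\Phi_j^*(\theta_j)$. I would write $X^*=Q^{-1}(\theta_j)$ for the clearing prize level under $x_j$, so that $\phi_0=x_j^{-1}(X^*)$; since $x_j$ is strictly decreasing and continuous and $\phi_0>0$, this gives $x_j(\phi_0)=X^*$. Because $Q(x)=x_j^{-1}(x)+x_{-j}^{-1}(x)$ attains $\theta_j$ at its clearing point, I obtain the identity $\phi_0+x_{-j}^{-1}(X^*)=\theta_j$, and analogously $\tilde{\phi}_0+x_{-j}^{-1}(\tilde{X}^*)=\theta_j$ for the system induced by $\tilde{x}_j$, where $\tilde{X}^*=\tilde{Q}^{-1}(\theta_j)$ and $\tilde{\phi}_0=\Phi_j^*(\theta_j;\tilde{x}_j,x_{-j}^{-1})$.

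The heart of the argument will be to show that the hypothesis $x_j(\phi_0)\geq\tilde{x}_j(\phi_0)$ forces $\tilde{X}^*\leq X^*$. Indeed, $x_j(\phi_0)\geq\tilde{x}_j(\phi_0)$ reads $X^*\geq\tilde{x}_j(\phi_0)$, and strict monotonicity of $\tilde{x}_j$ yields $\tilde{x}_j^{-1}(X^*)\leq\phi_0$. Evaluating $\tilde{Q}$ at $X^*$,
\[
  \tilde{Q}(X^*)=\tilde{x}_j^{-1}(X^*)+x_{-j}^{-1}(X^*)\leq\phi_0+x_{-j}^{-1}(X^*)=\theta_j,
\]
so $\tilde{X}^*=\max\{x:\tilde{Q}(x)\geq\theta_j\}\leq X^*$ by monotonicity of $\tilde{Q}$. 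Feeding this back into the two clearing identities and using that $x_{-j}^{-1}$ is non-increasing,
\[
  \tilde{\phi}_0=\theta_j-x_{-j}^{-1}(\tilde{X}^*)\leq\theta_j-x_{-j}^{-1}(X^*)=\phi_0,
\]
whence $\hat{R}_j(\tilde{x}_j)=n\tilde{\phi}_0\leq n\phi_0=\hat{R}_j(x_j)$, which is the first statement. The second statement follows from the identical chain with every inequality reversed: $x_j(\phi_0)\leq\tilde{x}_j(\phi_0)$ gives $\tilde{x}_j^{-1}(X^*)\geq\phi_0$, hence $\tilde{Q}(X^*)\geq\theta_j$, hence $\tilde{X}^*\geq X^*$, hence $\tilde{\phi}_0\geq\phi_0$.

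I expect the main obstacle to be the degenerate and boundary cases that the clean identities gloss over. The relation $x_j(\phi_0)=X^*$ and the clearing identity $Q(X^*)=\theta_j$ can fail when $\phi_0$ hits an endpoint or $X^*$ leaves the interior range $(x_j(1),x_j(0))$, and $Q$ may be discontinuous at $X^*$ whenever $x_{-j}^{-1}$ jumps there (an atom of indifferent contestants). The remedy I would use is that both $x_j^{-1}$ and $\tilde{x}_j^{-1}$ are continuous and strictly decreasing on their interior ranges, so the only shared source of discontinuity in $Q$ and $\tilde{Q}$ is $x_{-j}^{-1}$, and each of $Q,\tilde{Q}$ is strictly decreasing wherever its own term is interior. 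Replacing the equalities above by the one-sided limits $Q(X^*{+})$ and $\tilde{Q}(X^*{+})$ and re-running the same monotone comparison recovers $\tilde{X}^*\leq X^*$ (respectively $\geq$) in these cases as well, leaving the conclusion intact. Carefully discharging these boundary and jump cases, rather than the core monotone comparison, is where the real work lies.
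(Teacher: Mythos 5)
Your proposal is correct and follows essentially the same route as the paper: both reduce the comparison to the market-clearing levels $Q^{-1}(\theta_j)$ versus $\tilde{Q}^{-1}(\theta_j)$, exploiting that the two aggregate functions differ only in their $j$-th term, which the hypothesis orders at $\phi_0$, so that monotonicity of the shared $x_{-j}^{-1}$ transfers the ordering to $\Phi_j^*(\theta_j)$. The paper organizes this as a proof by contradiction and discharges the jumps of $x_{-j}^{-1}$ by an explicit case split (a jump at the clearing level forces $x_{-j}^{-1}\geq 1$ there) together with the monotonicity of $q-\Phi_j^*(q)$, whereas your one-sided-limit repair of the clearing identities accomplishes the same thing and does go through when written out (including the edge case where $\tilde{Q}$ equals $\theta_j$ beyond $X^*$, since there $\tilde{x}_j^{-1}(\tilde{X}^*)\leq\tilde{x}_j^{-1}(X^*)\leq\phi_0$ directly).
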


\begin{proof}
    Define $Q_0(x)=x_{j}^{-1}(x)+x_{-j}^{-1}(x)$ and $Q_1(x)=\tilde{x}_{j}^{-1}(x)+x_{-j}^{-1}(x)$. By definition, $\phi_0=x_{j}^{-1}(Q_0^{-1}(\theta_j))$. As $\phi_0>0$, we have $x_j(\phi_0)=Q_0^{-1}(\theta_j)$. 
We also define $\phi_1= \hat{R}_j(\tilde{x}_{j}(\phi),x_{-j}^{-1}(x),\theta_j)=\tilde{x}_{j}^{-1}(Q_1^{-1}(\theta_j))$.

For the first statement, assuming $x_{j}(\phi_0)\geq\tilde{x}_{j}(\phi_0)$, we prove $\phi_1\leq\phi_0$ by contradiction. Suppose $\phi_1>\phi_0$, then $\tilde{x}_{j}(\phi_1)<\tilde{x}_{j}(\phi_0)\leq x_j(\phi_0)$. Denote $X_0=x_j(\phi_0)$ and $X_1=x_j(\phi_1)$. As $\phi_1>\phi_0>0$, we have $Q_0^{-1}(\theta_j)=x_j(\phi_0)=X_0$ and $Q_1^{-1}(\theta_j)=x_j(\phi_1)=X_1$. We discuss on the continuity of $x_{-j}^{-1}$ at $X_0$:

If $x_{-j}^{-1}$ is discontinuous at $X_0$, then there exists some contest $j'\neq j$ that $w_{j',1}=\cdots=w_{j',n}=X_0$, which means that $x_{-j}^{-1}(X_0)\geq 1$. It follows that  $Q_1(X_0)\geq x_{-j}^{-1}(X_0)\geq 1\geq\theta_j$, and therefore $Q_1^{-1}(\theta_j)\geq X_0$. This contradicts with that $Q_1^{-1}(\theta_j)=X_1<X_0$.

If $x_{-j}^{-1}$ is continuous at $X_0$, we know that $x_{-j}^{-1}(X_0)=\theta_j-x_j^{-1}(X_0)=\theta_j-\phi_0$. We also have $\tilde{x}_{j}^{-1}(X_0)\leq \phi_0$ because $\tilde{x}_{j}(\phi_0)\leq X_0$. Denote $q_1=Q_1(X_0)$, then $Q_1(X_0)=\tilde{x}_{j}^{-1}(X_0)+x_{-j}^{-1}(X_0)\leq \theta_j$. By \Cref{theorem:contestant-equilibrium-wine22} we know that 

$$q_1-\Phi_j^*(q_1;\tilde{x}_j(\phi),x_{-j}^{-1}(x))=x_{-j}^{-1}(X_0)=\theta_j-\phi_0.$$ 
By the definition of cummulative choice strategy, $q-\Phi_j^*(q;\tilde{x}_j(\phi),x_{-j}^{-1}(x))$ is non-decreasing in $q$, so it follows that 
\begin{align*}
    &\theta_j-\phi_1=\theta_j-\Phi_j^*(\theta_j;\tilde{x}_j(\phi),x_{-j}^{-1}(x))\\
    \geq &q_1-\Phi_j^*(q_1;\tilde{x}_j(\phi),x_{-j}^{-1}(x))=\theta_j-\phi_0,
\end{align*}
i.e., $\phi_1\leq \phi_0$, which contradicts.

For the second statement, assuming $x_{j}(\phi_0)\leq\tilde{x}_{j}(\phi_0)$, we prove $\phi_1\geq\phi_0$ by contradiction.
Suppose $\phi_1<\phi_0$, then $\tilde{x}_{j}(\phi_1)>\tilde{x}_{j}(\phi_0)\geq x_j(\phi_0)$. Denote $X_0=x_j(\phi_0)$, $X_1=\tilde{x}_j(\phi_1)$, and we know $X_1>X_0$.

From $\phi_0>0$, we have $Q_0^{-1}(\theta_j)=X_0$.
By \Cref{theorem:contestant-equilibrium-wine22} we have $Q_1^{-1}(\theta_j)\geq X_1$, where the equality holds if $\phi_1>0$. Define $X_1'=Q_1^{-1}(\theta_j)$, then it holds that $\tilde{x}_j^{-1}(X_1')=\phi_1$. Therefore, we get $x_{-j}^{-1}(X_1')\geq \theta_j-\phi_1$. Since $X_1'\geq X_1>X_0$, we have 
\begin{align*}
    &\lim_{x\to X_0+0}Q_0(x)=\lim_{x\to X_0+0}(x_j^{-1}(x)+x_{-j}^{-1}(x))\\
    \geq &\phi_0+\theta_j-\phi_1>\theta_j,
\end{align*}
contradicting with that $X_0=Q_0^{-1}(\theta_j)=\sup\{x:Q_0(x)\geq \theta_j\}$.

This completes the proof of the claim.
\end{proof} 



%

For any prize structure $\vec{w}_j$, we know that 

$$x_{\vec{w}_j}(\phi)=\sum_{k=1}^nk(w_{j,k}-w_{j,k+1})\xi_k(\phi),$$ 
where $\sum_{k=1}^nk(w_{j,k}-w_{j,k+1})=\sum_{k=1}^nw_{j,k}\leq T_j$. Therefore, for any $\phi_0\in[0,1]$, we obtain that $\max_{\vec{w}_j}x_{\vec{w}_j}(\phi_0)=\max_{k\in[n]}T_j\xi_k(\phi_0)$.

Suppose that $\vec{w}_j^*$ is an optimal prize structure such that $\Phi_j^*(\theta_j;x_{\vec{w}_j^*}(\phi),x_{-j}^{-1}(x))=\phi^*=\frac1n\max_{\vec{w}_j} \hat{R}_j(x_{\vec{w}_{j}}(\phi),$ $x_{-j}^{-1}(x),\theta_j)$. Take arbitrary $k^*\in\arg\max_{k\in[n]}\xi_k(\phi^*)$. We discuss the following three cases:

\noindent \textbf{(a)} $w_{j,1}^*>w_{j,n}^*$ and $k^*<n$. In this case, both $x_{\vec{w}_j^*}(\phi)$ and $x_{\vec{w}_j^{(k^*,T_j)}}$ are strictly decreasing. Therefore, by \Cref{claim.thm4.7}, we have 

$$\hat{R}_j(x_{\vec{w}_j^{(k^*,T_j)}},x_{-j}^{-1}(x),\theta_j)\geq \hat{R}_j(x_{\vec{w}_j^*}(\phi),x_{-j}^{-1}(x),\theta_j),$$
i.e., the simple contest $\vec{w}_j^{(k^*,T_j)}$ is also an optimal prize structure and satisfies the requirement.

\noindent \textbf{(b)} $w_{j,1}^*=w_{j,n}^*$. In this case, $x_{\vec{w}_j^*}(\phi)$ is a constant. We denote $w_0=w_{j,1}^*$. According to \Cref{theorem:contestant-equilibrium-wine22}, define $M_1^+=\{j'\in[m]\setminus \{j\}:w_{j',1}=w_{j',n}=w_0\}$, then we get 

$$\phi^*=\frac1{1+|M_1^+|}(\theta_j-\lim_{x\to w_0+0}x_{-j}^{-1}(x)).$$

We further discuss on $k^*$:

\noindent \textbf{(b.1)} If $k^*<n$, we can construct another optimal prize structure $\vec{w}_j^{**}=\vec{w}_j^{(k^*,w^{**})}$, where $w^{**}=\frac{w_0}{\xi_{k^*}(\phi^*)}$. Then we know that $x_{\vec{w}_j^{**}}(\phi)$ is strictly decreasing in $\phi$, and $x_{\vec{w}_j^{**}}(\phi^*)=w_0$. Moreover, it is not hard to see that 

$$\Phi_j^*(\theta_j;x_{\vec{w}_j^{**}}(\phi),x_{-j}^{-1}(x))=\theta_j-\lim_{x\to w_0+0}x_{-j}^{-1}(x)\geq\phi^*.$$ 
As $\vec{w}_j^*$ is optimal, it must be the case that $\phi^*=\Phi_j^*(\theta_j;x_{\vec{w}_j^{**}}(\phi),x_{-j}^{-1}(x))$. By applying case (a) on $\vec{w}_j^{**}$ and $\vec{w}_j^{(k^*,T_j)}$, we can obtain the desired result.

\noindent \textbf{(b.2)} If $k^*=n$, then $x_{\vec{w}_j^{(k^*,T_j)}}(\phi)$ is also a constant function. Specifically, for all $\phi\in[0,1]$, it holds that  $x_{\vec{w}_j^{(k^*,T_j)}}(\phi)=\frac{1}{n}T_j\geq w_0$. If $w_0=\frac{1}{n}T_j$, then $\vec{w}_j^*=\vec{w}_j^{(k^*,T_j)}$ and the statement holds. If $w_0<\frac{1}{n}T_j$, then it is not hard to see that 
\begin{align*}
    &\theta_j-\Phi_j^*(\theta_j;x_{\vec{w}_j^{(k^*,T_j)}}(\phi),x_{-j}^{-1}(x))\\
    =&x_{-j}^{-1}(\frac{1}{n}T_j)
    \leq \lim_{x\to w_0+0}x_{-j}^{-1}(x).
\end{align*}
Consequently, we obtain

$$\Phi_j^*(\theta_j;x_{\vec{w}_j^{(k^*,T_j)}}(\phi),x_{-j}^{-1}(x))\geq\theta_j-\lim_{x\to w_0+0}x_{-j}^{-1}(x)\geq\phi^*.$$ 
Therefore $\vec{w}_j^{(k^*,T_j)}$ is also an optimal prize structure and satisfies the requirement.

\noindent \textbf{(c)} $w_{j,1}^*>w_{j,n}^*$ and $k^*=n$. In this case $x_{\vec{w}_j^{(k^*,T_j)}}(\phi)$ is also a constant function such that for all $\phi\in[0,1]$, i.e., $x_{\vec{w}_j^{(k^*,T_j)}}(\phi)=\frac{1}{n}T_j$. Without loss of generality we can assume that $x_{\vec{w}_j^*}(\phi^*)<x_{\vec{w}_j^{(k^*,T_j)}}(\phi^*)=\frac{1}{n}T_j$, otherwise we can take $n-1\in\arg\max_{k\in[n]}\xi_k(\phi^*)$ and apply case (a). Thus, we have 
\begin{align*}
    \Phi_j^*(\theta_j;x_{\vec{w}_j^{(k^*,T_j)}}(\phi),x_{-j}^{-1}(x))=&\theta_j-x_{-j}^{-1}(\frac{1}{n}T_j)\\
    \geq &\theta_j-\lim_{x\to x_{\vec{w}_j^*}(\phi^*)+0}x_{-j}^{-1}(x)\\
    =&\phi^*.
\end{align*}
Therefore $\vec{w}_j^{(k^*,T_j)}$ is also optimal and satisfies the requirement.
\end{proof}

\subsection{Proof of Theorem \ref{thm:participant objective- spe}}
\begin{proof}
By \Cref{thm: participant objective- simple contest}, we can assume all designers use simple contests that satisfy the conditions in \Cref{thm: participant objective- simple contest}. To construct the equilibrium, for each $j\in[m]$, define $\bar{x}_j(\phi):=T_j\max_{k\in[n]}\xi_k(\phi)$. Intuitively, if designer $j$'s utility is $\phi_j^*$ in equilibrium, then $x_j(\phi_j^*)=\bar{x}_j(\phi_j^*)$.

Similar to the notations used in \Cref{theorem:contestant-equilibrium-wine22}, we define $\bar{x}_j^{-1}(x):=\max\{\phi\in[0,1]:\bar{x}_j(\phi)\geq x\}$, $\bar{Q}(x):=\sum_{j\in[m]}\bar{x}_j^{-1}(x)$, and $\bar{Q}^{-1}(q):=\max\{x:\bar{Q}(x)\geq q\}$. 

Let $X^*=\bar{Q}^{-1}(\theta)$. For each $j\in[n]$, set $\phi^*_j=\bar{x}_j^{-1}(X^*)$ and $k^*_j=\arg\max_{k\in[n]}\xi_k(\phi^*_j)$. If there are multiple choices for $k^*_j$, select the smallest. Let each designer $j$ use the prize structure $\vec{w}_j^*=\vec{w}_j^{k^*_j,T_j}$, we prove that these prize structures form an equilibrium.

Let $w^+=\frac{1}{n}\max_{j\in[m]}T_j$, which represents the highest prize that can be equally offered to $n$ contestants in any contest. If $\sum_{j\in[m]}\lim_{x\to w^++0}\bar{x}_j^{-1}(x)\geq \theta$, then we will have $k^*_j<n$ for all contest $j$. Otherwise we will have some contests with $k^*_j=n$. We discuss in these two cases.

In the former case that $\sum_{j\in[m]}\lim_{x\to w^++0}\bar{x}_j^{-1}(x)\geq \theta$, we know $X^*\geq w^+$ and all $x_{\vec{w}_j^*}(\phi)$ are strictly decreasing. Let $Q(x):=\sum_{j\in[m]}x_{\vec{w}_j^*}^{-1}(x)$, and we have $Q^{-1}(\theta)=X^*$. Moreover, for all $j\in[m]$ we get $x_{\vec{w}_j^*}^{-1}(X)=\phi_j^*$, and therefore it holds that

$$\Phi_j^*(\theta;x_{\vec{w}_1^*}(\phi),\cdots,x_{\vec{w}_m^*}(\phi))=x_{\vec{w}_j^*}^{-1}(Q^{-1}(\theta))=\phi_j^*.$$
By \Cref{claim.thm4.7}, 
since $k^*_j=\arg\max_{k\in[n]}\xi_k(\phi^*_j)$, we know that $\vec{w}_j^*=\vec{w}_j^{k^*_j,T_j}$ is designer $j$'s best response. Therefore $(\vec{w}_1^*,\cdots,\vec{w}_m^*)$ constitutes an equilibrium.

In the latter case that $\sum_{j\in[m]}\lim_{x\to w^++0}\bar{x}_j^{-1}(x)< \theta$, we first prove a claim that $\xi_{n-1}(\frac12)\geq\xi_{n}(\frac12)$. By definition we have 
\begin{align*}
    \xi_{n-1}(\frac12)=&\frac{1}{n-1}\sum_{k=1}^{n-1}\binom{n-1}{k-1}(\frac{1}{2})^{k-1}(1-\frac{1}{2})^{n-k}\\
    =&\frac{1}{n-1}\sum_{l=0}^{n-2}\binom{n-1}{l}(\frac{1}{2})^{n-1}\\
    =&\frac{1}{n-1}(\frac{1}{2})^{n-1}(2^{n-1}-1)\\
    =&\frac{1-2^{-(n-1)}}{n-1}
\end{align*}
and $\xi_{n}(\frac12)=\frac{1}{n}$. Since $n\geq 2$, we have $1-2^{-(n-1)}\geq 1-\frac1{n-1}=\frac{n-1}{n}$, and consequently $\xi_{n-1}(\frac12)\geq\xi_n(\frac12)$.

Based on this claim, for any $j\in \arg\max_{j\in[m]}$ we have 
\begin{align*}
    \lim_{x\to w^++0}\bar{x}_{j}^{-1}(x)\geq &\lim_{x\to w^++0}x_{\vec{w}_j^{n-1,T_j}}^{-1}(x)\\
    =&x_{\vec{w}_j^{n-1,T_j}}^{-1}(w^+)\geq \frac12.
\end{align*}
This implies that $|\arg\max_{j\in[m]}T_j|=1$, because otherwise $\sum_{j\in[m]}\lim_{x\to w^++0}\bar{x}_j^{-1}(x)\geq 1\geq\theta$.

Let $j^+\in\arg\max_{j\in[m]}T_j$ be the unique designer with maximum budget. We know that $w^+=\frac1n T_{j^+}$, and $X^*=w^+$. Therefore, for designer $j^+$, we have 

$$\Phi_{j^+}^*(\theta;x_{\vec{w}_1^*}(\phi),\cdots,x_{\vec{w}_m^*}(\phi))=\theta-\sum_{j'\neq j}\bar{x}_j^{-1}(w^+).$$
For any other strategy of $j^+$ denoted by $\vec{w}_{j^+}'$, let $Q(x)=\sum_{j\neq j^+}x_{\vec{w}_j^*(\phi)}^{-1}(x)+x_{\vec{w}_{j^+}'}^{-1}(x)$, and we will have $Q^{-1}(\theta)\leq w^+$. Therefore, it holds that
$$\Phi_{j^+}^*(\theta;x_{\vec{w}_{j^+}'}(\phi),x_{\vec{w}_1^*}(\phi),\cdots,x_{\vec{w}_m^*}(\phi))\leq\theta-\sum_{j'\neq j}\bar{x}_j^{-1}(w^+).$$
In other words, $\vec{w}_{j^+}^*$ is the best response for $j^+$. For any other designers $j\neq j^+$, similar to the former case, we can also check that $\vec{w}_{j}^*$ is the best response. Therefore $(\vec{w}_1^*,\cdots,\vec{w}_m^*)$ constitutes an equilibrium.

Finally, since both $\bar{x}_j^{-1}(x)=\sup\{\phi\in[0,1]:\max_{k\in[n]}$ $\xi_k(\phi)\geq \frac{x}{T_j}\}$ and $X^*=\sup\{x:\sum_{j\in[n]}\bar{x}_j^{-1}(x)\leq\theta\}$ can be computed by binary search, this equilibrium can be computed in polynomial time.
\end{proof}

\end{document}